\documentclass[aps,groupedaddress,superscriptaddress,amsmath,amssymbol,stfloats,nofootinbib]{revtex4}

\usepackage[latin1]{inputenc}
\usepackage{amsthm}
\usepackage{amssymb}
\usepackage{graphicx}
\usepackage{amsmath}
\usepackage{bbold}
\usepackage{bbm}
\usepackage{nonfloat}
\usepackage{color}
\usepackage[pdftex]{hyperref}
\usepackage{braket}
\usepackage{graphicx}
\usepackage{dsfont}
\usepackage{mathdots}
\usepackage{wasysym}

\DeclareGraphicsExtensions{.jpeg, .png, .pdf}

\newtheorem{thm}{Theorem}
\newtheorem*{thm*}{Theorem}
\newtheorem{prop}[thm]{Proposition}
\newtheorem*{prop*}{Proposition}
\newtheorem{lemma}[thm]{Lemma}
\newtheorem*{lemma*}{Lemma}
\newtheorem{cor}[thm]{Corollary}
\newtheorem*{cor*}{Corollary}

\newtheorem*{cj*}{Conjecture}
\newtheorem{Def}[thm]{Definition}
\newtheorem*{Def*}{Definition}
\newtheorem{iss}{Issue}

\theoremstyle{definition}
\newtheorem*{rem}{Remark}
\newtheorem*{note}{Note}

\newtheorem{ex}[thm]{Example}
\newtheorem*{ex*}{Example}

\newcommand{\bq}{\begin{equation}}
\newcommand{\eq}{\end{equation}}
\newcommand{\bqq}{\begin{equation*}}
\newcommand{\eqq}{\end{equation*}}
\newcommand{\bal}{\begin{align}}
\newcommand{\eal}{\end{align}}
\newcommand{\ball}{\begin{align*}}
\newcommand{\eall}{\end{align*}}
\newcommand{\Tr}{\text{Tr}\,}
\newcommand{\tmin}{\ensuremath \! \raisebox{2pt}{$\underset{\begin{array}{c} \vspace{-3.7ex} \\ \text{\scriptsize min} \end{array}}{\otimes}$}\! }
\newcommand{\tmax}{\ensuremath \! \raisebox{2pt}{$\underset{\begin{array}{c} \vspace{-3.7ex} \\ \text{\scriptsize max} \end{array}}{\otimes}$}\! }
\newcommand{\tminit}{\ensuremath \! \raisebox{2pt}{$\underset{\begin{array}{c} \vspace{-3.7ex} \\ \text{\scriptsize \emph{min}} \end{array}}{\otimes}$}\! }

\newcommand{\tminfoot}{\ensuremath \! \raisebox{2.1pt}{$\scriptstyle\underset{\begin{array}{c} \vspace{-4.1ex} \\ \text{\fontsize{2}{4}\selectfont min} \end{array}}{\otimes}$}\!}
\newcommand{\tminitfoot}{\ensuremath \! \raisebox{2.1pt}{$\scriptstyle\underset{\begin{array}{c} \vspace{-4.1ex} \\ \text{\fontsize{2}{4}\selectfont \emph{min}} \end{array}}{\otimes}$}\!}

\newcommand*{\coloneqq}{\mathrel{\vcenter{\baselineskip0.5ex \lineskiplimit0pt \hbox{\scriptsize.}\hbox{\scriptsize.}}} =}

\makeatletter
\newcommand*\rel@kern[1]{\kern#1\dimexpr\macc@kerna}
\newcommand*\widebar[1]{%
  \begingroup
  \def\mathaccent##1##2{%
    \rel@kern{0.8}%
   \overline{\rel@kern{-0.8}\raisebox{0pt}[1.1\height]{$\m@th\macc@nucleus$}\rel@kern{0.2}}%
    \rel@kern{-0.2}%
  }%
  \macc@depth\@ne
  \let\math@bgroup\@empty \let\math@egroup\macc@set@skewchar
  \mathsurround\z@ \frozen@everymath{\mathgroup\macc@group\relax}%
  \macc@set@skewchar\relax
  \let\mathaccentV\macc@nested@a
  \macc@nested@a\relax111{#1}%
  \endgroup
}
\makeatother
                     
\begin{document}

\title{Ultimate data hiding in quantum mechanics and beyond}

\author{Ludovico Lami}
\affiliation{F\'{i}sica Te\`{o}rica: Informaci\'{o} i Fen\`{o}mens Qu\`{a}ntics, Departament de F\'{i}sica, Universitat Aut\`{o}noma de Barcelona, 08193 Bellaterra, Barcelona, Spain.}

\author{Carlos Palazuelos}
\affiliation{Departamento de An\'{a}lisis y Matem\'{a}tica Aplicada, Universidad Complutense de Madrid, 28040 Madrid, Spain.}
\affiliation{Instituto de Ciencias Matem\'{a}ticas, C/ Nicol\'{a}s Cabrera 13-15, 28049 Madrid, Spain.}

\author{Andreas Winter}
\affiliation{F\'{i}sica Te\`{o}rica: Informaci\'{o} i Fen\`{o}mens Qu\`{a}ntics, Departament de F\'{i}sica, Universitat Aut\`{o}noma de Barcelona, ES-08193 Bellaterra, Barcelona, Spain.}
\affiliation{ICREA -- Instituci\'o Catalana de Recerca i Estudis Avan\c{c}ats, Pg. Lluis Companys 23, 08010 Barcelona, Spain}

\begin{abstract}
The phenomenon of data hiding, i.e. the existence of pairs of states of a bipartite system that are perfectly distinguishable via general entangled measurements yet almost indistinguishable under LOCC, is a distinctive signature of nonclassicality. The relevant figure of merit is the maximal ratio (called data hiding ratio) between the distinguishability norms associated with the two sets of measurements we are comparing, typically all measurements vs LOCC protocols. For a bipartite $n\times n$ quantum system, it is known that the data hiding ratio scales as $n$, i.e. the square root of the real dimension of the local state space of density matrices. We show that for bipartite $n_A\times n_B$ systems the maximum data hiding ratio against LOCC protocols is $\Theta\left(\min\{n_A,n_B\}\right)$. This scaling is better than the previously obtained upper bounds $O\left(\sqrt{n_A n_B}\right)$ and $\min\{n_A^2, n_B^2\}$, and moreover our intuitive argument yields constants close to optimal.

In this paper, we investigate data hiding in the more general context of general probabilistic theories (GPTs), an axiomatic framework for physical theories encompassing only the most basic requirements about the predictive power of the theory. The main result of the paper is the determination of the maximal data hiding ratio obtainable in an arbitrary GPT, which is shown to scale linearly in the minimum of the local dimensions. We exhibit an explicit model achieving this bound up to additive constants, finding that the quantum mechanical data hiding ratio is only of the order of the square root of the maximal one. Our proof rests crucially on an unexpected link between data hiding and the theory of projective and injective tensor products of Banach spaces. Finally, we develop a body of techniques to compute data hiding ratios for a variety of restricted classes of GPTs that support further symmetries.
\end{abstract}


\maketitle

\section{Introduction}

Quantum information one may truly define as an ingenious investigation into the operational properties that differentiate classical systems from quantum systems. We believe it important from a fundamental perspective, to extend this definition so as to encompass all the properties discriminating classical from generically \emph{non-classical} systems~\cite{lamiatesi}. A useful framework for treating physical theories from a more general point of view, without losing the operational interpretation associated with them, is known as general probabilistic theories (GPTs, for short). Intuitively, one can think of GPTs as theories that are analogous to quantum mechanics in many ways, but with the fundamental difference that the cone of (unnormalised) states is no longer that of positive semidefinite matrices, but rather a generic convex cone in a finite-dimensional real vector space. In the rich landscape provided by GPTs, we investigate one particular phenomenon appearing in non-classical theories, \emph{data hiding}~\cite{dh original 1, dh original 2}. 

In quantum mechanics, data hiding is usually intended as the existence of pairs of states of a bipartite system that are perfectly distinguishable with global measurements yet almost indistinguishable when only protocols involving local operations and classical communication (LOCC) are allowed. We extend the relevant definitions to encompass more general form of data hiding in arbitrary GPTs (Definition~\ref{dh}). In this context, the effectiveness of discriminating protocols is measured by the minimal probability of error $P_e^{\mathcal{M}}(\rho,\sigma; p)$ in the task of distinguishing the two states $\rho,\sigma$ with a priori probabilities $p,1-p$ respectively, when only operations from the class $\mathcal{M}$ are available. For instance, the archetypical example of a pair of states exhibiting data hiding is given by the normalised projectors onto the symmetric and antisymmetric subspace in $\mathds{C}^{n}\otimes\mathds{C}^{n}$, denoted by $\rho_{\mathcal{S}}$ and $\rho_{A}$, respectively. While $P_e^{\text{ALL}}\left(\rho_{S}, \sigma_{\mathcal{A}}, \frac12\right) = 0$, because the two states have orthogonal support, it can be shown that $P_e^{\text{LOCC}}\left(\rho_{S}, \sigma_{\mathcal{A}}, \frac12\right) = \frac{2}{n+1}$~\cite{dh original 1, dh original 2}. 

For these discussions, a more convenient quantity is the \emph{distinguishability norm} associated with $\mathcal{M}$, denoted by $\|\cdot\|_{\mathcal{M}}$ and defined by $P_{e}^{\mathcal{M}}(\rho,\sigma; p)\, =\, \frac12 \left( 1 - \| p\rho-(1-p)\sigma\|_{\mathcal{M}}  \right)$, as detailed in Definition~\ref{def d norm} and Lemma~\ref{discr GPT}; it quantifies the advantage of making an observation over pure guessing (i.e. the prior information). It is immediately obvious that higher values of $\| p\rho-(1-p)\sigma\|_{\mathcal{M}}$ correspond to an increased discriminating power of the set $\mathcal{M}$. Thus, the central object of our investigation is a quantity that we name \emph{data hiding ratio}, which depends on the GPT as well as on the restricted set of measurements we consider, and is given by
\begin{equation*}
R(\mathcal{M}) = \max \frac{\| p\rho-(1-p)\sigma\|_{\text{ALL}}}{\| p\rho-(1-p)\sigma\|_{\mathcal{M}}},
\end{equation*}
where $\|\cdot\|_{\text{ALL}}$ denotes the norm associated with the whole set of possible measurements and the maximisation ranges over all pairs of states $\rho,\sigma$ and a priori probabilities $p$ (Definition~\ref{dh} and Proposition~\ref{dh ratio}).

Our first interest is the determination of the value of $R(\mathcal{M})$ for relevant classes of models and physically or operationally significant restricted sets of protocols $\mathcal{M}$. Concretely, we focus on the case when the system under examination is bipartite, and $\mathcal{M}$ is the set of operations that are implementable if various kind of \emph{locality constraints} are imposed (Definition~\ref{locally constr}). We will look at local operations assisted by one-way classical communication ($\text{LOCC}_\rightarrow$) or the generally broader set of measurements with separable effects (called separable measurements and denoted by $\text{SEP}$). Throughout the rest of this section, $\mathcal{M}$ will always denote one of these locally restricted sets of measurements.

Our interest in this kind of questions has been spurred by a result in~\cite{VV dh}, stating that on a finite-dimensional quantum mechanical system $\mathds{C}^{n_A}\otimes\mathds{C}^{n_B}$ the data hiding ratio against LOCC protocols satisfies $\Omega\left(n\right) \leq R_{\text{QM}}(\text{LOCC}) \leq O \left( \sqrt{n_A n_B} \right)$, where $n = \min\{n_A, n_B\}$. In particular, when $n_A=n_B$ one obtains $R_{\text{QM}}(\text{LOCC})=\Theta(n)$. Our first contribution is an intuitive argument, using the quantum teleportation protocol, to show that $R_{\text{QM}}(\text{LOCC})=\Theta(n)$ holds in fact for all $n_A, n_B$ (Theorem~\ref{dh QM}). Note that the local real dimensions of the cones of states (i.e. the cones of positive semidefinite matrices) satisfy $d_A=n_A^2$ and $d_B=n_B^2$, so that one could write $R_{\text{QM}}(\text{LOCC})= \min\{ \sqrt{d_A}, \sqrt{d_B} \}$.
Along the same line of thought, we compute the data hiding ratios against locally constrained sets of measurements for another significant GPT called spherical model (Example~\ref{ex sph}). As the name suggests, the state space of such a model is a Euclidean ball. We prove that for fixed local real dimensions, the data hiding ratio displayed by the spherical model is quadratically larger than the quantum mechanical one, i.e. $R_{\text{Sph}}(\mathcal{M}) = \Theta \left( \min\{ d_A, d_B \} \right)$.

Our second line of investigation aims at establishing an \emph{optimal, universal upper bound} for $R(\mathcal{M})$ that depends only on the local real dimensions $d_A, d_B$, and where $\mathcal{M}$ is a locally constrained set of measurements, as usual. The answer to this question is the content of the main result of the present paper, Theorem~\ref{thm univ}, which states that $R(\mathcal{M})\leq \min\{d_A, d_B\}$ holds for all bipartite GPTs of local dimensions $d_A, d_B$. Since we have seen that such a scaling characterises spherical models, we deduce that $\min\{d_A, d_B\}$ is the optimal universal upper bound on the data hiding ratio against locally constrained sets of measurements. This answers our fundamental question on the ultimate effectiveness of data hiding when the local systems have bounded size.

The rest of the paper is organised as follows. In Section~\ref{sec GPT} we set the notation and provide a brief introduction to the main feature of the GPT framework. Section~\ref{sec dh GPTs} is devoted to giving precise definitions of what we mean by data hiding in the GPT setting. Then, throughout Section~\ref{sec ex} we show this machinery at work by examining some examples of significant physical theories, including quantum mechanics. In Section~\ref{sec univ} we present the main result of the paper, i.e. the determination of the optimal universal upper bound on data hiding ratios against locally constrained sets of measurements (Theorem~\ref{thm univ}). Finally, Section~\ref{sec special} presents a body of techniques to compute data hiding ratios in specific classes of GPT models satisfying some further assumptions.

\section{General probabilistic theories} \label{sec GPT}

That branch of the study of cones, which turns into the study of ordered Banach spaces, gives rise to an axiomatic framework to describe probabilistic physical models. Since it has been initially proposed as a way to generalise quantum probability rules~\cite{MACKEY, Ellis dual base, Davies 1970}, this framework, now widely known under the name of general probabilistic theories (GPTs), has received growing attention and many important refinements were put forward~\cite{Edwards 1, Edwards 2, LUDWIG, tensor rule 1, tensor rule 2}. The main advantage of GPTs is that within their realm a rich variety of physical theories can be encompassed. Prominent examples include classical probability theory and quantum mechanics as well as more abstract objects such as Popescu-Rohrlich boxes~\cite{PR boxes, Barrett original}. Nowadays, many intriguing features of quantum information processing have been investigated in the more general context provided by GPTs~\cite{Barnum no-broad, telep in GPT, nonloc polygon, Barnum steering, Pfister no disturbance, ent therm GPT}.

For the sake of completeness, throughout this section we provide an overview of the mathematical machinery of GPTs. For further details we refer the reader to more complete reviews that can be found in the literature~\cite{telep in GPT, Pfister Master, Barnum review, lamiatesi}. In conformity with these authors, we adopt the so-called {\it abstract state space} formalism.

\subsection{Single systems}

We shall describe a physical system by means of a {\it state space} $\Omega$, and measurement outcomes (called {\it effects}) by functions $e:\Omega\rightarrow[0,1]$, were $e(\omega)$ stands for the probability of getting the outcome $e$ on the state $\omega\in\Omega$. If we represent the probabilistic process of preparing the state $\omega_{1}\in\Omega$ with probability $p$ and the state $\omega_{2}\in\Omega$ with probability $1-p$ as $p\,\omega_{1}+(1-p)\omega_{2}$, and we require this to be a physically allowable state, we obtain that the state space must be convex. The compatibility with the rule determining how the probabilities of measurement outcomes are formed forces $e$ to be a convex-linear functional. A special effect is the {\it unit effect} $u$, defined by $u(\omega)\equiv 1$ for all $\omega\in\Omega$. Following a common convention, the set of all effects will be denoted with $[0,u]$. We shall see in a moment that this is really an interval with respect to a natural ordering on the dual space. In this context, a measurement (usually called positive operator-valued measurement or POVM in quantum mechanics) is a {\it finite} collection of effects $(e_{i})_{i\in I}\subset [0,u]$ such that $\sum_{i\in I} e_{i}=u$. We shall denote the set of all measurements in a GPT as $\mathbf{M}$.

A more convenient picture can be obtained by exploiting the natural embedding of $\Omega$ in the dual of the vector space $\mathcal{A}_{\Omega}$ of affine functionals $f:\Omega\rightarrow \mathds{R}$. Namely, to a state $\omega\in\Omega$ we can associate $\hat{\omega}\in \mathcal{A}_{\Omega}^{*}$ whose action is given by $\hat{\omega}(f)=f(\omega)$, for all $f\in \mathcal{A}_{\Omega}$. A common technical assumption we will make everywhere in the paper is that $\mathcal{A}_{\Omega}$ is {\it finite-dimensional}. In this way, $\Omega$ is endowed with a canonical topology we do not need to specify. As customary, we will consider only compact state spaces. In what follows, we will adopt the shorthand $\mathcal{A}_{\Omega}^{*}=V$ and hence $\mathcal{A}_{\Omega}=V^{*}$. We stress that while the assumptions of finite dimension and compactness can be regarded as purely technical, the convexity of the state space plays a decisive role in the physical interpretation.

We see that the vector space $V^{*}$ can be equipped with a natural ordering: if $f,g\in V^{*}$, we say that $f\leq g$ whenever $f(\omega)\leq g(\omega)$ for all $\omega\in \Omega$. Observe that the set $[0,u]$ of all effects coincides with the interval $\{f\in V^*: 0\leq f\leq u\}$, as our notation suggested. The ordering on $V^*$, in turn, induces a dual ordering on the space $V$: if $x,y\in V$, we declare $x\leq y$ whenever $f(x)\leq f(y)$ for all positive functionals $V^{*}\ni f\geq 0$. It is easy to see that the positive cone $C=V_{+}\subset V$ coincides with the set of non-negative multiples of states. Observe that $V,V^{*}$ equipped with these translationally invariant, positively homogeneous order relations become ordered vector spaces. The cones of positive elements $C,C^{*}$ are dual to each other and both closed, convex, salient and generating. Furthermore, we notice that the unit functional $u$ belongs to the interior of the cone $C^{*}$, which we signify by calling it it {\it strictly positive}. We remind the reader that given a cone $K\subset V$, its dual $K^{*}$ is defined as $K^{*}=\{f\in V^{*}:\ f(a)\geq 0\ \forall\ a\in K\}\subset V^{*}$. Moreover, $K$ is said to be salient if $K\cap (-K)=\{0\}$, and generating if $\text{span}(K)=V$. These two latter notions are elementarily seen to be dual to each other, meaning that a closed cone is salient iff the dual is generating, and vice versa. From the above discussion we abstract the following definition.

\vspace{2ex}
\begin{Def}[General probabilistic theories~\cite{Ellis dual base, Davies 1970}]
A \emph{general probabilistic theory} (GPT) is a triple $(V, C, u)$ consisting of a real, finite-dimensional vector space $V$ ordered by a closed, convex, salient and generating cone $C$ and of a strictly positive element $u\in \text{\emph{int}}(C^{*})$. We call $d=\dim V$ the dimension of the GPT.
\end{Def}

\vspace{0ex}
\begin{note}
All vector spaces in the present paper are from now on understood to be real and finite-dimensional.
\end{note}

Let us note in passing that in~\cite{Davies 1970} GPTs are called \emph{state spaces}. However, here we reserve this latter term for the set $\Omega$ of normalised states. As it turns out, in every GPT there is a natural notion of norm induced on the vector space by the unit in the dual.

\vspace{2ex}
\begin{Def}[Base norm] \label{def base}
Let $(V,C,u)$ be a GPT. The \emph{base norm} $\|\cdot\|$ on $V$ is defined as
\begin{equation}
\|x\|\, \coloneqq\, \max_{f\in [-u,u]} f(x)\, =\, \max_{e\in [0,u]} \{|e(x)| + |(u-e)(x)|\}\, ,
\label{base norm}
\end{equation}
where $[-u,u]=\{f\in V^{*}:\ -u\leq f\leq u\}$. In other words, the dual base norm has $[-u,u]$ as the unit ball.
\end{Def}

\begin{note}
Whenever there is room for ambiguity, we will denote by a subscript the GPT to which the base norm refers.
\end{note}

Observe that if $a\geq 0$ then $\|a\|=u(a)$, i.e. its base norm coincides with the expectation value of the unit effect. For general $x\in V$, we will find useful a dual formula for the base norm as given in the following lemma.

\vspace{2ex}
\begin{lemma}\emph{\cite{Krein dual base, Ellis dual base}} \label{dual base}
Let $(V,C,u)$ be a GPT. Then its base norm $\|\cdot\|$ as defined in~\eqref{base norm} is also given by
\begin{equation}
\|x\|\, =\, \min\left\{ u(x_{+}) + u(x_{-})\ :\ x_{\pm}\geq 0,\ x=x_{+}-x_{-} \right\} .
\label{dual base eq}
\end{equation}
\end{lemma}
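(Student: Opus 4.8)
The plan is to prove the two inequalities separately, the nontrivial one via polar duality. For the bound $\|x\|\leq \min\{u(x_+)+u(x_-)\}$, I would fix an arbitrary decomposition $x=x_+-x_-$ with $x_\pm\in C$ and an arbitrary $f\in[-u,u]$. Since $f\leq u$ and $x_+\geq 0$ we get $f(x_+)\leq u(x_+)$, while $-u\leq f$ and $x_-\geq 0$ give $-f(x_-)\leq u(x_-)$; adding, $f(x)=f(x_+)-f(x_-)\leq u(x_+)+u(x_-)$. Taking the maximum over $f\in[-u,u]$ on the left and then the infimum over all decompositions on the right yields $\|x\|\leq\min\{u(x_+)+u(x_-):x=x_+-x_-,\ x_\pm\geq 0\}$.

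For the reverse inequality I would recognise the right-hand side as a Minkowski gauge. Let $S:=\{c\in C:u(c)=1\}$ be the base of $C$ cut out by $u$; because $u\in\mathrm{int}(C^*)$, the set $S$ is compact, every $c\in C$ is a nonnegative multiple of an element of $S$ unless $c=0$, and $\mathrm{span}(S)=V$ since $C$ is generating. Put $B:=\mathrm{conv}\bigl(S\cup(-S)\bigr)$, a compact, convex, symmetric set with $0\in\mathrm{int}(B)$, so that its gauge $\gamma_B(x):=\inf\{t>0:x\in tB\}$ is a norm. A straightforward rescaling argument shows $\min\{u(x_+)+u(x_-):x=x_+-x_-,\ x_\pm\geq 0\}=\gamma_B(x)$, the minimum being attained since $B$ is compact: from a decomposition with $u(x_+)=\alpha>0$, $u(x_-)=\beta>0$ one has $\tfrac{x_+}{\alpha},\tfrac{x_-}{\beta}\in S$ and $x=(\alpha+\beta)\bigl(\tfrac{\alpha}{\alpha+\beta}\cdot\tfrac{x_+}{\alpha}-\tfrac{\beta}{\alpha+\beta}\cdot\tfrac{x_-}{\beta}\bigr)\in(\alpha+\beta)B$ (dropping any vanishing term), and conversely any $x\in tB$ produces a decomposition with $u(x_+)+u(x_-)=t$ by expanding the convex combination.

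It then remains to compute the polar of $B$. For $f\in V^*$ one has $\sup_{y\in B}f(y)=\max_{c\in S}|f(c)|$, so, using positive homogeneity and $u(c)=1$ on $S$, $f\in B^\circ$ holds iff $|f(c)|\leq 1=u(c)$ for all $c\in S$, i.e. iff $-u\leq f\leq u$; thus $B^\circ=[-u,u]$. The bipolar theorem ($B$ closed, convex, $0\in B$) then gives $\gamma_B(x)=\sup_{f\in B^\circ}f(x)=\sup_{f\in[-u,u]}f(x)=\|x\|$, which is exactly the claimed identity (and in particular re-derives the first inequality).

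I expect the only real obstacle to be the ``no-gap/attainment'' content of the second inequality, and the point that must be handled with care is that the strict positivity $u\in\mathrm{int}(C^*)$ is precisely what makes $S$ --- hence $B$ --- compact and $\gamma_B$ a bona fide norm; without it the minimum could fail to be finite or attained. A fully equivalent alternative is to invoke conic linear programming duality: the minimisation program has Lagrange dual $\max\{f(x):u-f\in C^*,\ u+f\in C^*\}=\max_{f\in[-u,u]}f(x)$, and $f=0$ is strictly feasible for the dual exactly because $u\in\mathrm{int}(C^*)$, so strong duality holds with the primal optimum attained. I would nonetheless present the geometric argument as the main proof, since it is self-contained and also explains why the extremal decomposition exists.
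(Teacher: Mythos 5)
Your proof is correct, and it takes a genuinely different (and more self-contained) route than the paper. The paper simply cites the original references of Krein and Ellis and then offers a one-line alternative: in finite dimension, the minimisation in~\eqref{dual base eq} is the convex-programming dual of the maximisation in~\eqref{base norm}, and since $f=0$ is strictly feasible (because $u\in\mathrm{int}(C^*)$), Slater's condition gives zero duality gap --- this is exactly the ``fully equivalent alternative'' you mention at the end. Your main argument instead identifies the right-hand side of~\eqref{dual base eq} with the Minkowski gauge of $B=\mathrm{conv}\bigl(S\cup(-S)\bigr)$, computes $B^{\circ}=[-u,u]$, and concludes via the gauge/support-function duality for the closed convex symmetric body $B$. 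The steps all check out: the easy inequality is fine; the rescaling between decompositions and points of $tB$ is correct (using that $u(x_{\pm})=0$ forces $x_{\pm}=0$ by strict positivity of $u$, and that $S$ is convex so $B=\{\lambda s-(1-\lambda)s':s,s'\in S\}$); compactness of $S$, hence closedness of $B$, indeed hinges on $u\in\mathrm{int}(C^*)$ and gives attainment of the minimum, while $C$ being generating gives $0\in\mathrm{int}(B)$ so that $\gamma_B$ is finite (you assert this last point rather than prove it, but it is immediate from $V=C-C$). What your route buys is a transparent geometric picture --- it exhibits the unit ball of the base norm as $\mathrm{conv}\bigl(\Omega\cup(-\Omega)\bigr)$ and shows explicitly that an optimal decomposition $x=x_+-x_-$ exists, a fact the paper uses repeatedly (e.g.\ in Proposition~\ref{dh ratio} and Proposition~\ref{d norms inj proj}); what the paper's route buys is brevity, delegating the no-gap statement to standard Lagrange duality.
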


\begin{proof}
For the original proof we refer the reader to~\cite{Krein dual base, Ellis dual base}. Alternatively, in finite dimension one can observe that~\eqref{dual base eq} is exactly the dual of the convex program in~\eqref{base norm} (first equality). Since $f=0$ is a strictly feasible point of the primal problem, Slater's condition~\cite{B&V} ensures that the two values coincide, concluding the proof.
\end{proof}

\subsection{Bipartite systems}

A physical theory modelled by a GPT should encompass a way to build a bipartite system $AB$ out of two single systems $A$ and $B$. Throughout this section, we will review the basic physical requirements to be imposed on such a construction. First, performing a local measurement $(e_{A},e_{B})$ should be allowed for all effects $e_{A}\in[0,u_{A}]$ and $e_{B}\in [0,u_{B}]$, no matter what the joint state of the system is. Secondly, preparing a separate state $(\omega_{A},\omega_{B})$ should be possible for all local states $\omega_{A}\in \Omega_{A}$ and $\omega_{B}\in \Omega_{B}$. Finally, it is reasonable to postulate the {\it local tomography principle}: bipartite states are fully determined by the statistics resulting from local measurements. Under these assumptions, it can be shown~\cite{tensor rule 1, tensor rule 2} that the joint vector space $V_{AB}$ can be identified with the tensor product of the two local spaces, i.e. 
\begin{equation}
V_{AB}\, \simeq\, V_{A}\otimes V_{B}\, .
\label{tensor spaces}
\end{equation}
Similarly, the joint unit effect can be taken as the product of the two local unit effects, $u_{AB}=u_{A}\otimes u_{B}$. Within this framework, the non-signalling principle is automatically implemented by the mathematically formalism: if $AB$ is in a state $\omega_{AB}$, whatever operation is carried out on the local party $B$, the reduced state of the system $A$ will be given by $u_B(\omega_{AB})$, implicitly defined by the equations $f\left( u_B(\omega_{AB})\right)\coloneqq (f\otimes u_B)(\omega_{AB})$ for all $f\in V_A^*$.

Remarkably, the joint cone of positive elements $C_{AB}$ is not fully determined by the above axioms. Instead, it can be shown that
\begin{equation}
C_{A}\tmin C_{B}\, \subseteq\, C_{AB}\, \subseteq\, C_{A}\tmax  C_{B}\, .
\label{CAB bound}
\end{equation}
Here, the `lower bound' $C_{A}\tmin C_{B}$, called {\it minimal tensor product}, is given by
\begin{equation}
C_{A}\tmin C_{B}\, \coloneqq \, \text{conv}\left( C_{A}\otimes C_{B} \right) ,
\label{minimal}
\end{equation}
where $C_{A}\otimes C_{B}=\left\{ a\otimes b:\ a\in C_{A},\ b\in C_{B} \right\}$, and $\text{conv}$ denotes the convex hull. States in the minimal tensor product are conventionally called {\it separable}. We should mention here that the above definition of minimal tensor product yields automatically a closed convex cone. To see that this, it suffices to show that the corresponding state space $\text{conv}(\Omega_{A}\otimes \Omega_{B})$ is closed. However, in the case considered in this work this is particularly easy, since the fact that $\Omega_{A}$ and $\Omega_{B}$ are compact sets immediately implies that the set $\text{conv}(\Omega_{A}\otimes \Omega_{B})$ is compact too~\cite[Theorem 3.25(b)]{RUDIN}. We refer to~\cite[Exercise 4.14]{AUBRUN} for a more general statement where compactness is not assumed.

The complementary `upper bound' in~\eqref{CAB bound}, called {\it maximal tensor product}, can be alternatively defined as
\begin{equation}
C_{A}\tmax  C_{B}\, \coloneqq\, \{ Z\in V_{A}\otimes V_{B}:\ (f\otimes g)(Z)\geq 0\ \forall\ f\in C_{A}^{*},\, g\in C_{B}^{*} \}\, =\, \Big( C_{A}^{*}\tmin C_{B}^{*}\Big)^{*} .
\label{maximal}
\end{equation}
Observe that also~\eqref{maximal} identifies a closed convex cone. The original definitions as well as the (elementary) proof of the above equivalence were given by Namioka and Phelps in~\cite{NP}. For a more comprehensive introduction to the theory of tensor product of cones, we refer the reader to~\cite{Mulansky}.

In the following, we will see how the two-sided bound~\eqref{CAB bound} applies in many concrete cases. With a slight abuse of notation, given two GPTs $A=(V_{A},C_{A},u_{A}),\, B=(V_{B},C_{B},u_{B})$ we will refer to the composites
\begin{equation}
A\tmin B\, \coloneqq\, \Big( V_{A}\otimes V_{B},\, C_{A}\tmin C_{B},\, u_{A}\otimes u_{B}\Big) ,\qquad A\tmax  B\, \coloneqq\, \Big( V_{A}\otimes V_{B},\, C_{A}\tmax C_{B},\, u_{A}\otimes u_{B}\Big)
\label{min max theories}
\end{equation}
as the minimal and maximal tensor product of the GPTs $A$ and $B$, respectively. 

Now, let us devote some time to examine a particularly simple instance of a GPT, namely classical probability theory with a finite alphabet. Although it is trivial in some respects, its understanding is crucial in appreciating more complicated examples such as quantum theory (Example~\ref{ex QM}) or spherical models (Example~\ref{ex sph}), not to mention even more exotic GPTs to be treated later on (Example~\ref{cubic}).

\vspace{2ex}
\begin{ex}[Classical probability theory] \label{ex class}
The state space of classical probability theory is the set of probability distributions over a finite alphabet. The corresponding GPT can be defined as the triple
\begin{equation}
\text{Cl}_{d}\coloneqq \left(\mathds{R}^{d},\, \mathds{R}^{d}_{+},\, u\right)\, ,
\label{classical}
\end{equation}
where $\mathds{R}^{d}_{+}\coloneqq\{ x\in \mathds{R}^{d}:\, x_{i}\geq 0\ \forall\, i=1,\ldots, d\}$ is the positive orthant, and the unit effect acts as $u(y)=\sum_{i=1}^{d}y_{i}$ for all $y\in \mathds{R}^{d}$. The base norm associated with classical probability theory is easily seen to be the $l_{1}$-norm $|x|_{1}=\sum_{i=1}^{d}|x_{i}|$. Since classical cones are simplicial, it can be seen that the lower and upper bound in~\eqref{CAB bound} coincide and the composition rule is trivial: if $A=\text{Cl}_{d_{A}}$ or $B=\text{Cl}_{d_{B}}$, then necessarily $C_A\tmin C_B=C_A\tmax  C_B$.
\end{ex}

\section{Data hiding in GPTs} \label{sec dh GPTs}

\subsection{State discriminiation}

The binary distinguishability problem consists of choosing secretly one of the two states $\rho,\sigma\in \Omega$ (where $\Omega$ is a generic state space) with known a priori probabilities $p, 1-p$ and handing it over to an agent, whose task is to discriminate between the two alternatives. Naturally, the larger the set of measurements the agent has at his disposal, the lower the associated probability of error will be (for fixed states and a priori probabilities). In general, it will make sense to consider measurements that are at least {\it informationally complete}, meaning that from their complete statistics the full state can be reconstructed unambiguously. A formal definition is below.

\vspace{2ex}
\begin{Def} \label{info complete}
Let $(V,C,u)$ be a GPT. Then a measurement $(e_{i})_{i\in I}\in \mathbf{M}$ (i.e. finite family of effects $(e_{i})_{i\in I}\subset[0,u]$ such that $\sum_{i\in I} e_{i}=u$) is said to be \emph{informationally complete} if $\text{\emph{span}}\{e_{i}:\, i\in I\}=V^{*}$. A set $\{\mu_t\}_{t\in T}$ made of measurements $\mu_t=\big(e_i^{(t)}\big)_{i\in I_t}$ is deemed informationally complete if $\text{\emph{span}}\big\{e_{i}^{(t)}:\, t\in T, i\in I_t\big\}=V^{*}$.
\end{Def}

\vspace{2ex}
If $\mathcal{M}\subseteq \mathbf{M}$ is a set of measurements in an arbitrary GPT, we can define an associated norm by translating to GPTs the analogous definition in~\cite{VV dh}.

\vspace{2ex}
\begin{Def} \label{def d norm}
Let $\mathcal{M}\subseteq\mathbf{M}$ be an informationally complete set of measurements in a GPT $(V,C,u)$. The associated \emph{distinguishability norm} $\|\cdot\|_{\mathcal{M}}$ is a norm on $V$ given by
\begin{equation}
\|x\|_{\mathcal{M}} \coloneqq \sup_{(e_{i})_{i\in I}\in \mathcal{M}} \sum_{i} |e_{i}(x)|
\label{d norm}
\end{equation}
for all $x\in V$.
\end{Def}

\vspace{2ex}
As is easy to see, the function defined in~\eqref{d norm} is truly a norm on $V$ thanks to the informational completeness of $\mathcal{M}$. Among its elementary properties, we note the following: (i) the identity $\|a\|_{\mathcal{M}}=u(a)$, valid on positive states $a\geq 0$ (independently of $\mathcal{M}$); (ii) the general bound $\|x\|_{\mathcal{M}} \geq |u(x)|$, an elementary consequence of the definition~\eqref{d norm}; (iii) the monotonicity of $\|\cdot\|_\mathcal{M}$ in $\mathcal{M}$, with the partial order defined by the inclusion; and (iv) the fact that $\|\cdot\|_{\mathbf{M}}$ coincides with the base norm, as seen from the second equality in~\eqref{base norm}. The above formal definition becomes relevant by virtue of its link to the operational task of state discrimination as given by the following lemma, totally analogous to~\cite[Theorem 5]{VV dh}.

\vspace{2ex}
\begin{lemma} \label{discr GPT}
Let $(V,C,u)$ be a GPT with state space $\Omega$, and consider an informationally complete set of measurements $\mathcal{M}\subseteq \mathbf{M}$. Then the lowest probability of error for discriminating between two states $\rho,\sigma\in \Omega$ with a priori probabilities $p,1-p$, respectively, is given by
\begin{equation}
P_{e}^{\mathcal{M}}(\rho,\sigma; p)\, =\, \frac12 \left( 1 - \| p\rho-(1-p)\sigma\|_{\mathcal{M}}  \right) , \label{pr error}
\end{equation}
where $\|\cdot\|_{\mathcal{M}}$ is the distinguishability norm given by~\eqref{d norm}.
\end{lemma}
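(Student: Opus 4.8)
The plan is to reduce the discrimination task to an optimisation over binary decision rules applied to the outcomes of a fixed measurement, and then to optimise over the measurement itself; this is just the GPT transcription of the usual Helstrom-type computation. A strategy for an agent restricted to $\mathcal{M}$ consists of a measurement $(e_i)_{i\in I}\in\mathcal{M}$ together with a rule for turning each outcome $i$ into a guess. Since, for a fixed measurement, the success probability is an affine function of the (possibly randomised) decision rule, it is maximised at a deterministic one, i.e.\ at a partition $I=I_\rho\sqcup I_\sigma$ (``upon $i\in I_\rho$ guess $\rho$'', and $\sigma$ otherwise). Writing $x\coloneqq p\rho-(1-p)\sigma$ and using that $\rho,\sigma\in\Omega$ are normalised, so that $u(\rho)=u(\sigma)=1$, the success probability of such a strategy is
\[
P_{\mathrm{succ}}=p\sum_{i\in I_\rho}e_i(\rho)+(1-p)\sum_{i\in I_\sigma}e_i(\sigma)=(1-p)+\sum_{i\in I_\rho}e_i(x),
\]
where I used $\sum_{i\in I_\sigma}e_i(\sigma)=u(\sigma)-\sum_{i\in I_\rho}e_i(\sigma)=1-\sum_{i\in I_\rho}e_i(\sigma)$.

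First I would optimise over the partition with the measurement held fixed: the right-hand side above is largest when $I_\rho$ collects exactly the outcomes with $e_i(x)\ge 0$, so that
\[
\max_{I_\rho}\sum_{i\in I_\rho}e_i(x)=\sum_{i:\,e_i(x)>0}e_i(x)=\frac12\Big(\sum_i|e_i(x)|+\sum_i e_i(x)\Big)=\frac12\Big(\sum_i|e_i(x)|+u(x)\Big).
\]
Since $u(x)=p\,u(\rho)-(1-p)\,u(\sigma)=2p-1$, the best success probability achievable with this measurement is $\frac12+\frac12\sum_i|e_i(x)|$, and hence the associated error probability equals $\frac12-\frac12\sum_i|e_i(x)|$.

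It then remains to optimise over $\mathcal{M}$: taking the infimum of the error probability over all $(e_i)_{i\in I}\in\mathcal{M}$ replaces $\sum_i|e_i(x)|$ by $\sup_{(e_i)\in\mathcal{M}}\sum_i|e_i(x)|=\|x\|_{\mathcal{M}}$ by Definition~\ref{def d norm}, which yields $P_e^{\mathcal{M}}(\rho,\sigma;p)=\frac12\big(1-\|p\rho-(1-p)\sigma\|_{\mathcal{M}}\big)$, as claimed. I do not expect a genuine obstacle here; the only points deserving an explicit word are the reduction to deterministic decision rules and the bookkeeping with the unit effect $u$ in the role played by the trace in the quantum case. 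The single mild subtlety is that the supremum defining $\|\cdot\|_{\mathcal{M}}$ need not be attained, so the ``lowest probability of error'' is in general an infimum over strategies rather than a minimum; this is harmless, since the statement only concerns the value $P_e^{\mathcal{M}}$.
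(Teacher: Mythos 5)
Your proof is correct and follows essentially the same route as the paper's: reduce any strategy to a measurement in $\mathcal{M}$ followed by a (classical) decision rule, optimise the rule for a fixed measurement (the paper does this via probabilistic post-processing weights $q_i$, you via the equivalent reduction to deterministic partitions), and then optimise over $\mathcal{M}$, recovering $\|p\rho-(1-p)\sigma\|_{\mathcal{M}}$ from Definition~\ref{def d norm}. Your closing remark about the supremum over $\mathcal{M}$ possibly not being attained is a fair bookkeeping point that the paper glosses over, but it does not change the statement or the argument.
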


\begin{proof}
This goes in complete analogy with the corresponding argument for quantum mechanics~\cite{Helstrom original, VV dh, state discr GPT}, but we explain it here for the sake of completeness. Without loss of generality, we can assume that the protocol consists of measuring the state with a measurement $(e_{i})_{i\in I}$ and performing a (possibly probabilistic) post-processing of the classical outcome $i$. Assume that the outcome $i$ yields $\rho$ or $\sigma$ as final answers with probabilities $q_{i}$ and $1-q_{i}$, respectively. Then the probability of error is given by
\begin{equation*}
P_{e}\, =\, p \sum_{i\in I} e_{i}(\rho) (1-q_{i}) + (1-p) \sum_{i} e_{i}(\sigma) q_{i}\, =\, p - \sum_{i} q_{i} e_{i}\left(p\rho-(1-p)\sigma\right) ,
\end{equation*}
where we employed the normalisation relation $\sum_{i\in I} e_{i}=u$. Minimising over all probabilities $q_{i}$ one obtains
\begin{equation*}
P_{e}\, =\, \frac12 \left( 1 - \sum_{i\in I} |e_{i}\left(p\rho-(1-p)\sigma\right)| \right) ,
\end{equation*}
and finally~\eqref{pr error} after a minimisation over all measurements $(e_{i})_{i\in I}\in \mathcal{M}$.
\end{proof}

\vspace{2ex}
The analogy with quantum mechanics goes much beyond this. In fact, all the results of~\cite[Section 2]{VV dh} (with the exception of Proposition 8 there) carry over to GPTs. In translating the statements one has just to remember that the quantum mechanical trace norm $\|\cdot\|_{1}$ becomes the base norm in the GPT framework, and that similarly the identity becomes the unit effect. For details on the interpretation of quantum mechanics as a GPT, we refer the reader to Example~\ref{ex QM}. Here we limit ourselves to provide a formulation of~\cite[Theorem 4]{VV dh} for GPTs. In what follows, for a set of measurements $\mathcal{M}\subseteq\mathbf{M}$ we will denote by $\langle \mathcal{M} \rangle$ the set generated by $\mathcal{M}$ via {\it coarse graining}, i.e. by a posteriori declaring some of the outcomes of a measurement in $\mathcal{M}$ as the same. In formula,
\begin{equation}
\langle \mathcal{M} \rangle\, \coloneqq\, \bigg\{ (e_{j})_{j\in J}:\ \exists\ I\ \text{finite},\ \{I_{j}\}_{j\in J}\ \text{partition of}\ I,\ (e'_{i})_{i\in I}\in\mathcal{M}:\ e_{j}=\sum_{i\in I_{j}} e'_{i}\ \, \forall\, j\in J \bigg\} .
\label{coarse}
\end{equation} 

\vspace{0ex}
\begin{lemma} \label{unit ball dual d norm}
The unit ball of the dual to the distinguishability norm~\eqref{d norm} is given by
\begin{equation}
B_{\|\cdot\|_{\mathcal{M},*}}\, =\, \overline{\text{\emph{conv}} \big\{ 2e-u:\ ( e, u-e )\in \langle \mathcal{M}\rangle \big\}}\, .
\label{unit ball dual d norm eq}
\end{equation}
Equivalently, $\|\cdot\|_{\mathcal{M}}$ can be computed as
\bq
\|x\|_{\mathcal{M}}\, =\, \sup\left\{ f(x):\ \left(\frac{u+f}{2},\, \frac{u-f}{2} \right) \in \langle \mathcal{M}\rangle \right\}\, .
\label{d norm altern}
\eq
Consequently, there is a one-to-one correspondence between distinguishability norms~\eqref{d norm} and closed symmetric convex bodies $K$ such that $\pm u\in K\subseteq [-u,u]$.
\end{lemma}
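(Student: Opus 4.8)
The plan is to identify $\|\cdot\|_{\mathcal{M}}$ with the support function of the symmetric set
\[
S\ \coloneqq\ \big\{\, 2e-u\ :\ (e,u-e)\in\langle\mathcal{M}\rangle \,\big\}\ \subseteq\ V^{*},
\]
after which~\eqref{unit ball dual d norm eq},~\eqref{d norm altern} and the bijection all follow from elementary convex duality. The argument proceeds in three steps.

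\emph{Step 1: reduction to binary measurements.} Fix $x\in V$. For a single measurement $(e_i)_{i\in I}\in\mathcal{M}$ I would write $\sum_i|e_i(x)|=\max_{\varepsilon\in\{\pm1\}^{I}}\sum_i\varepsilon_i e_i(x)$, and observe that with $e_+\coloneqq\sum_{i:\,\varepsilon_i=+1}e_i$ one has $\sum_i\varepsilon_i e_i=2e_+-u$ while $(e_+,u-e_+)\in\langle\mathcal{M}\rangle$ by coarse graining~\eqref{coarse}; hence $\sum_i|e_i(x)|\le\sup_{f\in S}f(x)$. Conversely, every binary $(e,u-e)\in\langle\mathcal{M}\rangle$ arises by coarse graining some $(e_i)_{i\in I}\in\mathcal{M}$ along a bipartition $I=I_1\sqcup I_2$ with $e=\sum_{i\in I_1}e_i$, so $(2e-u)(x)=\sum_{i\in I_1}e_i(x)-\sum_{i\in I_2}e_i(x)\le\sum_i|e_i(x)|\le\|x\|_{\mathcal{M}}$. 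Taking suprema in both estimates yields $\|x\|_{\mathcal{M}}=\sup_{f\in S}f(x)$, which upon the substitution $e=(u+f)/2$ is exactly~\eqref{d norm altern}.

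\emph{Step 2: the dual unit ball.} Swapping the two outcomes of a binary measurement shows $S=-S$. Since $\|x\|_{\mathcal{M}}\le1$ is equivalent to $f(x)\le1$ for all $f\in S$, i.e.\ to $x\in S^{\circ}$, the unit ball of $\|\cdot\|_{\mathcal{M}}$ is $S^{\circ}=\big(\overline{\text{conv}}\,S\big)^{\circ}$; as $\|\cdot\|_{\mathcal{M}}$ is a genuine norm (by informational completeness) this ball and its polar are convex bodies, so the bipolar theorem gives $B_{\|\cdot\|_{\mathcal{M},*}}=\overline{\text{conv}}\,S$, which is~\eqref{unit ball dual d norm eq}. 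In other words $\|\cdot\|_{\mathcal{M}}=h_K$ is the support function of the closed symmetric convex body $K\coloneqq\overline{\text{conv}}\,S=B_{\|\cdot\|_{\mathcal{M},*}}$.

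\emph{Step 3: the one-to-one correspondence.} The body $K$ satisfies $K\subseteq[-u,u]$, because $0\le e\le u$ forces $-u\le2e-u\le u$ and $[-u,u]$ is closed and convex; and $\pm u\in K$, since otherwise one could separate, say, $u$ from the closed convex set $K$ to get an $x$ with $\sup_{f\in K}f(x)<u(x)\le\|x\|_{\mathcal{M}}=h_K(x)$, a contradiction. Conversely, given any closed symmetric convex body $K$ with $\pm u\in K\subseteq[-u,u]$, I would form $\mathcal{M}_K\coloneqq\big\{\big(\tfrac{u+f}{2},\tfrac{u-f}{2}\big):f\in K\big\}$: each pair is a legitimate binary measurement because $K\subseteq[-u,u]$, and $\mathcal{M}_K$ is informationally complete since its effects span $\text{span}(K)+\mathds{R}u=V^{*}$. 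A binary measurement has only the trivial coarse graining besides itself, and that trivial one already lies in $\mathcal{M}_K$ precisely because $\pm u\in K$; hence the set $S$ associated with $\mathcal{M}_K$ as above equals $K$, and by Step~2 the distinguishability norm $\|\cdot\|_{\mathcal{M}_K}$ has dual unit ball $\overline{\text{conv}}\,K=K$. Since a norm is determined by its dual unit ball, $\mathcal{M}\mapsto\|\cdot\|_{\mathcal{M}}\mapsto B_{\|\cdot\|_{\mathcal{M},*}}$ is the asserted bijection onto the family of such bodies. The convex-duality steps here are routine; the one place needing care is the bookkeeping around degenerate measurements — equivalently, allowing outcomes with the zero effect so that $\pm u\in S$ — together with the observation that the correspondence is with distinguishability \emph{norms}, $\mathcal{M}_K$ being only a canonical representative since many different sets $\mathcal{M}$ induce the same norm.
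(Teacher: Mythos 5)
Your proof is correct. The paper gives no argument of its own here (it simply cites~\cite{VV dh}), and your three steps — the sign-splitting reduction to binary coarse grainings, the support-function/bipolar identification of the dual unit ball, and the canonical family $\mathcal{M}_K$ of binary measurements to get surjectivity — are exactly the convex-duality route of that reference, transplanted to GPTs. The only real subtlety is the one you flag yourself: \eqref{unit ball dual d norm eq} and \eqref{d norm altern} require the trivial binary coarse graining $(u,0)$ to count as an element of $\langle\mathcal{M}\rangle$ (equivalently, outcomes with the zero effect are allowed), since otherwise $\pm u$ need not lie in the convex hull and the supremum in \eqref{d norm altern} can be strictly smaller than $\|x\|_{\mathcal{M}}$ already for a single two-outcome measurement; your separation argument for $\pm u\in B_{\|\cdot\|_{\mathcal{M},*}}$ together with that bookkeeping remark disposes of this correctly.
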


\begin{proof}
See~\cite{VV dh}.
\end{proof}

\vspace{1ex}
\begin{rem}
From Lemma~\ref{unit ball dual d norm} it follows in particular that $\|\cdot\|_{\mathcal{M}}$ depends only on the set of measurements generated by $\mathcal{M}$ via coarse graining, and in fact only on the right-hand side of~\eqref{unit ball dual d norm eq}.
\end{rem}

At this point, the reader should be familiar enough with the body of techniques we have discussed so far, to be able to work out the translations of the other results in~\cite[Section 2]{VV dh} by herself. As for us, instead of reporting them, we believe it more appropriate to tell the antecedent story of data hiding in quantum mechanics, and to show in detail how the definitions can be generalised, as to encompass arbitrary GPTs.

\subsection{Data hiding and statement of the problem}

Throughout this section, a generalisation of the concept of data hiding against LOCC measurements in quantum mechanics as originally conceived in~\cite{dh original 1, dh original 2} is discussed. On the one hand, we will extend this notion to an arbitrary GPT, and on the other hand we will allow for data hiding against an arbitrary set of measurements, without any a priori assumption on its nature. We give the following definition.

\vspace{2ex}
\begin{Def} \label{dh}
Let $(V,C,u)$ be a GPT. For an informationally complete set of measurements $\mathcal{M} \subseteq \mathbf{M}$, we say that there is \emph{data hiding against $\mathcal{M}$ with efficiency $R\geq 1$} if there are two normalised states $\rho,\sigma\in\Omega$ and a real number $p\in [0,1]$ such that the probability of error defined in~\eqref{pr error} satisfies
\begin{equation}
P_{e}^{\mathbf{M}}(\rho,\sigma; p) = 0\, ,\qquad P_{e}^{\mathcal{M}}(\rho,\sigma; p) = \frac12 \left( 1-\frac1R\right) .
\label{dh eq}
\end{equation}
The highest data hiding efficiency against $\mathcal{M}$ is called \emph{data hiding ratio against $\mathcal{M}$} and will be denoted by $R(\mathcal{M})$.
\end{Def}

\vspace{1ex}
\begin{rem}
A priori, the only meaningful way to define $R(\mathcal{M})$ is as the supremum of all achievable data hiding efficiencies. However, we will see in a moment that this supremum is actually a maximum (Proposition~\ref{dh ratio}).
\end{rem}

In the above Definition~\ref{dh}, we chose not to restrict ourselves to equiprobable pairs of states. There are several reasons that justify this choice. On the one hand, it can be shown that any pair that exhibits data hiding with high efficiency is approximately equiprobable, the approximation becoming better and better for higher efficiencies. On the other hand, even considering only the case of exact equality $p=1/2$ from the start, the obtained data hiding ratio does not differ by more than a factor of two (additive constants apart) from that we have defined here. We devote Appendix~\ref{app equi} to exploring the consequences of restricting the definition of data hiding to the equiprobable case.

We now go back to the investigation of data hiding in the sense of Definition~\ref{dh}.
An elementary yet fruitful observation is that if a set of measurements exhibits data hiding with high efficiency, then the associated distinguishability norm -- given by~\eqref{d norm} -- has to be very different from the base norm. This is again a straightforward consequence of Lemma~\ref{discr GPT}. The following result establishes the converse, i.e. that if the two norms are very different on some vectors then there is highly efficient data hiding.

\vspace{2ex}
\begin{prop} \label{dh ratio}
For an informationally complete set of measurements $\mathcal{M}\subseteq\mathbf{M}$ in an arbitrary GPT, the data hiding ratio $R(\mathcal{M})$ is given by
\begin{equation}
R(\mathcal{M})\, =\, \max_{0\neq x\in V} \frac{\|x\|}{\|x\|_{\mathcal{M}}}\, .
\label{dh ratio eq}
\end{equation}
Adopting the terminology of~\cite{VV dh}, we can rephrase~\eqref{dh ratio eq} by saying that $R(\mathcal{M})$ is the constant of domination of $\|\cdot\|_{\mathcal{M}}$ on $\|\cdot\|$, i.e. the smallest $k\in \mathds{R}$ such that $\|\cdot\|\leq k\|\cdot\|_{\mathcal{M}}$.
\end{prop}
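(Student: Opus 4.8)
The plan is to show that the set of achievable data hiding efficiencies against $\mathcal{M}$ coincides with the set of ratios $\{\|x\|/\|x\|_{\mathcal{M}}:0\neq x\in V\}$, and then to argue that this set attains its supremum. The translation between the two pictures rests entirely on Lemma~\ref{discr GPT} together with the observation (property~(iv) after Definition~\ref{def d norm}) that $\|\cdot\|_{\mathbf{M}}$ is the base norm: for normalised states $\rho,\sigma$ and $p\in[0,1]$, writing $x\coloneqq p\rho-(1-p)\sigma$, one has $P_e^{\mathbf{M}}(\rho,\sigma;p)=\frac12(1-\|x\|)$ and $P_e^{\mathcal{M}}(\rho,\sigma;p)=\frac12(1-\|x\|_{\mathcal{M}})$.

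I would first dispatch the inequality $R(\mathcal{M})\leq\max_{0\neq x}\|x\|/\|x\|_{\mathcal{M}}$. Take any data hiding instance $(\rho,\sigma,p)$ with efficiency $R$ as in~\eqref{dh eq} and set $x=p\rho-(1-p)\sigma$. The requirement $P_e^{\mathbf{M}}=0$ forces $\|x\|=1$, in particular $x\neq 0$; the requirement $P_e^{\mathcal{M}}=\frac12(1-1/R)$ forces $\|x\|_{\mathcal{M}}=1/R$, which is strictly positive since $\mathcal{M}$ is informationally complete. Hence $R=\|x\|/\|x\|_{\mathcal{M}}$ is one of the ratios on the right-hand side of~\eqref{dh ratio eq}.

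For the reverse inequality I would start from an arbitrary $0\neq x$, rescaled so that $\|x\|=1$, and exhibit a pair realising efficiency $\|x\|/\|x\|_{\mathcal{M}}$. If $x\in C$ or $-x\in C$ this ratio equals $1$ (because $\|\cdot\|_{\mathcal{M}}$ agrees with the base norm on $\pm C$), and $R(\mathcal{M})\geq 1$ holds trivially anyway. Otherwise I would invoke Lemma~\ref{dual base} to write $x=x_+-x_-$ with $x_\pm\geq 0$ and $u(x_+)+u(x_-)=\|x\|=1$; the assumption $x\notin\pm C$ forces $x_+\neq 0$ and $x_-\neq 0$, whence strict positivity of $u$ gives $u(x_+),u(x_-)>0$, so $p\coloneqq u(x_+)\in(0,1)$ and $\rho\coloneqq x_+/u(x_+)$, $\sigma\coloneqq x_-/u(x_-)$ are bona fide normalised states with $p\rho-(1-p)\sigma=x$. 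The two identities above then give $P_e^{\mathbf{M}}(\rho,\sigma;p)=0$ and $P_e^{\mathcal{M}}(\rho,\sigma;p)=\frac12(1-\|x\|_{\mathcal{M}})$, i.e. data hiding with efficiency $\|x\|/\|x\|_{\mathcal{M}}$. This establishes $R(\mathcal{M})=\sup_{0\neq x}\|x\|/\|x\|_{\mathcal{M}}$.

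It remains to turn the supremum into a maximum. Since both $\|\cdot\|$ and $\|\cdot\|_{\mathcal{M}}$ are norms — the latter again by informational completeness — the map $x\mapsto\|x\|/\|x\|_{\mathcal{M}}$ is continuous on $V\setminus\{0\}$ and invariant under rescaling, so its supremum equals the maximum of the continuous function $\|\cdot\|$ over the compact set $\{x:\|x\|_{\mathcal{M}}=1\}$. Feeding the maximiser back through the construction of the previous paragraph shows that the supremum of the data hiding efficiencies is attained as well, as anticipated in the remark after Definition~\ref{dh}; and the ``constant of domination'' reformulation is then immediate from~\eqref{dh ratio eq}. The only points that demand care are the degenerate cases $x\in\pm C$ (equivalently $p\in\{0,1\}$), which must be segregated at the outset, and the reliance on Lemma~\ref{dual base} to guarantee a decomposition into positive parts of unit total mass; beyond this bookkeeping I do not foresee a genuine obstacle.
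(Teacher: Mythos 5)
Your proposal is correct and follows essentially the same route as the paper's own proof: both translate the problem via Lemma~\ref{discr GPT}, use Lemma~\ref{dual base} to show that every vector of unit base norm can be written as $p\rho-(1-p)\sigma$, and conclude by compactness of the relevant unit ball that the supremum is attained. Your splitting into two inequalities and the separate treatment of $x\in\pm C$ is just a cosmetic repackaging of the paper's observation that the set of such differences is exactly the base-norm unit ball.
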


\begin{proof}
Let the GPT $(V,C,u)$ have state space $\Omega$. Then, from Definition~\ref{dh} and from~\eqref{pr error} we see that
\bqq
R(\mathcal{M})\, =\, \sup \left\{ \|p\rho-(1-p)\sigma\|_{\mathcal{M}}^{-1}:\ \rho,\sigma\in\Omega,\ 0\leq p\leq 1,\ \|p\rho-(1-p)\sigma\|=1 \right\} .
\eqq
The crucial observation is that the set $K$ of vectors $x\in V$ that can be represented as $x=p\rho-(1-p)\sigma$ for appropriate $\rho,\sigma\in\Omega$ and $p\in[0,1]$ coincides with the unit ball of the base norm, i.e. $K=B_{\|\cdot\|}$. In order to prove this, start by observing that $\|p\rho-(1-p)\sigma\|\leq p\|\rho\|+(1-p)\|\sigma\|=p+(1-p)=1$, so $\|x\|\leq 1$ is a necessary condition for $x$ to belong to $K$. To see that it is also sufficient, notice that $K$ is convex by construction and $0\in K$, and therefore it suffices to consider the case $\|x\|=1$. With this hypothesis, Lemma~\ref{dual base} yields a decomposition $x=x_{+}-x_{-}$ such that $x_{\pm}\geq 0$ and $u(x_{+})+u(x_{-})=1$, from which it follows that $x_{+}=p\rho$ and $x_{-}=(1-p)\sigma$ for $p=u(x_{+})\in [0,1]$ and $\rho,\sigma\in \Omega$ normalised states. Thanks to this observation, we rewrite the above representation of $R(\mathcal{M})$ as
\bqq
R(\mathcal{M})\, =\, \sup\left\{\|x\|_{\mathcal{M}}^{-1}:\, \|x\|=1\right\}\, =\, \max_{x\neq 0}\, \frac{\|x\|}{\|x\|_{\mathcal{M}}}\, .
\eqq
For the last step, we used the positive homogeneity of the norms, and we converted the supremum over the (compact) unit ball of the base norm into a maximum.
\end{proof}

\vspace{2ex}
Some elementary properties of the data hiding ratio are as follows.

\vspace{2ex}
\begin{lemma} \label{elem dh}
Let $\mathcal{M}$ be an informationally complete set of measurements in an arbitrary GPT. Then:
\begin{itemize}
\item the data hiding ratio given in Definition~\ref{dh} satisfies $R(\mathcal{M})\geq 1$, with equality iff the set on the left hand side of~\eqref{unit ball dual d norm eq} coincides with the full interval $[-u,u]$;
\item $R(\mathcal{M})$ is monotonically non-increasing as a function of $\mathcal{M}$, where the partial order on sets of measurements is the one given by inclusion.
\end{itemize}
\end{lemma}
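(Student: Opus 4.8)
The plan is to read everything off the variational identity $R(\mathcal{M})=\max_{0\neq x\in V}\|x\|/\|x\|_{\mathcal{M}}$ of Proposition~\ref{dh ratio}, combined with the two elementary facts recorded just before that proposition: the distinguishability norm $\|\cdot\|_{\mathcal{M}}$ is monotone in $\mathcal{M}$ under inclusion (property (iii)), and $\|\cdot\|_{\mathbf{M}}$ coincides with the base norm $\|\cdot\|$ (property (iv)).

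For the first bullet, since $\mathcal{M}\subseteq\mathbf{M}$ I would first use monotonicity to get $\|x\|_{\mathcal{M}}\leq\|x\|_{\mathbf{M}}=\|x\|$ for every $x\in V$; hence $\|x\|/\|x\|_{\mathcal{M}}\geq 1$ for all $x\neq 0$, and maximising gives $R(\mathcal{M})\geq 1$. Equality $R(\mathcal{M})=1$ holds precisely when in addition $\|x\|/\|x\|_{\mathcal{M}}\leq 1$ for all $x\neq 0$, i.e. when $\|\cdot\|_{\mathcal{M}}\geq\|\cdot\|$; together with the inequality already established, this says exactly that $\|\cdot\|_{\mathcal{M}}$ equals the base norm. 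Finally, two norms on a finite-dimensional space coincide if and only if their dual unit balls do; by Lemma~\ref{unit ball dual d norm} the dual unit ball of $\|\cdot\|_{\mathcal{M}}$ is the set appearing on the left-hand side of~\eqref{unit ball dual d norm eq}, while by Definition~\ref{def base} that of the base norm is $[-u,u]$. So $R(\mathcal{M})=1$ is equivalent to this left-hand side being all of $[-u,u]$, as claimed.

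For the second bullet, consider informationally complete sets $\mathcal{M}_1\subseteq\mathcal{M}_2$; note that informational completeness of $\mathcal{M}_1$ automatically forces that of the larger $\mathcal{M}_2$, so both data hiding ratios are defined via~\eqref{dh ratio eq}. Monotonicity of the distinguishability norm again yields $\|x\|_{\mathcal{M}_1}\leq\|x\|_{\mathcal{M}_2}$ for all $x$, whence $\|x\|/\|x\|_{\mathcal{M}_1}\geq\|x\|/\|x\|_{\mathcal{M}_2}$ for all $x\neq 0$; maximising over $x$ gives $R(\mathcal{M}_1)\geq R(\mathcal{M}_2)$, which is the asserted non-increasing dependence on $\mathcal{M}$.

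I do not expect any real obstacle: the lemma is a purely formal consequence of the expression for $R(\mathcal{M})$ and of the monotonicity of $\|\cdot\|_{\mathcal{M}}$. The only step deserving a line of care is the equality case of the first bullet, where one must combine the ``$\leq$'' inequality coming from the maximum being $1$ with the opposite ``$\geq$'' inequality coming from $\mathcal{M}\subseteq\mathbf{M}$ to conclude that the two norms are \emph{literally} equal, and then translate this identity of norms into the stated geometric condition on $[-u,u]$ by passing to dual unit balls via Lemma~\ref{unit ball dual d norm}.
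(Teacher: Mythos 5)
Your proof is correct, and since the paper states Lemma~\ref{elem dh} without proof, as an elementary consequence of Proposition~\ref{dh ratio}, your argument is precisely the intended one: monotonicity of $\|\cdot\|_{\mathcal{M}}$ in $\mathcal{M}$ together with $\|\cdot\|_{\mathbf{M}}=\|\cdot\|$ yields both bullets, and the equality case is settled by passing to dual unit balls via Lemma~\ref{unit ball dual d norm} and Definition~\ref{def base}. No gaps to report.
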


\vspace{2ex}
As expected, not much can be said about data hiding ratios for a single system and when the sets of measurements are completely arbitrary. In fact, it is easy to show that already in a classical GPT (Example~\ref{ex class}) such as $(\mathds{R}^2, \mathds{R}^2_+, u)$ with $u=(1,1)$ such that $u(x,y)\coloneqq x+y$, for the particular case when $\mathcal{M}$ is made of just one measurement $\big((\varepsilon,0),\, (1-\varepsilon, 1)\big)$ we have $R(\mathcal{M})=(2-\varepsilon)/\varepsilon$, so that the data hiding ratio can even be unbounded in a system of fixed dimension.

The situation changes dramatically when we consider bipartite systems, as originally done in the context of data hiding. Let us discuss first bipartite quantum systems. Here, some restricted sets of measurements come into play quite naturally as deriving from operational constraints. Examples of such sets include local operations (LO), local operations assisted by shared randomness (LOSR) or one-way classical communication ($\text{LOCC}_{\rightarrow}$), and general LOCC protocols when both communication directions are allowed. It is also convenient to introduce mathematical relaxations of these classes. For instance, consider the set of measurements $(e_{i})_{i\in I}$ such that $E_{i}$ is a separable positive operators for all $i\in I$ (i.e. it belongs to the quantum mechanical equivalent of~\eqref{minimal} as given by~\eqref{separable}). We call these measurements {\it separable}, and denote them collectively as SEP. It is easy to see that $\text{LOCC}\subseteq\text{SEP}$, and less trivially the inclusion can be shown to be strict.

Until now we have discussed only quantum theory. Perhaps surprisingly, it turns out that all these restricted classes of measurements can be defined in any bipartite arbitrary GPT $AB=(V_A\otimes V_B,\, C_{AB},\, u_A\otimes u_B)$ constructed out of two local theories $A$ and $B$ in such a way that the constraints~\eqref{CAB bound} are met. Before we provide general definitions below, let us briefly discuss how to add {\it dynamical prescriptions} to our structure. The purpose of these rules is to specify how states transform after a measurement, extending the picture we have been describing so far, mostly orientated towards the outcomes and their probabilities. In practice, we will not make use of these prescriptions, and in fact our results are totally independent of any assumption concerning them beyond the mere consistency with the operational interpretation of the theory. However, this apparatus is needed to define a generic LOCC protocol, which requires multiple, interactive rounds of operations on the same systems.

Following~\cite{Davies 1970}, we can define \emph{instruments} on one of the two system, say $A=(V_A, C_A, u_A)$, as collections $(\phi_i)_{i\in I}$ of linear maps $\phi_i: V_A\rightarrow V_A$ that are \emph{completely positive}, i.e. satisfy $\left((\phi_i)_A \otimes I_B\right)(C_{AB})\subseteq C_{AB}$, and sum up to a normalisation-preserving map, i.e. $\sum_{i\in I} \phi_{i}^*(u_A) = u_A$, with $\phi_i^*: V_A^*\rightarrow V_A^*$ being the dual (or transpose) of $\phi_i$. A totally analogous definition can be given for instruments on the $B$ system. In the operational interpretation of the theory, an instrument describes a non-destructive measurement, with $\phi_i(\rho)$ representing the unnormalised post-measurement state when the outcome $i$ has been recorded on the initial state $\rho$, and the normalisation coefficient $u\left(\phi_{i}(\rho) \right) = \left(\phi_{i}^*(u)\right)(\rho)$ being the probability that the process yields the outcome $i$ (accordingly, observe that $\left(\phi_i^*(u)\right)_{i\in I}$ is a valid measurement in the GPT sense). 

With the concept of instrument at hand, in order to define LOCC protocols we can follow the steps described in~\cite[Section 2.2]{LOCC}. We will not repeat the construction here since it is totally analogous to the quantum mechanical one, once the concept of instrument in GPTs has been clarified.

\vspace{2ex}
\begin{Def} \label{locally constr}
Let $A=(V_{A}, C_{A}, u_{A})$ and $B=(V_{B}, C_{B}, u_{B})$ be two GPTs, and let the composite system \mbox{$AB=(V_A\otimes V_B,\, C_{AB},\, u_A\otimes u_B)$} satisfy~\eqref{CAB bound}. Then local operations (LO), local operations assisted by one-way classical communication ($\text{LOCC}_{\rightarrow}$) or two-way classical communication ($\text{LOCC}$), and separable measurements ($\text{SEP}$) are subsets of the set $\mathbf{M}_{AB}$ of all measurements on $AB$ given by:
\begin{align}
\text{\emph{LO}}\, &\coloneqq\, \left\langle \left\{ (e_{i}\otimes f_{j})_{(i,j)\in I\times J}:\ (e_{i})_{i\in I}\in\mathbf{M}_{A},\ (f_{j})_{j\in J}\in\mathbf{M}_{B} \right\}\right\rangle , \label{LO} \\[1ex]
\text{\emph{LOCC}}_{\rightarrow}\, &\coloneqq\, \left\langle\left\{ (e_{i}\otimes f_{j}^{(i)})_{(i,j)\in I\times J}:\ (e_{i})_{i\in I}\in\mathbf{M}_{A},\ (f_{j}^{(i)})_{j\in J}\in\mathbf{M}_{B}\ \forall\ i\in I \right\}\right\rangle , \label{1-way LOCC} \\[1ex]
\text{\emph{LOCC}}\, &\coloneqq\, \left\{ \left( \Phi_i^*(u_A\otimes u_B) \right)_{i\in I}:\ \text{$\left( \Phi_i \right)_{i\in I}$ LOCC instrument on $AB$} \right\} \\[1ex]
\text{\emph{SEP}}\, &\coloneqq\, \left\{ (E_i)_{i\in I} \in \mathbf{M}_{AB}:\ E_i\in C_A^* \tminit C_B^*\ \forall\ i \right\} . \label{SEP}
\end{align}
Here, $\langle\cdot\rangle$ denotes coarse graining as defined by~\eqref{coarse}. The above sets will be collectively called \emph{locally constrained sets of measurements}.
\end{Def}

\vspace{2ex}
It is easy to verify that
\begin{equation}
\text{LO}\, \subseteq\, \text{LOCC}_{\rightarrow}\, \subseteq\, \text{LOCC}\, \subseteq\, \text{SEP}\, .
\label{chain M}
\end{equation}
The last inclusion is slightly less trivial than the others, but its proof follows closely the quantum mechanical one. Namely, referring for details and nomenclature to~\cite{LOCC}, one can observe that: (i) one-way local instruments are separable, in the sense that each component is a positive sum of tensor products of completely positive maps; (ii) coarse-graining preserves separability; (iii) an instrument that is LOCC-linked to a separable one is again separable; (iv) separability is preserved under limits; and finally (v) if $(\Phi_i)_{i\in I}$ (acting on $AB$) is separable as an instrument, $\left( \Phi_i(u_A\otimes u_B) \right)_{i\in I}$ is separable as a measurement.

\vspace{1ex}
\begin{rem}
From an operational point of view, there is at least another notable set of locally constrained measurements that is worth mentioning. This is the set of local operations assisted by shared randomness, formally defined as
\begin{equation}
\text{LOSR}\, \coloneqq\, \left\langle\left\{ \big(p_{i}\, e^{(i)}_{j}\otimes f^{(i)}_{k}\big)_{(i,j,k)\in I\times J\times K}:\ I\ \text{finite},\ p\in\mathcal{P}(I),\ \big(e^{(i)}_{j}\big)_{j\in J}\in\mathbf{M}_{A},\ \big(f_{k}^{(i)}\big)_{k\in K}\in\mathbf{M}_{B}\ \forall\ i\in I \right\}\right\rangle , \label{LOSR}
\end{equation}
where $\mathcal{P}(I)$ stands for the the set of probability distributions on a finite alphabet $I$. It is very easy to see that $\text{LOSR}$ lies between $\text{LO}$ and $\text{LOCC}_\rightarrow$, i.e.
\begin{equation}
\text{LO}\, \subseteq\, \text{LOSR}\, \subseteq\, \text{LOCC}_{\rightarrow}\, .
\end{equation}
The reason why we did not include this additional set in the above definition is that elements of $\text{LOSR}$ are convex combinations of local measurements, hence it is easy to check that
\begin{equation}
\|\cdot\|_{\text{LOSR}}\, =\, \|\cdot\|_{\text{LO}}\, .
\label{LOSR = LO norm}
\end{equation}
In other words, once they are restricted to local measurements, providing the parties with additional shared randomness does not enhance their distinguishability power.
\end{rem}

\vspace{1ex}
\begin{rem}
Let us stress here that $\text{LOCC}$ is the only locally constrained set of measurements that depends explicitly on the choice of the positive cone $C_{AB}$ of the bipartite system. In fact, it is easy to realise that only the structure of the local GPTs matters in~\eqref{LO},~\eqref{1-way LOCC}, and~\eqref{SEP} (as well as in~\eqref{LOSR}, incidentally). Following the discussion before Definition~\ref{locally constr}, we see that the dependence of $\text{LOCC}$ on $C_{AB}$ is hidden inside the concept of completely positive map, which is in turn necessary to define local instruments.
\end{rem}

An important feature of locally constrained sets of measurements is informational completeness. This is a consequence of the {\it local tomography principle}, valid in an arbitrary bipartite GPT and stating that the statistics of local measurements contain enough information to determine the global state completely. In order to prove this elementary fact, consider a local measurement $(e_{i}\otimes f_{j})_{(i,j)\in I\times J}\in \text{LO}$ such that $\text{span}\{e_{i}\}_{i\in I}=V_{A}^{*}$ and $\text{span}\{f_{j}\}_{j\in J}=V_{B}^{*}$. Then, obviously, $\text{span}\{e_{i}\otimes f_{j}\}_{(i,j)\in I\times J}=V_{A}^{*}\otimes V_{B}^{*}$, which yields the claim.

As we said, our primary interest lies in understanding data hiding against those restricted sets of measurements, whose corresponding constraints have an operational nature. In this context, the above locally constrained sets of measurements are thus excellent candidates, and in fact the rest of the present paper is devoted to the study of the data hiding ratios $R(\mathcal{M})$, with $\mathcal{M}=\text{LO},\, \text{LOCC}_\rightarrow,\, \text{LOCC},\, \text{SEP}$.
As an preliminary observation, note that~\eqref{chain M} and Lemma~\ref{elem dh} imply that
\bq
R(\text{SEP})\, \leq\, R(\text{LOCC})\, \leq\, R(\text{LOCC}_{\rightarrow})\, \leq\, R(\text{LO})
\label{chain dh}
\eq
for all fixed GPTs.
Among the many questions one could ask at this point, one seems particularly relevant to us. Namely, we can wonder, how the best data hiding ratio {\it scales} with the dimensions of the local GPTs. To be more precise, we give the following definition.

\vspace{2ex}
\begin{Def}[Ultimate data hiding ratio] \label{univ dh ratio}
For a locally constrained set of measurements $\mathcal{M}$, the \emph{ultimate data hiding ratio against $\mathcal{M}$} for fixed local dimensions, denoted by $R_{\mathcal{M}}(d_{A},d_{B})$, is the supremum over all data hiding ratios $R(\mathcal{M})$ achieved by composite GPTs that satisfy~\eqref{CAB bound} and have local dimensions $d_{A},d_{B}$.
\end{Def}

\vspace{2ex}
With this concept in hand, we are ready to formulate the question lying at the heart of our investigation, namely, {\it what is the scaling of the ultimate ratio $R_\mathcal{M}(d_{A}, d_{B})$ with the local dimensions $d_{A},d_{B}$?} Clearly, thanks to the chain of inequalities~\eqref{chain dh}, we find
\bq
R_{\text{SEP}}(d_{A},d_{B})\, \leq\, R_{\text{LOCC}}(d_A,d_B)\, \leq\, R_{\text{LOCC}_{\rightarrow}}(d_{A},d_{B})\, \leq\, R_{\text{LO}}(d_{A},d_{B})\, .
\label{chain univ dh}
\eq

We stress that the supremum in Definition~\ref{univ dh ratio} has to be taken over {\it all} local GPTs of the given dimensions, and over all {\it composition rules} to join the system (i.e. among all the global cones respecting the bounds~\eqref{CAB bound}). Now, we will show that at least this latter maximisation can be carried out explicitly when $\mathcal{M}$ is a locally constrained set of measurements different from $\text{LOCC}$, the optimal composite being always given by the {\it minimal} tensor product. To see why, notice that the exclusion of $\text{LOCC}$ implies that for a fixed $X\in V_A\otimes V_B$ only the global base norm $\|X\|$ depends on the composition rule we chose. On the contrary, the locally constrained norm $\|\cdot\|_{\mathcal{M}}$ ($\mathcal{M}\neq \text{LOCC}$) will depend on the local structure only, as already observed. Then, maximising the ratio between the former and the latter amounts to maximising the global base norm. In order to do so, a large set of global effects and thus a small set of states are required, and according to~\eqref{CAB bound} the smallest possible positive cone in a bipartite system is given by the minimal tensor product.

The above reasoning is perhaps not obvious from Definition~\ref{dh} alone, because restricting the set of available bipartite states gives less freedom in choosing the data hiding pair. However, this restriction plays no role once Proposition~\ref{dh ratio} is available. This way around the problem is made possible by the fact that any difference of two normalised states can be thought of as a positive multiple of the difference of two separable states, the multiplication coefficient being given by the base norm induced by the minimal tensor product. We summarise this whole discussion stating the following result, whose proof reproduces the informal argument presented above.

\vspace{2ex}
\begin{prop} \label{min is optimal}
Given two local GPTs $A=(V_A,C_A,u_A)$, $B=(V_B,C_B,u_B)$, and a locally constrained set of measurements $\mathcal{M}\neq \text{\emph{LOCC}}$, the maximal data hiding ratio against $\mathcal{M}$ is achieved when the bipartite GPT is constructed according to the minimal tensor product, i.e. $AB=A\tminit B$.
\end{prop}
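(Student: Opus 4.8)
The plan is to make rigorous the informal argument preceding the statement, using Proposition~\ref{dh ratio} to reduce everything to a comparison of norms. Fix the local GPTs $A=(V_A,C_A,u_A)$ and $B=(V_B,C_B,u_B)$, and let $\mathcal{M}\neq\text{LOCC}$ be one of the locally constrained sets $\text{LO}$, $\text{LOCC}_\rightarrow$, or $\text{SEP}$. The key observation, already noted in the Remark after Definition~\ref{locally constr}, is that for such $\mathcal{M}$ the distinguishability norm $\|\cdot\|_\mathcal{M}$ on $V_A\otimes V_B$ depends only on the local data $A,B$ and \emph{not} on the choice of global cone $C_{AB}$ (it is defined purely through tensor products of local effects, possibly with classical side information and coarse graining). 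By contrast, the base norm $\|\cdot\|_{AB}$ \emph{does} depend on $C_{AB}$, via the dual formula of Lemma~\ref{dual base}: $\|X\|_{AB}=\min\{u(X_+)+u(X_-):X_\pm\in C_{AB},\ X=X_+-X_-\}$.

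The second step is to show that among all admissible composites, i.e. all closed convex cones $C_{AB}$ with $C_A\tmin C_B\subseteq C_{AB}\subseteq C_A\tmax C_B$, the base norm is pointwise maximised by the choice $C_{AB}=C_A\tmin C_B$. This is immediate from the dual formula: a smaller cone $C_{AB}$ admits fewer decompositions $X=X_+-X_-$ with $X_\pm\geq 0$, hence the minimum defining $\|X\|_{AB}$ is taken over a smaller (in fact a subset) feasible set, so its value can only increase. Since $C_A\tmin C_B$ is the smallest admissible cone by~\eqref{CAB bound}, we get $\|X\|_{A\tminitfoot B}\geq\|X\|_{AB}$ for every $X\in V_A\otimes V_B$ and every admissible $C_{AB}$. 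One should also check that $C_A\tmin C_B$ is a legitimate choice, i.e. that the triple $A\tmin B$ from~\eqref{min max theories} is a genuine GPT: closedness and convexity are recalled in the excerpt just after~\eqref{minimal}, saliency and the generating property follow from the corresponding properties of $C_A$ and $C_B$, and $u_A\otimes u_B$ lies in the interior of $(C_A\tmin C_B)^*=C_A^*\tmax C_B^*$ because $u_A\in\text{int}(C_A^*)$ and $u_B\in\text{int}(C_B^*)$.

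The third step assembles the conclusion via Proposition~\ref{dh ratio}. For a fixed admissible $C_{AB}$ we have
\begin{equation*}
R_{AB}(\mathcal{M})\, =\, \max_{0\neq X\in V_A\otimes V_B}\frac{\|X\|_{AB}}{\|X\|_\mathcal{M}}\, \leq\, \max_{0\neq X}\frac{\|X\|_{A\tminitfoot B}}{\|X\|_\mathcal{M}}\, =\, R_{A\tminitfoot B}(\mathcal{M})\, ,
\end{equation*}
where we used that $\|\cdot\|_\mathcal{M}$ is the same in both bipartite GPTs (it depends only on $A$ and $B$) together with the pointwise inequality from step two. Since $A\tmin B$ is itself an admissible composite, this shows that the supremum of $R_{AB}(\mathcal{M})$ over all admissible $C_{AB}$ is attained at $C_{AB}=C_A\tmin C_B$, which is exactly the assertion.

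I do not expect a serious obstacle here: the argument is essentially a monotonicity observation dressed in the language of Proposition~\ref{dh ratio}. The one point requiring a little care — and the natural place for a slip — is the cone-independence of $\|\cdot\|_\mathcal{M}$ for $\mathcal{M}=\text{SEP}$: one must confirm that the separable effects $C_A^*\tmin C_B^*$ and hence the set $\langle\text{SEP}\rangle$ appearing in Lemma~\ref{unit ball dual d norm} genuinely make no reference to $C_{AB}$, so that the characterisation~\eqref{d norm altern} of $\|\cdot\|_\mathcal{M}$ is insensitive to the composition rule. (For $\text{LO}$ and $\text{LOCC}_\rightarrow$ this is manifest from~\eqref{LO} and~\eqref{1-way LOCC}.) Once that is granted, the rest is the two-line computation above.
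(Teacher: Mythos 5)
Your proposal is correct and follows essentially the same route as the paper: cone-independence of $\|\cdot\|_{\mathcal{M}}$ for $\mathcal{M}\neq\text{LOCC}$, pointwise monotonicity of the base norm under enlargement of $C_{AB}$, and the conclusion via Proposition~\ref{dh ratio}. The only (harmless) difference is that you derive the monotonicity from the dual decomposition formula of Lemma~\ref{dual base}, whereas the paper argues through the primal formula~\eqref{base norm} and the reversed inclusion of the dual cones; your added check that $A\tminit B$ is a legitimate GPT is a point the paper leaves implicit.
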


\begin{proof}
Let us denote by $\|\cdot\|_{AB}$ the global base norm, whose dependence from the choice of the GPT $AB$ (i.e. of the bipartite cone $C_{AB}$) has been made explicit. It is always implicitly assumed that $C_{AB}$ obeys the two-sided bound in~\eqref{CAB bound}.
As is easy to verify, for all fixed $x\in V_{A}\otimes V_B$ the norm $\|x\|_{AB}$ is monotonically non-increasing as a function of $C_{AB}$, in the sense that
\begin{equation*}
C_{AB}\subseteq \widetilde{C}_{AB}\quad\Longrightarrow\quad \|x\|_{AB}\geq \|x\|_{\widetilde{AB}}\, .
\end{equation*}
To see why, it suffices to go back to the definition of base norm and to observe that: (i) since taking the dual reverses the inclusions, if $C_{AB}\subseteq \widetilde{C}_{AB}$ then $\widetilde{C}_{AB}^*\subseteq C_{AB}^*$; (ii) from this it follows that $[0,u_{AB}]_{\widetilde{C}_{AB}^*}\subseteq [0,u_{AB}]_{C^*_{AB}}$, where $[a,b]_{K} \coloneqq \{c:\ b-c, c-a\in K \}$ denotes the interval according to the ordering determined by a cone $K$; and finally (iii) the base norm can be written as a maximisation over the interval $[0,u]_{C^*}$, as detailed in~\eqref{base norm}, hence one finds $\|x\|_{\widetilde{AB}}\leq \|x\|_{AB}$, as claimed.

An immediate corollary of this inequality is that
\bq
\|x\|_{AB} \leq\, \|x\|_{A \tminfoot B}
\label{Cmin largest norm}
\eq
for all $x\in V_A \otimes V_B$ and for all admissible composites $AB$ whose corresponding cones $C_{AB}$ satisfy~\eqref{CAB bound}.

Now that dependence of the global base norm on the choice of the positive cone has been addressed, we can turn our attention to the other object appearing in the formula~\eqref{dh ratio eq} for computing data hiding ratios, i.e. the distinguishability norm $\|\cdot\|_{\mathcal{M}}$. We already observed that for the locally constrained sets of measurements $\mathcal{M}= \text{LO},\,\text{LOCC}_\rightarrow,\, \text{SEP}$, the distinguishability norm $\|\cdot\|_{\mathcal{M}}$ does not actually depend on $C_{AB}$. Therefore, 
\begin{equation*}
R_{AB}(\mathcal{M})\, =\, \max_{0\neq x\in V_A\otimes V_B} \|x\|_{AB}\big/ \|x\|_{\mathcal{M}}\, \leq\, \max_{0\neq x\in V_A\otimes V_B} \|x\|_{A \tminfoot B} \Big/ \|x\|_{\mathcal{M}}\, =\, R_{A \tminfoot B}(\mathcal{M})\, ,
\end{equation*}
where the dependence of the data hiding ratio on the GPT has been made explicit. The data hiding ratio $R(\mathcal{M})$ is thus maximised by the minimal tensor product $A\tmin B$, as claimed.
\end{proof}

\vspace{1ex}
\begin{rem}
Some intuitive understanding of Proposition~\ref{min is optimal} can be gained by looking at the opposite case, i.e. when the composite system is formed via the maximal tensor product. When $AB=A \tmax  B$, in fact, the global base norm coincides with the separability norm. This is ensured by the fact that every allowed effect within this theory is automatically separable.
\end{rem}

\section{Examples: quantum mechanics and spherical model} \label{sec ex}

Throughout this section, we will investigate from the point of view of data hiding two well-known examples of GPTs, namely quantum mechanics and the so-called spherical model. Besides seeing all the mathematical machinery of GPTs in action in some concrete case, the purpose of doing so is twofold. First of all, the GPTs we chose to look into, especially quantum mechanics, are interesting on their own and deserve a complete solution. Second of all, computing the data hiding ratio against a locally constrained set of measurements for a specific case still yields a general lower bound on the maximal value achievable for fixed local dimensions.

\vspace{2ex}
\begin{ex}\label{ex QM}
Let us start by describing $n$-level quantum mechanics as a GPT. As is well-known, the cone of states is composed of the positive semidefinite $n\times n$ matrices (collectively denoted by $\text{PSD}_{n}$), embedded in the real space of hermitian matrices (called $\mathcal{H}_{n}$) whose real dimension is $d=n^2$. Since the density matrices are the positive matrices with trace one, the unit effect is easily seen to coincide with the trace. Therefore, we will write symbolically
\begin{equation}
\text{QM}_{n} \coloneqq \left( \mathcal{H}_{n},\, \text{PSD}_{n},\, \Tr \right) ,
\label{quantum}
\end{equation}
remembering that
\begin{equation}
\dim \text{QM}_{n}\, =\, n^{2}\, .
\label{dim quantum}
\end{equation}
Observe that the positive semidefinite cone is self-dual, i.e. $\text{PSD}_{n}^{*}=\text{PSD}_{n}$. The base norm in quantum mechanics can be proved to coincide with the trace norm $\|X\|_{1} = \Tr |X| =\sum_{i=1}^{n}|\lambda_{i}(X)|$, where $\lambda_{i}(X)$ are the eigenvalues of $X\in\mathcal{H}_{n}$ and the last equality holds because $X$ is hermitian.

Now, let us discuss the composition rules for bipartite systems. Using the definitions~\eqref{minimal} and~\eqref{maximal}, we see that
\begin{align}
\text{PSD}_{n_{A}}\tmin \text{PSD}_{n_{B}}\, &=\, \Big\{ \sum\nolimits_{i\in I} P_{i}\otimes Q_{i} :\ I\ \text{finite},\ P_{i}\in\text{PSD}_{n_{A}},\, Q_{i}\in \text{PSD}_{n_{B}}\ \forall\ i\in I \Big\}\, , \label{separable} \\[1ex]
\text{PSD}_{n_{A}}\tmax  \text{PSD}_{n_{B}}\, &=\, \left\{ W\in \mathcal{H}_{n_{A}n_{B}}:\ \Tr[ P\otimes Q\, W ]\geq 0\quad \forall\ P\in\text{PSD}_{n_{A}},\, Q\in \text{PSD}_{n_{B}} \right\} . \label{witnesses} \end{align}
In quantum information, elements of~\eqref{witnesses} are variously called entanglement witnesses, separability witnesses or block-positive operators. This latter name comes from the fact that in~\eqref{witnesses} we can restrict $P$ and $Q$ to be pure states (i.e. rank-one projectors), and the defining condition for belonging to the set amounts to impose the positivity of the diagonal block(s) in all product bases. Interestingly enough, Nature has a preferred choice for the cone of bipartite states, which is neither the maximal nor the minimal tensor product. Instead, if $A=\text{QM}_{n_{A}}$ and $B=\text{QM}_{n_{B}}$ then $AB=\text{QM}_{n_{A}n_{B}}$, i.e.
\begin{equation}
\text{PSD}_{n_{A}}\tmin \text{PSD}_{n_{B}}\, \subsetneq\, C_{AB}\, =\, \text{PSD}_{n_{A}n_{B}}\, \subsetneq\, \text{PSD}_{n_{A}}\tmax  \text{PSD}_{n_{B}}\, .
\label{cone bipartite quantum}
\end{equation} 
While this is clearly not optimal in the sense of data hiding (because of Proposition~\ref{min is optimal}), it deserves special attention because of its prime importance in physics. Toward the end of this example, we will also look into a modified version of quantum mechanics designed to encompass the minimal tensor product rule suggested by Proposition~\ref{min is optimal}.

The computation of data hiding ratios in quantum mechanics has been the subject of many papers, whose main results we summarise briefly. The original example of a data hiding pair involves the normalised projectors onto the symmetric and antisymmetric subspace in $\mathds{C}^{n}\otimes\mathds{C}^{n}$, denoted by $\rho_{S}$ and $\rho_{A}$, respectively~\cite{dh original 1, dh original 2}. While $\|\rho_{S}-\rho_{A}\|_{1}=2$ because the two states have orthogonal support, it can be shown that $\|\rho_{S}-\rho_{A}\|_{\text{LOCC}}=2/(n+1)$~\cite{dh original 1, dh original 2}. The fact that the two states are mixed is crucial for this construction to work, as it can be shown that for pure states trace norm and LOCC norm always coincide~\cite{no dh pure 1, no dh pure 2}. In general, from~\cite{VV dh Chernoff, VV dh} it is known that
\bq
\frac{\min\{n_A, n_B\}+1}{2}\, \leq\, R_{QM}(\text{SEP})\, \leq\, R_{QM}(\text{LOCC})\, \leq\, R_{QM}(\text{LO})\, \leq\, \sqrt{153\, n_A n_B}\, , \label{VV bound 1}
\eq
while for $R_{QM}(\text{SEP})$ the tighter bound
\bq
\frac{\min\{n_A, n_B\}+1}{2}\, \leq\, R_{QM}(\text{SEP})\, \leq\, \sqrt{n_A n_B} \label{VV bound 2}
\eq
is available. As shown in~\cite[Corollary 17]{VV dh}, the above relations solve the problem of determining the optimal scaling in $n$ of all the data hiding ratios against locally constrained measurements when the subsystems have equal dimensions $n_A=n_B=n$. Instead, a problem arises when $n_A$ and $n_B$ are very different and thus $\min\{n_A,n_B\} \ll \sqrt{n_A n_B}$. In this case, the leftmost and rightmost side of~\eqref{VV bound 1} are no longer of the same order of magnitude, and an alternative argument has to be designed.

This scaling problem is somehow mitigated by~\cite[Lemma 20]{Brandao area law}, which implies that
\bq
R_{QM}(\text{LOCC})\, \leq\, R_{QM} (\text{LO})\, \leq\, \min\{n_A^2,\, n_B^2\}\, .
\eq
Although this upper bound behaves better than that in~\eqref{VV bound 1} when $n_A$ and $n_B$ are very different from each other, its quadratic nature prevents us from determining -- for instance -- the exact scaling of the operationally relevant data hiding ratio against LOCC protocols.

Here we provide a simple reasoning that shows that in fact $O(\min\{n_A,n_B\})$ is still an upper bound for $R_{QM}(\text{LOCC})$ (and hence for $R_{QM}(\text{SEP})$, too). Furthermore, our reasoning yields much better constants for both the leftmost and the rightmost side of~\eqref{VV bound 1} (where $R_{QM}(\text{LO})$ is excluded, though). In fact, these constants are so close to being optimal that we are even able to compute $R_{QM}(\text{SEP})$ {\it exactly} when $n_A=n_B=n$.

\vspace{2ex}
\begin{thm}[Teleportation argument] \label{dh QM}
For a bipartite quantum mechanical system with Hilbert space $\mathds{C}^{n_A}\otimes \mathds{C}^{n_B}$, define $n\coloneqq\min\{n_A,n_B\}$. Then the data hiding ratios against separable and  $\text{LOCC}_{(\rightarrow)}$ protocols satisfy
\bq
n\, \leq\, R_{QM}(\text{\emph{SEP}})\, \leq\, R_{QM}(\text{\emph{LOCC}})\, \leq\, R_{QM}(\text{\emph{LOCC}}_{\rightarrow})\, \leq\, 2n-1\, ,
\eq
where the communication direction in $\text{LOCC}_{\rightarrow}$ is from the smaller to the larger subsystem. Moreover, if $n_A=n_B=n$ then $R_{\text{\emph{QM}}}(\text{\emph{SEP}})=n$.
\end{thm}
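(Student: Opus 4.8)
The plan is to treat the three nontrivial bounds separately, throughout using Proposition~\ref{dh ratio} to replace $R(\mathcal{M})$ by the quotient $\max_{x\neq0}\|x\|/\|x\|_{\mathcal{M}}$, where $\|\cdot\|$ is the trace norm (Example~\ref{ex QM}). The two inner inequalities $R_{QM}(\text{SEP})\le R_{QM}(\text{LOCC})\le R_{QM}(\text{LOCC}_{\rightarrow})$ are immediate from the inclusions~\eqref{chain M} and the monotonicity of $R(\cdot)$ recorded in Lemma~\ref{elem dh}. Throughout I assume, without loss of generality, $n_A=n\le n_B$, so that the communication in $\text{LOCC}_{\rightarrow}$ runs $A\to B$.

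\emph{Lower bound $n\le R_{QM}(\text{SEP})$.} Take $x=F$, the operator on $\mathds{C}^{n_A}\otimes\mathds{C}^{n_B}$ acting as the flip on an $n\times n$ corner and as zero elsewhere. Then $|F|$ is the projector onto that $n^2$-dimensional corner, so $\|F\|_1=n^2$. Moreover $F$ is block-positive, i.e. $\langle a\otimes b|\,F\,|a\otimes b\rangle\ge 0$ for all product vectors, hence $\Tr(ZF)\ge 0$ for every $Z$ in the separable cone. By Lemma~\ref{unit ball dual d norm}, $\|F\|_{\text{SEP}}$ is the supremum of $\Tr\big((2E-I)F\big)$ over two-outcome separable measurements $(E,I-E)$; for such $E$ both $\Tr(EF)$ and $\Tr((I-E)F)$ are nonnegative, so $0\le\Tr(EF)\le\Tr(F)=n$ and therefore $|\Tr((2E-I)F)|\le n$. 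Thus $\|F\|_{\text{SEP}}\le n$ and $R_{QM}(\text{SEP})\ge\|F\|_1/\|F\|_{\text{SEP}}\ge n$.

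\emph{Upper bound $R_{QM}(\text{LOCC}_{\rightarrow})\le 2n-1$: the teleportation argument.} Fix $X\in\mathcal{H}_{n_An_B}$ and adjoin a Schmidt-rank-$n$ maximally entangled state $\Phi_n$ on an ancilla pair $A_0B_0$, with $A_0$ ($n$-dimensional) at Alice's side and $B_0$ ($n$-dimensional) at Bob's. Quantum teleportation provides a one-way protocol: Alice performs a generalized Bell measurement on $AA_0$, forwards the $n^2$-valued outcome to Bob, who applies the corresponding Weyl correction on $B_0$; afterwards Bob holds a faithful (local-unitary) copy of the entire $A$-register alongside $B$ and can carry out the optimal global discrimination. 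This gives $\|X\otimes\Phi_n\|_{\text{LOCC}_{\rightarrow}}\ge\|X\|_1$, and since the reverse bound is the base-norm inequality, $\|X\otimes\Phi_n\|_{\text{LOCC}_{\rightarrow}}=\|X\|_1$. Next write $\Phi_n=n\sigma-(n-1)\omega$, where $\omega=\tfrac{I-\Phi_n}{n^2-1}$ and $\sigma=\tfrac1n\big[\Phi_n+(n-1)\omega\big]=\tfrac1{n(n+1)}I+\tfrac1{n+1}\Phi_n$: both $\sigma$ and $\omega$ are isotropic states whose overlap with $\Phi_n$ is at most $\tfrac1n$, hence both are separable. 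Since a fixed shared separable state can be prepared for free under $\text{LOCC}_{\rightarrow}$ (Alice samples the classical label, prepares her marginal, forwards the label, and Bob prepares his), one has $\|X\otimes\tau\|_{\text{LOCC}_{\rightarrow}}=\|X\|_{\text{LOCC}_{\rightarrow}}$ for every separable $\tau$; combining this with the triangle inequality yields
\[
\|X\|_1=\|X\otimes\Phi_n\|_{\text{LOCC}_{\rightarrow}}\le n\,\|X\otimes\sigma\|_{\text{LOCC}_{\rightarrow}}+(n-1)\,\|X\otimes\omega\|_{\text{LOCC}_{\rightarrow}}=(2n-1)\,\|X\|_{\text{LOCC}_{\rightarrow}},
\]
so $R_{QM}(\text{LOCC}_{\rightarrow})\le 2n-1$ by Proposition~\ref{dh ratio}.

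\emph{Equality for $n_A=n_B=n$.} In view of the above it suffices to show $R_{QM}(\text{SEP})\le n$, i.e. $\|X\|_{\text{SEP}}\ge\tfrac1n\|X\|_1$ for all $X\in\mathcal{H}_{n^2}$. Set $M=\mathrm{sgn}(X)$, so $M=M^\dagger$, $\|M\|_\infty\le 1$, $\Tr(MX)=\|X\|_1$, and $\|M\|_2\le n$. Put $E=\tfrac12\big(I+\tfrac1nM\big)$; then $0\le E\le I$, and both $2E=I+\tfrac1nM$ and $2(I-E)=I-\tfrac1nM$ lie in the separable cone, since $\|\tfrac1nM\|_2\le 1$ places them well inside the Gurvits--Barnum separable ball around the maximally mixed state of $\mathds{C}^n\otimes\mathds{C}^n$. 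Hence $(E,I-E)$ is a genuine separable two-outcome measurement with $\Tr\big((2E-I)X\big)=\tfrac1n\Tr(MX)=\tfrac1n\|X\|_1$, so $\|X\|_{\text{SEP}}\ge\tfrac1n\|X\|_1$. Together with the lower bound this gives $R_{QM}(\text{SEP})=n$.

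\emph{Where the difficulty lies.} The only genuinely external ingredient is the last step: one needs the Hilbert--Schmidt (a fortiori operator-norm) ball of radius $\gtrsim 1/n$ around the maximally mixed state of $\mathds{C}^n\otimes\mathds{C}^n$ to lie inside the set of separable states --- precisely the Gurvits--Barnum estimate. Two further points must be checked but are routine: that teleportation transports the \emph{operator} $X$ faithfully (so the Helstrom value $\|X\|_1$ is actually attained), and that tensoring with a fixed separable state leaves every $\text{LOCC}_{\rightarrow}$-distinguishability norm unchanged.
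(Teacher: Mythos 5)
Your proof is correct, and the core of it --- the teleportation upper bound --- is the same argument as the paper's: you use the identical separability fact for isotropic states (the decomposition $\Phi_n=n\sigma-(n-1)\omega$ with $\sigma,\omega$ separable is exactly the paper's $\frac1n\Phi+\frac{n-1}{n}\sigma$ being separable), the same monotonicity under appending a free separable ancilla via one-way communication, and the triangle inequality merely rearranged (you prove $\|X\|_1=\|X\otimes\Phi_n\|_{\text{LOCC}_\rightarrow}\le(2n-1)\|X\|_{\text{LOCC}_\rightarrow}$ where the paper proves $\|X\|_{\text{LOCC}_\rightarrow}\ge\frac1n\|X\|_1-\frac{n-1}{n}\|X\|_{\text{LOCC}_\rightarrow}$). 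Where you genuinely diverge is in the two SEP estimates. For the lower bound you pick the flip operator $F$; this is in fact the same hiding pair as the paper's, since the reweighted Werner difference $\frac{n+1}{n}\rho_S-\frac{n-1}{n}\rho_A$ equals $\frac{2}{n^2}F$, but your evaluation is more elementary: $\|F\|_1=n^2$ and $\|F\|_{\text{SEP}}\le n$ follow directly from block-positivity of $F$ (every separable effect pairs nonnegatively with it, and the pairings sum to $\Tr F=n$), which replaces the twirling computation of Appendix~\ref{app Werner} at the price of getting only an inequality --- which is all the theorem needs. For the equality at $n_A=n_B=n$, the paper simply invokes the bound $R_{QM}(\text{SEP})\le\sqrt{n_An_B}$ of \eqref{VV bound 2} from \cite{VV dh}, whereas you prove $\|X\|_{\text{SEP}}\ge\frac1n\|X\|_1$ from scratch by exhibiting the measurement $\big(\frac12(I+\frac1nM),\frac12(I-\frac1nM)\big)$ with $M=\mathrm{sgn}(X)$, $\|M\|_2\le n$, and appealing to the Gurvits--Barnum separable ball around the maximally mixed state; this is correct and essentially inlines the way the cited bound is itself established, making the statement self-contained up to the Gurvits--Barnum theorem --- which you correctly identify as the one external ingredient, and which would need an explicit citation (it is not among the paper's references), playing in your write-up exactly the role that \eqref{VV bound 2} plays in the paper's.
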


\begin{proof} Let us assume without loss of generality that $n=n_{A}\leq n_{B}$, and that classical communication goes from $A$ to $B$. We start by reminding the reader that the maximally entangled state $\ket{\Phi}=\frac{1}{\sqrt{n}}\,\sum_{i=1}^{n} \ket{ii}\in\mathds{C}^{n}\otimes \mathds{C}^{n}$, whose corresponding rank-one projector we denote by $\Phi$, has the property that there is a separable state $\sigma$ such that $\frac1n \Phi+\frac{n-1}{n}\, \sigma$ is again separable (in the language of~\cite{VidalTarrach}, $\Phi$ has entanglement robustness $r(\Phi)=n-1$). For instance, it is not difficult to see that $\sigma = \frac{\mathds{1} - \Phi}{n^{2}-1}$ satisfies all the requirements. This follows, for instance, from the characterisation of the separability region for isotropic states given in~\cite{Horodecki97}.

Now, since we can always produce any separable state with $\text{LOCC}_{\rightarrow}$ operations, we are free to evaluate the $\text{LOCC}_{\rightarrow}$ norm on $X_{AB} \otimes \left( \frac1n \Phi+\frac{n-1}{n}\, \sigma \right)_{A'B'}$ instead of $X_{AB}$. Here, the systems $A',B'$ have dimension $n_{A'}=n_{B'}=n_A=n$, and the operations are $\text{LOCC}_{\rightarrow}$ with respect to the splitting $AA'|BB'$.

Now, we are ready to apply the quantum teleportation protocol from $A$ to $B$~\cite{teleportation}. This is an $\text{LOCC}_{\rightarrow}$ operation $\tau$ mapping states of the system $AA'BB'$ to states of $B'B$, which can be defined as follows.
For $p,q=0,\ldots,n-1$, introduce the unitary matrices
\bq
\mathbf{X}(p)\,\coloneqq\, \sum_{k=1}^n \ket{k \oplus p}\!\!\bra{k}\, ,\qquad \mathbf{Z}(q)\, \coloneqq\, \sum_{k=1}^n e^{2qk\pi i/n} \ket{k}\!\!\bra{k}\, ,\qquad U(p,q)\coloneqq \mathbf{X}(p) \mathbf{Z}(q)\, ,
\label{HW}
\eq
where $\oplus$ denotes sum modulo $n$.
Then the teleportation $\tau$ is given by
\bq
\tau(X_{AA'BB'})\, \coloneqq\, \sum_{p,q=0}^{n-1} U(p,q)_{B'}\ \text{Tr}_{AA'}\,\big[ X_{AA'BB'}\, U(p,q)_A \Phi_{AA'} U(p,q)_A^\dag \big]\ U(p,q)_{B'}^{\dag}\, .
\label{telep}
\eq
Most notably, observe that $\tau\left(X_{AB} \otimes \Phi_{A'B'}\right)=X_{B'B}$ (meaning that the same operator $X$ is written in the registers $B'\simeq A$ and $B$). Now, on the one hand, after the protocol has been performed, the local constraint plays no role any more, and any measurement can be applied to $B'B$, showing that $\|X_{B'B}\|_{\text{LOCC}_{\rightarrow}} = \|X_{AB}\|_{1}$. On the other hand, $\tau(X_{AB}\otimes \sigma_{A'B'})$ is obtained from $X_{AB}$ via an $\text{LOCC}_{\rightarrow}$ protocol, hence $\|\tau (X_{AB} \otimes \sigma_{A'B'})\|_{\text{LOCC}_{\rightarrow}}\leq\|X_{AB}\|_{\text{LOCC}_{\rightarrow}}$.
Putting all together, we obtain the following chain of inequalities:
\begin{align}
\|X_{AB}\|_{\text{LOCC}_{\rightarrow}}\, &=\, \left\| X_{AB} \otimes \left( \frac1n \Phi+\frac{n-1}{n}\, \sigma \right)_{A'B'} \right\|_{\text{LOCC}_{\rightarrow}}
\label{telep 1} \\[0.5ex]
&\geq\, \left\| \tau\left(X_{AB} \otimes \left( \frac1n \Phi+\frac{n-1}{n}\, \sigma \right)_{A'B'}\right) \right\|_{\text{LOCC}_{\rightarrow}} \label{telep 2} \\[0.5ex]
&=\, \left\| \frac1n\, X_{B'B} +\frac{n-1}{n}\, \tau(X_{AB}\otimes \sigma_{A'B'}) \right\|_{\text{LOCC}_{\rightarrow}} \label{telep 3} \\
&\geq\, \frac1n\, \|X_{B'B}\|_{\text{LOCC}_{\rightarrow}} - \frac{n-1}{n}\, \|\tau(X_{AB}\otimes \sigma_{A'B'}) \|_{\text{LOCC}_{\rightarrow}} \label{telep 4} \\[0.5ex]
&\geq\, \frac1n\, \|X_{AB}\|_{1} - \frac{n-1}{n}\, \|X_{AB} \|_{\text{LOCC}_{\rightarrow}}\, . \label{telep 5}
\end{align}
We conclude that
\bq
\|X_{AB}\|_{\text{LOCC}_{\rightarrow}}\, \geq\,  \frac{1}{2n-1}\, \left\| X_{AB} \right\|_1\, , \label{bound telep}
\eq
enforcing $R_{QM}(\text{LOCC}_{\rightarrow})\leq 2n-1$ in view of Proposition~\ref{dh ratio}.

In order to deduce the lower bound $R_{QM}(\text{SEP})$, we appeal to Werner states~\cite{Werner, Werner symmetry}. These can be thought of as convex combinations of the normalised projectors onto the symmetric and antisymmetric subspace of $\mathds{C}^{n}\otimes \mathds{C}^{n}$, denoted by $\rho_{S}$ and $\rho_{A}$, respectively. In terms of the `flip operator' $F$ defined by $F\ket{\alpha\beta} = \ket{\beta\alpha}$ for all $\ket{\alpha},\ket{\beta}\in\mathds{C}^{n}$, we have
\begin{equation}
\rho_{S}\, =\, \frac{\mathds{1}+F}{n(n+1)}\, ,\qquad \rho_{A}\, =\, \frac{\mathds{1}-F}{n(n-1)}\, .
\label{symm antisymm proj}
\end{equation}
Since $n=\min\{n_{A},n_{B}\}$, we can safely imagine to give one share of this bipartite system to $A$ and the other to $B$. We already saw how the preparation with equal a priori probabilities of the two extremal states is well-known to produce data hiding, as shown by the fact that $\|\rho_{S}-\rho_{A}\|_{1}=2$ but $\|\rho_{S}-\rho_{A}\|_{\text{SEP}}=\|\rho_{S}-\rho_{A}\|_{\text{LOCC}}=2/(n+1)$~\cite{VV dh, VV dh Chernoff}. Curiously, there is an optimised version of this construction with different weights that does not seem to have been considered before. Namely, via the same techniques it can be shown that
\bq
\left\| \frac{n+1}{n}\, \rho_{S} - \frac{n-1}{n}\, \rho_{A} \right\|_{1}\, =\, 2\, ,\qquad \left\| \frac{n+1}{n}\, \rho_{S} - \frac{n-1}{n}\, \rho_{A} \right\|_{\text{SEP}}\, =\, \frac{2}{n}\, . \label{bound Werner}
\eq
Since the proof of~\eqref{bound Werner} is just a variation of a standard calculation, we relegate it to Appendix~\ref{app Werner}. Thanks to Proposition~\ref{dh ratio}, this yields the lower bound in the claim. Finally, combining $R_{QM}(\text{SEP})\geq n$ with the upper bound in~\eqref{VV bound 2}, we see that when $n_{A}=n_{B}=n$ we must have $R_{QM}(\text{SEP})=n$.
\end{proof}

\vspace{1ex}
\begin{rem}
The fact that the upper bound for $R_{QM}(\text{LOCC})$ in Theorem~\ref{dh QM} scales only linearly in $\min\{n_{A},n_{B}\}$ is crucial in solving the data hiding problem in quantum mechanics (up to constants) for all pairs $(n_{A}, n_{B})$. To our knowledge, this complete solution was not known before. We find the simplicity of the above proof quite instructive on its own, but it does not seem like the teleportation argument can encompass the case of purely local measurements, by its very nature. Therefore, we must leave open the problem of finding the optimal scaling of $R_{QM}(\text{LO})$ in the general case.
\end{rem}

\vspace{1ex}
\begin{rem}
In terms of the real dimensions of the local spaces, given by $d_A=n_A^2,\, d_B=n_B^2$ according to~\eqref{dim quantum}, Theorem~\ref{dh QM} shows that the data hiding ratio against separable protocols scales as $\min\{\sqrt{d_{A}},\,\sqrt{d_{B}} \}$. Thus, we deduce a first estimate $R_{\text{SEP}}(d_{A},d_{B})\geq \min\{\sqrt{d_{A}},\,\sqrt{d_{B}}\}$ (valid when $\sqrt{d_{A}},\sqrt{d_{B}}$ are integers).
\end{rem}

\vspace{2ex}
In view of Proposition~\ref{min is optimal}, the reader might wonder, whether considering a modified version of quantum mechanics in which composite systems are obtained via the minimal tensor product~\eqref{separable} exhibits better data hiding properties. In what follows, we call such a theory {\it witness theory} (or {\it $W$-theory}, for short). In $W$-theory, the only allowed states of a multipartite system are fully separable, while the set of possible effects contains all entanglement witnesses (equivalently, all elements in the cone~\eqref{witnesses}).
Thus, the base norm of a bipartite operator $X_{AB}$ will be given by
\bq
\|X\|_W\, \coloneqq\, \|X\|_{\text{QM}_{n_A}\tminfoot \text{QM}_{n_B}} =\, \max\big\{\Tr XY:\ Y\in \mathcal{H}_{n_An_B},\ \big|\!\braket{\alpha\beta|Y|\alpha\beta}\!\big|\leq \braket{\alpha|\alpha}\! \braket{\beta|\beta}\ \, \forall\, \ket{\alpha}\in\mathds{C}^{n_A},\, \ket{\beta}\in\mathds{C}^{n_B} \big\}\, .
\label{W norm}
\eq

Let us consider the class of protocols that can be realised in the $W$-theory framework when two agents $A$ and $B$ are allowed to use just local operations and $A\rightarrow B$ classical communication. We denote this class of protocols by $\text{{\bf LWCC }}_\rightarrow$. Despite the fact that $\text{{\bf LWCC }}_\rightarrow$ constitutes a more general class of {\it protocols} than $\text{LOCC}_\rightarrow$ in standard quantum mechanics, it is not difficult to see that they are not more powerful than the latter within the context of state discrimination. This is a consequence of the fact that the {\it measurements} one can implement with $\text{{\bf LWCC}}_\rightarrow$ operations are -- up to coarse graining -- necessarily of the form $\big(E_i\otimes F_j^{(i)}\big)_{(i,j)\in I\times J}$ for some local measurements $(E_i)_{i\in I}$ on $A$ and $\big( F^{(i)}_j\big)_{j\in J}$ on $B$. Since local measurements in $W$-theory are the same as in standard quantum mechanics, the same measurement is also obtainable via $\text{LOCC}_\rightarrow$ operations. More generally, we saw already that the locally constrained sets of measurements defined in~\eqref{LO},~\eqref{1-way LOCC},~\eqref{SEP} do not depend on the composition rule we choose for assembling multipartite systems.

The above discussion allows us to write the identity
\bq
\|X\|_{\text{LOCC}_{\rightarrow}} =\, \|X\|_{\text{{\bf LWCC}}_{\rightarrow}}\, ,
\label{confusing}
\eq
where the right-hand side is defined through the usual formula~\eqref{d norm}, and it is understood that the corresponding set $\mathcal{M}$ includes in this case all those measurements that are implementable through an $\text{{\bf LWCC}}_{\rightarrow}$ protocol.



As it turns out, data hiding in $W$-theory is not much better than in quantum theory, in the sense that the scaling with the local dimensions is exactly the same.
The proof of this latter result constitutes another example of how the techniques used in~\cite{VV dh Chernoff, VV dh} seem not to be applicable in a more general scenario. In fact, the approach taken there relies on the inequality $\|\cdot\|_{\text{LOCC}_{\rightarrow}}\geq C \|\cdot\|_2$ (for $C$ universal constant), where $\|\cdot\|_2$ stands for the Hilbert-Schmidt norm. When the system under examination is $\mathds{C}^n\otimes \mathds{C}^n$, the elementary relation $\|\cdot\|_2\leq \frac{1}{n} \|\cdot\|_1$ yields $\|\cdot\|_{\text{LOCC}_{\rightarrow}}\geq \frac{C}{n} \|\cdot\|_1$, which is optimal up to a constant factor. However, it is not difficult to prove that the ratio between the norm $\|\cdot\|_W$ defined in~\eqref{W norm} and the Hilbert-Schmidt norm $\|\cdot\|_2$ can be asymptotically as large as $n^{3/2}$ (see Appendix~\ref{app W norm}). Therefore, the tighter inequality we can deduce by making use of the Hilbert-Schmidt norm in an intermediate step is $\|\cdot\|_{\text{LOCC}_\rightarrow}\geq C \|\cdot\|_2 \geq \frac{C}{n^{3/2}} \|\cdot\|_W$. As we will see in a moment, the scaling of the lower bound is not tight.

While a direct approach via the other techniques previously exploited in the literature does not lead to a satisfactory answer to the problem, the teleportation argument can be quickly adapted to compute {\it exactly} the data hiding ratios against $\text{SEP}$ or $\text{LOCC}_{\rightarrow}$ in $W$-theory.

\vspace{2ex}
\begin{prop} \label{dh W}
For a bipartite $W$-theory with local Hilbert spaces of dimensions $n_{A},n_{B}$, define $n\coloneqq\min\{n_A,n_B\}$. Then the data hiding ratios against separable and $\text{LOCC}_{\rightarrow}$ measurements are given by
\bq
R_{W}(\text{\emph{SEP}})\, =\, R_{W}(\text{\emph{LOCC}}_{\rightarrow})\, =\, 2n-1\, ,
\eq
where the communication direction in $\text{LOCC}_{\rightarrow}$ is from the smaller to the larger subsystem.
\end{prop}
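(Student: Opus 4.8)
The plan is to establish $R_W(\text{LOCC}_\rightarrow)\le 2n-1$ and $R_W(\text{SEP})\ge 2n-1$ separately, and then to use the chain $R_W(\text{SEP})\le R_W(\text{LOCC})\le R_W(\text{LOCC}_\rightarrow)$ coming from~\eqref{chain dh} to conclude that all three quantities coincide. As usual we may assume $n=n_A\le n_B$, with classical communication running from $A$ to $B$. Recall also that, as discussed before the statement (cf.~\eqref{confusing}), the locally constrained norms $\|\cdot\|_{\text{SEP}}$ and $\|\cdot\|_{\text{LOCC}_\rightarrow}$ on $\mathcal{H}_{n_An_B}$ are literally the same as in ordinary quantum mechanics; only the global base norm $\|\cdot\|_W$ differs.

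For the upper bound I would re-run the teleportation argument of Theorem~\ref{dh QM} almost verbatim: tensor $X_{AB}$ with the separable state $\big(\tfrac1n\Phi+\tfrac{n-1}{n}\sigma\big)_{A'B'}$, apply the teleportation channel $\tau$, and use linearity of $\tau$, the identity $\tau(X_{AB}\otimes\Phi_{A'B'})=X_{B'B}$, and the fact that $\tau$ is an $\text{LOCC}_\rightarrow$ (equivalently $\text{LWCC}_\rightarrow$) operation. The single point that changes is the last inequality in the chain leading to~\eqref{bound telep}: after $\tau$ has gathered the state in $B$'s lab (on $B'B$), one must evaluate the ``unconstrained'' norm there. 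In $W$-theory the composite $B'B$ is still the minimal tensor product of two single systems of dimensions $n_{B'}=n$ and $n_B$, so the full measurement set available to $B$ is that of the entanglement witnesses of this cut, and the relevant norm is $\|\cdot\|_W$, not the trace norm. Since $B'\simeq A$ and $n_{B'}=n_A$, we have $\|X_{B'B}\|_W=\|X_{AB}\|_W$, and running the chain with $\|\cdot\|_1$ replaced throughout by $\|\cdot\|_W$ yields $\|X_{AB}\|_{\text{LOCC}_\rightarrow}\ge\frac{1}{2n-1}\|X_{AB}\|_W$ for every $X_{AB}$; Proposition~\ref{dh ratio} then gives $R_W(\text{LOCC}_\rightarrow)\le 2n-1$.

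For the lower bound I would take $X$ to be the flip operator $F$ on $\mathds{C}^n\otimes\mathds{C}^n$, embedded (padded by zero) in $\mathds{C}^{n_A}\otimes\mathds{C}^{n_B}$; this is, up to a scalar, the very operator $\frac{n+1}{n}\rho_S-\frac{n-1}{n}\rho_A=\frac{2}{n^2}F$ used in Theorem~\ref{dh QM}, so its separable norm is already known: $\|F\|_{\text{SEP}}=n$. (Directly: $F$ is block positive, hence $\Tr(EF)\ge0$ for every separable effect $E=\sum_k A_k\otimes B_k$ with $A_k,B_k\ge0$, because $\Tr\big((A\otimes B)F\big)=\Tr(AB)\ge0$; summing over a separable measurement $(E_i)_i$ gives $\sum_i|\Tr(E_iF)|=\Tr\big(F\sum_i E_i\big)=\Tr(F)=n$.) The new ingredient is the value of the $W$-norm: by~\eqref{W norm}, $\|F\|_W=\max\big\{\Tr(FY):\ \big|\!\braket{\alpha\beta|Y|\alpha\beta}\!\big|\le\braket{\alpha|\alpha}\!\braket{\beta|\beta}\ \forall\,\ket{\alpha},\ket{\beta}\big\}$; averaging an optimal $Y$ over the $U\otimes U$-action (which leaves both $F$ and the feasible set invariant) and invoking Schur--Weyl reduces the optimisation to $Y=a\mathds{1}+bF$ with the constraint $a+bt\in[-1,1]$ for all $t\in[0,1]$ and objective $a\Tr(F)+b\Tr(F^2)=an+bn^2$; the maximum is attained at $a=-1$, $b=2$, giving $\|F\|_W=2n^2-n=n(2n-1)$. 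Hence $\|F\|_W/\|F\|_{\text{SEP}}=2n-1$, so $R_W(\text{SEP})\ge2n-1$ by Proposition~\ref{dh ratio}, and together with the upper bound and the chain~\eqref{chain dh} we obtain $R_W(\text{SEP})=R_W(\text{LOCC})=R_W(\text{LOCC}_\rightarrow)=2n-1$.

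I expect the only genuinely delicate point to be the norm bookkeeping in the teleportation step: recognising that, once the teleported state sits entirely in $B$'s hands, the ``free'' norm is still the witness norm $\|\cdot\|_W$ and not the trace norm — this is exactly what lifts the ratio from the quantum value $n$ up to $2n-1$ — while keeping the identification $B'\simeq A$ straight. The evaluation $\|F\|_W=n(2n-1)$ is then a short $U\otimes U$-symmetry computation, and everything else runs parallel to the quantum case.
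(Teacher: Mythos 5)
Your proposal is correct and follows essentially the same route as the paper: the upper bound is the same teleportation argument with the trace norm replaced by $\|\cdot\|_W$ after the state is gathered in $B$'s lab, and the lower bound uses the same Werner-class operator (your $F$ is, up to normalisation, exactly the paper's combination $\frac{n+1}{2n-1}\rho_S-\frac{n-1}{2n-1}\rho_A$) evaluated via the standard $U\otimes U$ twirling reduction, matching the paper's Appendix computation. Your observation that block-positivity of $F$ forces every separable measurement to give exactly $\Tr F=n$ is a slightly more direct way to get $\|F\|_{\text{SEP}}=n$ than the paper's full characterisation of Werner-symmetric separable measurements, but it is the same idea in substance.
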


\begin{proof}
We assume without loss of generality that $n_B\geq n_A=n$. Let us start by showing that the above data hiding ratios can be upper bounded by $2n-1$. The idea is that the argument in~\eqref{telep 1}-\eqref{telep 5} can be adapted to encompass also the case of $W$-theory, by replacing everywhere $\|\cdot\|_{\text{LOCC}_{\rightarrow}}$ with $\|\cdot\|_{\text{LWCC}_{\rightarrow}}$ and the trace norm $\|\cdot\|_{1}$ with the minimal tensor product base norm~\eqref{W norm}.
The first step consists in acknowledging the fact that we can compute the $\text{LOCC}_{\rightarrow}$ distinguishability norm by making use of more general $\text{{\bf LWCC}}_{\rightarrow}$ protocols that are available in $W$-theory, as expressed in~\eqref{confusing}.

Now, we choose a particular $\text{{\bf LWCC}}_\rightarrow$ protocol in order to lower bound the norm $\|\cdot\|_{\text{\textbf{LWCC}}_{\rightarrow}}$. Such a protocol resembles the one we devised for the proof of Theorem~\ref{dh QM}, with one important difference.

Since separable states can be created with local operations and shared randomness, we can safely start by supplying $A$ and $B$ with a separable isotropic state $\left( \frac1n \Phi+\frac{n-1}{n}\, \sigma \right)_{A'B'}$, defined on an ancillary system $A'B'$ with local dimension $n_{A'}=n_{B'}=n$. As usual, $\sigma$ is an appropriate normalised, separable state. Then, we perform the teleportation $\tau$ defined in~\eqref{telep}, which is an $\text{{\bf LWCC}}_\rightarrow$ (even $\text{LOCC}_\rightarrow$) operation with respect to the splitting $AA'|BB'$, where classical communication goes from $AA'$ to $BB'$. After applying the triangle inequality, we are left with two terms, that is, $\|X_{B'B}\|_{\text{{\bf LWCC}}_\rightarrow}$ and $\|\tau(X_{AB}\otimes \sigma_{A'B'})\|_{\text{{\bf LWCC}}_\rightarrow}$. The first one can be computed exactly, since the operator $X_{B'B}$ obtained after teleportation belongs to the local subsystem $BB'$, and therefore measuring any witness $Y_{B'B}$ satisfying the constraints in~\eqref{W norm} is an allowed $\text{{\bf LWCC}}_\rightarrow$ operation, leading to the equality $\|X_{B'B}\|_{\text{{\bf LWCC}}_\rightarrow}=\|X\|_W$. As for the second term, we observe that $\|\tau(X_{AB}\otimes \sigma_{A'B'})\|_{\text{{\bf LWCC}}_\rightarrow}\leq \|X\|_{\text{{\bf LWCC}}_\rightarrow}$, since adding the ancillary system $A'B'$ in a separable state $\sigma_{A'B'}$ and subsequently applying $\tau$ is clearly an $\text{{\bf LWCC}}_\rightarrow$ protocol (which is why this latter inequality is in fact an equality). The above reasoning can be summarised in the following chain of inequalities, totally analogous to~\eqref{telep 1}-\eqref{telep 5}:
\begin{align*}
\|X_{AB}\|_{\text{LOCC}_\rightarrow}\, &=\, \|X_{AB}\|_{\text{\textbf{LWCC}}_\rightarrow} \\[0.5ex]
&=\, \left\| X_{AB} \otimes \left( \frac1n \Phi +\frac{n-1}{n}\, \sigma \right)_{A'B'} \right\|_{\text{\textbf{LWCC}}_\rightarrow} \\[0.5ex]
&\geq\, \left\| \tau\left(X_{AB} \otimes \left( \frac1n \Phi +\frac{n-1}{n}\, \sigma \right)_{A'B'}\right) \right\|_{\text{\textbf{LWCC}}_\rightarrow} \\[0.5ex]
&=\, \left\| \frac1n\, X_{B'B} +\frac{n-1}{n}\, \tau(X_{AB}\otimes \sigma_{A'B'}) \right\|_{\text{\textbf{LWCC}}_\rightarrow} \\
&\geq\, \frac1n\, \|X_{B'B}\|_{\text{\textbf{LWCC}}_\rightarrow} - \frac{n-1}{n}\, \|\tau(X_{AB}\otimes \sigma_{A'B'}) \|_{\text{\textbf{LWCC}}_\rightarrow} \\[0.5ex]
&\geq\, \frac1n\, \|X_{AB}\|_{W} - \frac{n-1}{n}\, \|X_{AB} \|_{\text{\textbf{LWCC}}_\rightarrow} \\[0.5ex]
&=\, \frac1n\, \|X_{AB}\|_{W} - \frac{n-1}{n}\, \|X_{AB} \|_{\text{LOCC}_\rightarrow}\, .
\end{align*}
In conclusion, we find
\bq
\|X_{AB}\|_{\text{LOCC}_{\rightarrow}}\, =\, \|X_{AB}\|_{\text{LWCC}_{\rightarrow}}\, \geq\,  \frac{1}{2n-1}\, \left\| X_{AB} \right\|_W\, , \label{bound telep W}
\eq
which implies $R_{W}(\text{LOCC}_{\rightarrow})\, \leq\, 2n-1$. Once more, to derive a lower bound on $R_{W}(\text{SEP})$ we use Werner states~\cite{Werner, Werner symmetry}. With the same notation as in the proof of Theorem~\ref{dh QM}, it can be shown that
\bq
\left\| \frac{n+1}{2n-1}\, \rho_{S}\, -\, \frac{n-1}{2n-1}\, \rho_{A} \right\|_{W}\, =\, 2\, ,\qquad \left\| \frac{n+1}{2n-1}\, \rho_{S}\, -\, \frac{n-1}{2n-1}\, \rho_{A} \right\|_{\text{SEP}}\, =\, \frac{2}{2n-1}\, , \label{bound Werner W}
\eq
enforcing the complementary bound $R_{W}(\text{SEP})\, \geq\, 2n-1$. The proof of~\eqref{bound Werner W} is provided in Appendix~\ref{app Werner}.
\end{proof}

\end{ex}

\vspace{2ex}
\begin{ex} \label{ex sph}
The state space of $\text{QM}_{2}$ is well-known to be identifiable with a $3$-dimensional ball (Bloch ball). This observation is the starting point for defining a hypothetical class of physical models whose state space is a Euclidean ball of arbitrary dimension. These GPTs have been considered recently in connections to attempts of reconstructing quantum mechanics starting from few physically motivated axioms~\cite{quantum 5 axioms, quantum 4 1/2 axioms, quantum info unit}. This is to be expected in light of a famous classification theorem by Koecher and Vinberg~\cite{Koecher hom cones, Vinberg hom cones}, implying among other things that spherical models are one of the few classes of models enjoying a distinctive property of quantum mechanics called strong self-duality~\cite{telep in GPT}.

In the following, given $x\in \mathds{R}^{d}$ we call $x_{i}$ its $i$th entry ($i=0,\ldots,d-1$), while $\widebar{x}$ will denote the column vector obtained from $x$ by eliminating the zeroth component. The GPT corresponding to this {\it spherical model} has the form
\begin{equation}
\text{Sph}_{d} \coloneqq \left( \mathds{R}^{d},\, C_d,\, u \right) .
\label{spherical}
\end{equation}
Here, $C_{d}$ is the `ice cream cone'
\bq
C_{d}\, \coloneqq\, \left\{\, x \in \mathds{R}^{d}:\ |\widebar{x}|_{2}\, \leq\, x_{0}\, \right\} , \label{ice cream}
\eq
where $|\cdot|_{2}$ is the standard Euclidean norm in $\mathds{R}^{d-1}$, and $u$ is defined as $u(y)\coloneqq y_{0}$ for all $y\in \mathds{R}^{d}$. In light of the canonical identification $\left( \mathds{R}^{d} \right)^{*}\simeq \mathds{R}^{d}$, we find it convenient to adopt from now on a column notation for the dual as well as for the primal space. Within this convention, we shall write unambiguously $u=(1,0,\ldots,0)^{T}$.
This simplification of the notation is going to pay off because just like quantum mechanics, also the spherical model is self-dual, i.e. $C_{d}=C_{d}^{*}$.
As for the base norm in $\text{Sph}_{d}$, it can be shown that $\|x\|=\max\{|x_{0}|,|\widebar{x}|_{2}\}$.

Since our primary interest is in the exploration of the data hiding properties, according to Proposition~\ref{min is optimal} we construct a bipartite system $AB$ by joining two spherical models $A=\text{Sph}_{d_{A}}$ and $B=\text{Sph}_{d_{B}}$ via the minimal tensor product, i.e. taking $AB=\text{Sph}_{d_{A}}\tmin\text{Sph}_{d_{B}}$. This latter assumption will be made throughout the rest of this example. Tensors belonging to the bipartite vector space $\mathds{R}^{d_{A}}\otimes\mathds{R}^{d_{B}}$ (or to its dual) can be thought of as $d_{A}\times d_{B}$ matrices $X\in\mathds{R}^{d_{A}\times d_{B}}$. We shall find useful to denote by $\widebar{X}$ the $(d_{A}-1)\times (d_{B}-1)$ submatrix of $X\in\mathds{R}^{d_{A}\times d_{B}}$ which is obtained by cutting off the zeroth components $X_{i0},\, X_{0j}$ of the latter. Complementarily, given $M\in\mathds{R}^{(d_{A}-1)\times (d_{B}-1)}$ we call $\hat{M}$ the $d_{A}\times d_{B}$ `lifted' matrix whose entries are
\bqq
\hat{M}_{ij}\, =\, \left\{ \begin{array}{cl} 0 & \text{ if $i=0$ or $j=0$,} \\[0.5ex] M_{ij} & \text{ if $i,j\geq 1$.} \end{array}  \right.
\eqq

According to Proposition~\ref{dh ratio}, the data hiding ratio against separable measurements can be computed once we know the expressions for both the separability norm and the base norm induced by the minimal tensor product. Instead of treating the general case, we show how to compute these norms for a restricted yet large class of matrices, that is, those having zero entries in the zeroth row and column.

\vspace{2ex}
\begin{lemma} \label{norms sph}
Consider a bipartite system $AB=\text{\emph{Sph}}_{d_{A}}\tminit\text{\emph{Sph}}_{d_{B}}$. Then for all $M\in\mathds{R}^{(d_{A}-1)\times (d_{B}-1)}$ we have
\bq
\big\| \hat{M} \big\| = \|M\|_1\, ,\qquad \big\| \hat{M}\big\|_{\text{\emph{SEP}}} = \|M\|_\infty\, , \label{norms sph eq}
\eq
where it is understood that the base norm $\| \cdot \|$ is induced by the minimal tensor product, and $\|M\|_{1},\, \|M\|_{\infty}$ denote the trace and operator norm of $M$, i.e. the sum and the largest of its singular values, respectively.
\end{lemma}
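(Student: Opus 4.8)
The plan is to compute each of the two norms in \eqref{norms sph eq} directly from its definition, exploiting the product structure of $\hat M$ (zeros outside the lower-right block) together with the explicit descriptions of the cone $C_d$, its dual, and the unit effect for the spherical model. I would treat the two identities separately, starting with the separability norm since it is more transparent.

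For $\big\|\hat M\big\|_{\text{SEP}}$, recall from \eqref{SEP} and \eqref{d norm altern} (or directly \eqref{unit ball dual d norm eq}) that the dual unit ball of $\|\cdot\|_{\text{SEP}}$ is the convex hull of operators $2E-u_A\otimes u_B$ with $E$ in the dual minimal tensor product $C_A^*\tmin C_B^*$ and $E\leq u_A\otimes u_B$; equivalently one can write $\|\hat M\|_{\text{SEP}}=\sup\{F(\hat M): -u_A\otimes u_B\leq F\leq u_A\otimes u_B,\ F\in \text{span of separable effects}\}$, but it is cleanest to use the form $\|X\|_{\text{SEP}}=\sup_{(E_i)}\sum_i|E_i(X)|$ with each $E_i$ separable. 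Since $C_d=C_d^*$ for the spherical model, a separable effect on $AB$ is a positive combination of product effects $a\otimes b$ with $a\in C_{d_A},\ b\in C_{d_B}$, i.e. $a=(a_0,\bar a)$ with $|\bar a|_2\le a_0$ and similarly for $b$. Because $\hat M$ has vanishing zeroth row and column, $(a\otimes b)(\hat M)=\bar a^T M\bar b$ only depends on $\bar a,\bar b$, and the constraint allows $|\bar a|_2,|\bar b|_2$ up to $1$ (taking $a_0=b_0=1$, which is also the value of $u$ on these effects). Carefully assembling a two-outcome separable measurement from $a\otimes b$ and its complement and optimizing, the sum $\sum_i|E_i(\hat M)|$ is governed by $\sup\{|\bar a^T M\bar b|:|\bar a|_2,|\bar b|_2\le 1\}=\|M\|_\infty$; one then checks the opposite inequality, namely that no separable measurement can beat this, by bounding $\sum_i|E_i(\hat M)|\le \sum_i (u_A\otimes u_B)(E_i)\cdot\|M\|_\infty=\|M\|_\infty$ using that each $E_i$, being separable and $\le u_A\otimes u_B$, has its "$\bar{}$-part" controlled by its zeroth-component mass. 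This gives $\|\hat M\|_{\text{SEP}}=\|M\|_\infty$.

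For $\big\|\hat M\big\|$, I would use the dual formula from Definition~\ref{def base}, $\|X\|=\max_{f\in[-u_{AB},u_{AB}]}f(X)$, where now $f$ ranges over \emph{all} effects with $-u_A\otimes u_B\le f\le u_A\otimes u_B$ in the ordering dual to the \emph{minimal} tensor product cone $C_A\tmin C_B$ — equivalently $f$ lies in $(C_A\tmin C_B)^*\tmax(\dots)$, i.e. $f$ is block-positive-type with respect to product states, meaning $(\omega_A\otimes\omega_B)(u_{AB}\pm f)\ge 0$ for all local states. Writing $f$ as a matrix $F$ and using that $\hat M$ is supported on the lower-right block, only $\bar F$ matters and $f(\hat M)=\langle \bar F,M\rangle$; the constraint $|(\omega_A\otimes\omega_B)(F)|\le 1$ for all unit-trace $\omega_A,\omega_B$, after parametrizing $\omega_A=(1,\bar\alpha)$ with $|\bar\alpha|_2\le 1$ and similarly $\bar\beta$, translates (since the zeroth row/column of $F$ can be set to zero WLOG to maximize against $\hat M$) into $|\bar\alpha^T\bar F\bar\beta|\le 1$ for all $|\bar\alpha|_2,|\bar\beta|_2\le1$, i.e. $\|\bar F\|_\infty\le 1$. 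Hence $\|\hat M\|=\sup\{\langle N,M\rangle:\|N\|_\infty\le 1\}=\|M\|_1$ by trace–operator norm duality. I would also double-check consistency using Lemma~\ref{dual base} (the min-decomposition form) to confirm that the block-supported structure is preserved in an optimal decomposition $\hat M=X_+-X_-$.

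The main obstacle I anticipate is the reduction "the zeroth row and column of the optimal dual element can be taken to be zero": in both computations one must argue that an extremal effect/functional witnessing the norm of $\hat M$ may be replaced by one supported on the lower-right block without violating the cone constraints, and this is slightly delicate for the maximal-tensor-product (block-positive) side because zeroing out entries of a block-positive functional need not preserve block-positivity in general. The resolution is to symmetrize: average $F$ over the sign flips $\bar\alpha\mapsto -\bar\alpha$ acting on the $A$ factor (which are symmetries of $C_{d_A}$ fixing $u_A$) and likewise on $B$; this projects $F$ onto the subspace of matrices with zero zeroth row and column while keeping it within the relevant dual cone and not decreasing $f(\hat M)$, since $\hat M$ is invariant. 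Once this averaging argument is in place, both norm computations reduce to the clean finite-dimensional optimizations over Euclidean balls described above, and the rest is routine linear algebra and duality of Schatten norms.
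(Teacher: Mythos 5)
Your overall plan is sound, and for the base norm your route is genuinely different from the paper's: you stay entirely on the dual side, identifying the admissible functionals for $\|\cdot\|_{A\tminfoot B}$ with the block-positive interval $[-u_{AB},u_{AB}]$ and reducing to the $\ell_2\to\ell_2$ operator-norm ball, whereas the paper uses the decomposition formula of Lemma~\ref{dual base} with the explicit separable ansatz $X_\pm=\tfrac{\|M\|_1}{2}\,U+\hat{M}_\pm$. Both target values are correct, but two steps have genuine gaps as written. First, your lower bound on $\big\|\hat{M}\big\|_{\text{SEP}}$: the ``two-outcome measurement built from $a\otimes b$ and its complement'' does not exist when $a=(1,\bar a)$, $b=(1,\bar b)$ with $|\bar a|_2=|\bar b|_2=1$. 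Evaluating $u_{AB}-a\otimes b$ on the normalised product state $(1,\bar a)\otimes(1,\bar b)$ gives $1-4<0$, so the complement is not even a valid effect, let alone separable; and if this pair were admissible it would yield $\sum_i|E_i(\hat{M})|=2\,|\bar a^T M\bar b|$, i.e.\ up to $2\|M\|_\infty$, contradicting your own upper bound. The correct witness is the measurement $\big(\tfrac12(u_{AB}+\hat v\otimes\hat w),\,\tfrac12(u_{AB}-\hat v\otimes\hat w)\big)$ with $v,w$ the top singular vectors of $M$; each of these effects is a convex combination of two product effects, exactly as in \eqref{fg local 1}--\eqref{fg local 2}, or equivalently by the paper's sufficiency criterion (vanishing cross terms plus $\big\|\widebar{E}\,\big\|_1\le E_{00}$ implies separability, cf.\ \eqref{norms sph proof 3}). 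Your upper bound for the SEP norm, bounding each separable term by $\|M\|_\infty\,a_0^{(k)}b_0^{(k)}$ and summing, is fine and is essentially the paper's estimate.

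Second, the symmetrization you invoke for the base norm is mis-stated: $\hat{M}$ is \emph{not} invariant under a sign flip acting on one factor alone, and averaging $F$ over the flips on $A$ and on $B$ independently projects out the block $\widebar{F}$ as well, destroying the pairing with $\hat{M}$. What does work is averaging only over the \emph{simultaneous} flip (the map acting as $-I$ on both barred components and as the identity on the zeroth ones), which kills the cross terms while fixing $F_{00}$ and $\widebar{F}$ and preserves the dual cone of $C_A\tminfoot C_B$ since it maps product states to product states and fixes $u_{AB}$. Even simpler, no modification of $F$ is needed: for unit vectors $\bar\alpha,\bar\beta$, combining the constraints $|(\omega_A\otimes\omega_B)(F)|\le 1$ at the four product states $(1,\pm\bar\alpha)\otimes(1,\pm\bar\beta)$ with signs $\tfrac14(+,-,-,+)$ gives $|\bar\alpha^T\widebar{F}\bar\beta|\le 1$ directly, hence $\big\|\widebar{F}\big\|_\infty\le 1$ and $F(\hat{M})=\Tr\big[\widebar{F}^{\,T}M\big]\le\|M\|_1$, while the choices $F=\hat N$ with $\|N\|_\infty\le1$ are admissible and give the matching lower bound. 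With these two repairs your argument is complete.
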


\begin{proof}
Consider an arbitrary dual tensor $E\in\mathds{R}^{d_{A}\times d_{B}}$. We claim that if $E$ is separable then necessarily $\big\|\widebar{E}\,\big\|_{1}\leq E_{00}$, and that this condition is also sufficient when all cross terms $E_{i0}, E_{0j}$ ($i,j\geq 1$) vanish. Let us start by proving the necessity of the above inequality. If $E=\sum_{k}x^{(k)} \big(y^{(k)}\big)^{T}$ with $\big|\widebar{x}^{(k)}\big|_{2}\leq x^{(k)}_{0},\, \big|\widebar{y}^{(k)}\big|_{2}\leq y^{(k)}_{0}$ for all $k$, then
\bq
\big\|\widebar{E}\,\big\|_{1}\, =\, \bigg\|\sum_{k} \widebar{x}^{(k)}\big(\widebar{y}^{(k)}\big)^{T}\bigg\|_1\, \leq\,  \sum_{k} \Big\| \widebar{x}^{(k)} \big(\widebar{y}^{(k)}\big)^T\Big\|_{1}\, =\, \sum_{k} \big|\widebar{x}^{(k)}\big|_{2} \big|\widebar{y}^{(k)}\big|_{2}\, \leq\, \sum_{k} x^{(k)}_{0} y^{(k)}_{0}\, =\, E_{00}\, .
\label{norms sph proof 1}
\eq
Now, let us turn to the sufficiency claim. Suppose that $E_{ij}=0$ whenever $i=0,\, j\geq 1$ or $i\geq 1,\, j=0$, and that $\big\|\widebar{E}\,\big\|_{1}\leq E_{00}$. Then by the singular value decomposition theorem there are vectors $v^{(k)}\in\mathds{R}^{d_{A}-1}$ and $w^{(k)}\in\mathds{R}^{d_{B}-1}$ such that $\widebar{E}=\sum_{k} v^{(k)} \big(w^{(k)}\big)^T$ and $\big\|\widebar{E}\,\big\|_{1}=\sum_{k} \big|v^{(k)}\big|_{2} \big|w^{(k)}\big|_{2}$. Define vectors $x^{(k)}_\pm\coloneqq \big|v^{(k)}\big|_{2} u_{A} \pm \hat{v}^{(k)}\in \mathds{R}^{d_{A}}$, i.e.
\bq
\left(x_{\pm}^{(k)}\right)_{i}\, =\, \left\{ \begin{array}{cl} \big|v^{(k)}\big|_{2} & \text{ if $i=0$,} \\[0.5ex] \pm\, v^{(k)}_i & \text{ if $i\geq 1$,} \end{array} \right.
\label{norms sph proof 2}
\eq
and analogously for $y_{\pm}^{(k)}\coloneqq \big|w^{(k)}\big|_{2} u_{B} \pm \hat{w}^{(k)} \in \mathds{R}^{d_{B}}$. Observe that the definition~\eqref{ice cream} tells us that $x_\pm^{(k)} \in C_{d_A}$ and $y_\pm^{(k)} \in C_{d_B}$. Defining $U\coloneqq u_{AB}=u_A\otimes u_B=u_A u_B^T$, it is easy to see that
\bq
E\, =\, \left(E_{00} -\big\|\widebar{E}\,\big\|_1 \right) U\, +\, \frac12\, \sum_{k} \left( x^{(k)}_{+} \big(y^{(k)}_{+}\big)^{T} +  x^{(k)}_{-} \big(y^{(k)}_{-}\big)^{T} \right) ,
\label{norms sph proof 3}
\eq
which shows that $E$ is separable, as claimed.

Now, let us compute the separability norm. On the one hand, the above necessary condition for separability of effects shows that
\begin{align}
\big\|\hat{M}\big\|_\text{SEP}\, &=\, \max_{E,\,U-E\,\in\, C_{d_A}^*\tminfoot C_{d_B}^*} \left( \big|E\big(\hat{M}\big)\big| + \big|(U-E)\big(\hat{M}\big)\big| \right) \, =\, \max_{U \pm F\,\in\, C_{d_A}^*\tminfoot C_{d_B}^*} F\big(\hat{M}\big) \label{norms sph proof 4}\\
&\leq\, \max_{\left\|\bar{F}\right\|_1\, \leq\, 1\pm F_{00}} F\big(\hat{M}\big)\, =\, \max_{\left\|\bar{F}\right\|_1 +|F_{00}|\, \leq\, 1} \Tr \Big[M \widebar{F}^{\,T}\Big]\, =\, \|M\|_\infty\, ,
\label{norms sph proof 5}
\end{align}
where we employed~\eqref{d norm altern} to find a more compact expression for the separability norm, and we exploited the fact that trace norm and operator norm are dual to each other. On the other hand, the fact that $\|N\|_1\leq 1$ is sufficient to guarantee the separability of $U \pm \hat{N}$ leads us, again via~\eqref{d norm altern}, to the complementary bound
\bqq
\big\| \hat{M} \big\|_\text{SEP}\, \geq\, \max_{\|N\|_1\leq 1} \hat{N}(\hat{M})\, =\, \max_{\|N\|_1\leq 1} \Tr\left[ N^T M\right]\, =\, \|M\|_\infty\, .
\eqq

Our final task is the calculation of the base norm induced by the minimal tensor product. Thanks to the formula~\eqref{dual base eq}, we can write
\bqq
\big\|\hat{M}\big\|\, =\, \min \Big\{ U(X_+)+U(X_-):\ X_\pm\in C_{d_A}\tmin C_{d_B},\ \hat{M}=X_+ - X_- \Big\}\, .
\eqq
Since $X_\pm\in C_{d_A}\tmin C_{d_B}$ implies $U(X_\pm)\geq \big\|\widebar{X}_\pm\big\|_1$ and $\hat{M}=X_+ - X_-$ implies $M=\widebar{X}_+ - \widebar{X}_-$, we see that
\bqq
\big\|\hat{M}\big\|\, \geq\, \min \left\{ \big\|\widebar{X}_+\big\|_1 + \big\|\widebar{X}_-\big\|_1:\ M=\widebar{X}_+ - \widebar{X}_- \right\}\, \geq\, \|M\|_1\, ,
\eqq
where the last lower bound follows from the triangle inequality. On the other hand, we construct an ansatz for $X_\pm$ achieving the above lower bound. From the singular value decomposition theorem, it is immediately seen that for all real matrices $M$ there exists a decomposition $M=M_+ - M_-$ such that $\|M_+\|_1=\|M_-\|_1=\frac{\|M\|_1}{2}$, and consequently $\|M\|_1=\|M_+\|_1+\|M_-\|_1$. Then, consider $X_\pm=\frac{\|M\|_1}{2}\, U + \hat{M}_\pm$, so that $X_+-X_-=\hat{M}$. Since there are no cross terms $(X_\pm)_{i0}$ or $(X_\pm)_{0j}$, the condition $(X_\pm)_{00}\geq \big\|\widebar{X}_\pm\big\|_1$ (satisfied by construction) is sufficient to ensure the separability of $X_\pm$, hence this is a valid ansatz. We find
\bqq
\big\|\hat{M}\big\|\, \leq\, U(X_+) + U(X_-)\, =\, \|M_+\|_1+\|M_-\|_1\, =\, \|M\|_1\, ,
\eqq
and we are done.
\end{proof}

\vspace{1ex}
\begin{cor} \label{dh sph}
In the bipartite GPT $AB=\text{\emph{Sph}}_{d_{A}}\tminit\text{\emph{Sph}}_{d_{B}}$, the data hiding ratio against separable measurements can be lower bounded as
\bq
R_{\text{\emph{Sph}}}(\text{\emph{SEP}})\, \geq\, \min\{d_{A},d_{B}\} - 1\, .
\eq
In particular, the ultimate data hiding ratios against locally constrained sets of measurements obey
\bq
R_{\text{\emph{LO}}}(d_{A},d_{B}) \geq R_{\text{\emph{LOCC}}_{\rightarrow}}(d_{A},d_{B}) \geq R_{\text{\emph{LOCC}}}(d_A,d_B) \geq R_{\text{\emph{SEP}}}(d_{A},d_{B}) \geq \min\{d_{A},d_{B}\} - 1\, . \label{univ dh ratio lower}
\eq
\end{cor}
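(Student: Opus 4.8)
The plan is to combine Lemma~\ref{norms sph} with the variational formula for the data hiding ratio furnished by Proposition~\ref{dh ratio}, and then carry out an elementary matrix optimisation. By Proposition~\ref{dh ratio}, $R_{\text{Sph}}(\text{SEP}) = \max_{0\neq X}\|X\|\big/\|X\|_{\text{SEP}}$, the maximum ranging over all tensors $X\in\mathds{R}^{d_A\times d_B}$, with the base norm induced by the minimal tensor product. Restricting this maximisation to the subfamily of tensors of the special form $\hat M$, $M\in\mathds{R}^{(d_A-1)\times(d_B-1)}$, can only lower the value, and for such tensors Lemma~\ref{norms sph} gives $\|\hat M\|=\|M\|_1$ and $\|\hat M\|_{\text{SEP}}=\|M\|_\infty$. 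Hence
\begin{equation*}
R_{\text{Sph}}(\text{SEP})\, \geq\, \max_{0\neq M\in\mathds{R}^{(d_A-1)\times(d_B-1)}} \frac{\|M\|_1}{\|M\|_\infty}\, .
\end{equation*}

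The next step is to evaluate this ratio of matrix norms. Set $r\coloneqq\min\{d_A-1,d_B-1\}=\min\{d_A,d_B\}-1$, the largest possible rank of such an $M$. The singular value decomposition gives $\|M\|_1=\sum_i\sigma_i(M)\leq \mathrm{rank}(M)\,\sigma_1(M)\leq r\,\|M\|_\infty$, with equality whenever $M$ has full rank $r$ and all its singular values coincide. Choosing, for instance, $M$ to be the partial isometry carrying the block $\mathds{1}_r$ in the top-left corner and zeros elsewhere yields $\|M\|_1=r$ and $\|M\|_\infty=1$, so that $R_{\text{Sph}}(\text{SEP})\geq r=\min\{d_A,d_B\}-1$, which proves the first claim. (Explicitly, the corresponding hiding pair of states is obtained by splitting $\hat M=\hat M_+-\hat M_-$ into positive and negative parts and normalising, exactly as in the ansatz at the end of the proof of Lemma~\ref{norms sph}.)

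For the second claim, observe that $AB=\text{Sph}_{d_A}\tminit\text{Sph}_{d_B}$ is a particular bipartite GPT with local dimensions $d_A,d_B$ meeting the constraints~\eqref{CAB bound}; therefore Definition~\ref{univ dh ratio} yields $R_{\text{SEP}}(d_A,d_B)\geq R_{\text{Sph}}(\text{SEP})\geq\min\{d_A,d_B\}-1$, and the remaining inequalities in~\eqref{univ dh ratio lower} are nothing but the general chain~\eqref{chain univ dh}.

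The bulk of the work has already been absorbed into Lemma~\ref{norms sph}, so the present argument is essentially bookkeeping. The one point to check with a little care is that tensors of the form $\hat M$ are genuinely admissible in the maximisation defining $R_{\text{Sph}}(\text{SEP})$ and that their two relevant norms are precisely those computed in the lemma; once this is granted, the elementary inequality $\|M\|_1\leq\mathrm{rank}(M)\,\|M\|_\infty$ forces the extremal $M$ to be a scaled partial isometry of maximal rank, and the conclusion is immediate.
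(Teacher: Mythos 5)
Your proof is correct and follows essentially the same route as the paper: it invokes Lemma~\ref{norms sph} to reduce the problem to the maximal ratio of trace norm to operator norm on $\mathds{R}^{(d_A-1)\times(d_B-1)}$, identifies this maximum as $\min\{d_A,d_B\}-1$, and then appeals to Definition~\ref{univ dh ratio} together with the chain~\eqref{chain univ dh} for the second claim. The extra details you supply (the rank inequality and the explicit maximal-rank partial isometry) simply make explicit what the paper calls ``well known''.
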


\begin{proof}
From Lemma~\ref{norms sph} we know that $R_\text{Sph}(\text{SEP})$ can be lower bounded by the maximal ratio between trace norm and operator norm of matrices in $\mathds{R}^{(d_{A}-1)\times (d_{B}-1)}$, which is well known to be $\min\{d_A-1,d_B-1\}=\min\{d_A,d_B\}-1$. Since we can provide an example of bipartite GPT with local dimensions $d_{A},d_{B}$ for which the data hiding ratio against separable measurements is no smaller than $\min\{d_A,d_B\}-1$, this constitutes a lower bound on the ultimate data hiding ratio $R_{\text{SEP}}(d_{A},d_{B})$.
\end{proof}

\vspace{2ex}
The problem of finding a complementary upper bound for the data hiding ratios in the spherical model is solved by the general result expressed in Theorem~\ref{thm univ} below, which implies that $R_{\text{Sph}}(\text{LO})\leq \min\{d_A,d_B\}$. This yields the almost tight, two-sided bound
\bq
\min\{d_{A},d_{B}\} - 1 \leq R_{\text{Sph}}(\text{SEP}) \leq R_{\text{Sph}}(\text{LOCC}) \leq R_{\text{Sph}}(\text{LOCC}_\rightarrow) \leq R_{\text{Sph}}(\text{LO}) \leq \min\{d_A,d_B\} ,
\label{double bound sph}
\eq
which fully determines the scaling of all the data hiding ratios against locally constrained measurements up to an additive constant.

\vspace{1ex}
\begin{rem}
Corollary~\ref{dh sph} shows that quantum mechanics, even when it is modified to encompass the minimal tensor product composition rule according to Proposition~\ref{min is optimal}, is not optimal from the point of view of data hiding, in the sense that there exist GPTs with the same local dimensions but with a higher data hiding ratio against all locally constrained sets of measurements.
\end{rem}

\end{ex}

\section{Ultimate bound on data hiding effectiveness} \label{sec univ}

Throughout this section, we will prove our main result on ultimate data hiding ratios against locally constrained sets of measurement. Namely, we will show that the lower bound we found in Corollary~\ref{dh sph} by analysing the spherical model is optimal up to an additive constant. As discussed at the end of Section~\ref{sec dh GPTs}, this answers the central question of our investigation, that is, the determination of the general constraints that data hiding is forced to obey in any GPTs.

Before coming to our main theorem, we need some mathematical preliminary on tensor norms. We remind the reader~\cite{DEFANT, RYAN} that given two finite-dimensional Banach spaces $V_{A},V_{B}$ (whose norms will be equally denoted by $\|\cdot\|$ for simplicity), there are two notable ways in which one can construct a norm on the tensor product $V_{A}\otimes V_{B}$. The first of these construction yields the so-called {\it injective norm}, which can be expressed as
\bq
\|X\|_{\varepsilon}\, \coloneqq\, \max\left\{ (f\otimes g)(X):\ \, f\in V_{A}^{*},\ \|f\|_{*}\leq 1,\ g\in V_{B}^{*},\ \|g\|_{*}\leq 1 \right\} .
\label{inj}
\eq
The second norm we are interested in goes under the name of {\it projective norm}, and is defined as
\bq
\|X\|_{\pi}\, \coloneqq\, \inf\left\{ \sum_{i=1}^{n} \|x_{i}\|\,\|y_{i}\| :\ \, n\in\mathds{N},\ X=\sum_{i=1}^{n} x_{i}\otimes y_{i} \right\} .
\label{proj}
\eq
These two norms are dual to each other in the following sense. Thinking of $V_{A}^{*}, V_{B}^{*}$ as Banach spaces equipped with the dual norms $\|\cdot\|_{*}$, we can construct the associated injective and projective tensor norms on $V_{A}^{*} \otimes V_{B}^{*}$, denoted by $\|\cdot\|_{*\varepsilon}$ and $\|\cdot\|_{*\pi}$, respectively. One could wonder, how these norms compare to the dual to~\eqref{inj} and~\eqref{proj}, denoted by $\|\cdot\|_{\varepsilon*}, \|\cdot\|_{\pi*}$, respectively. As it turns out, one has
\bq
\|\cdot\|_{*\varepsilon}\, =\, \|\cdot\|_{\pi*}\, ,\qquad \|\cdot\|_{*\pi}\, =\, \|\cdot\|_{\varepsilon*}\, .
\label{dual inj proj}
\eq

\vspace{1ex}
\begin{note}
The notation is intended to help the reader via simple graphic rules. For instance, the symbol $\|\cdot\|_{\varepsilon*}$ stands for `first construct the injective norm, then take the dual', and conversely $\|\cdot\|_{*\pi}$ means `first take the dual norms, then construct the projective norm out of them'. Then, the above identities can be recovered easily by remembering that taking a $*$ from the external to the internal position (or vice versa) causes an exchange $\varepsilon\leftrightarrow \pi$.
\end{note}

\vspace{2ex}
Concerning the comparison between injective and projective norms, the inequality $\|X\|_{\varepsilon}\, \leq\, \|X\|_{\pi}$ is easily seen to hold for all $X\in V_{A}\otimes V_{B}$. 
To see why this is the case, take a tensor $X\in V_A\otimes V_B$, and consider a decomposition $X=\sum_{i=1}^n x_i\otimes y_i$ as in~\eqref{proj}. For any two functionals $f\in V_A^*$ and $g\in V_B^*$ such that $\|f\|_*, \|g\|_*\leq 1$, one has $|f(x_i)|\leq \|x_i\|$ and $|g(y_i)|\leq \|y_i\|$, so that
\bqq
|(f\otimes g)(X)|\, =\, \bigg| \sum_{i=1}^n f(x_i) g(y_i) \bigg|\, \leq\, \sum_{i=1}^n |f(x_i)|\, |g(y_i) |\, \leq\, \sum_{i=1}^n \|x_i\|\, \|y_i\|\, \leq \, \|X\|_\pi\, ,
\eqq
where in the last step we employed~\eqref{proj}. Maximising over $f,g$ as in~\eqref{inj} yields the claim.
We will write symbolically
\bq
\|\cdot\|_{\varepsilon}\, \leq\, \|\cdot\|_{\pi}\, .
\label{inj=<proj}
\eq
As it turns out, for simple tensors $x\otimes y$ both norms coincide with the product of the local norms, i.e. $\|x\otimes y\|_{\varepsilon}=\|x\otimes y\|_{\pi}=\|x\|\, \|y\|$ (`reasonable cross norms').

In this work we are mainly interested in understanding the asymptotic behaviour of certain quantities as the dimension of the systems goes to infinity. This study fits in the so called local theory of Banach spaces, which is concerned in particular with the quantitative analysis of $d$-dimensional normed spaces (as $d\rightarrow \infty$). The investigation of tensor norms and their relations with operator ideals and factorizing operators is a crucial part of those studies~\cite{DEFANT}. Thus, it is not surprising that tensor norms play an important role in our approach. In this work, we will only care about the relation between projective and injective tensor norms when they are defined on finite dimensional Banach spaces. On the one hand, we are interested in the study of upper and lower bounds for the quotient $\|\cdot\|_{\pi}/ \|\cdot\|_{\varepsilon}$ for some particular spaces of relevance in our study. Let us mention in passing that the problem of comparing the projective and the injective tensor norms in the setting of infinite dimensional Banach leads to very deep results in the theory~\cite{Pisier, PisierII}. On the other hand, we ask ourselves what happens when we consider general finite dimensional Banach spaces with fixed dimensions. That is to say, we want to find the smallest $\mu(d_{A},d_{B})$ such that $\|\cdot\|_{\pi}\leq \mu\, \|\cdot\|_{\varepsilon}$ holds true for all Banach spaces $V_{A},V_{B}$ of dimensions $d_{A},d_{B}$, respectively. At this stage it is not even obvious that such a quantity will be finite. The answer to this question is provided by the following result, which could be known to experts in the topic, although we did not find any explicit reference.

\vspace{2ex}
\begin{prop} \label{prop proj=<ninj}
For all pairs of finite dimensional Banach spaces $V_{A},V_{B}$ with dimensions $d_{A}=\dim V_{A}$, $d_{B}=\dim V_{B}$, we have
\bq
\|\cdot\|_{\pi}\, \leq\, \min\{d_{A},d_{B}\}\, \|\cdot\|_{\varepsilon}\, .
\label{proj=<ninj}
\eq
Furthermore, the constant on the right-hand side of the above inequality is the best possible.
\end{prop}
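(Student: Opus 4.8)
The plan is to prove the two assertions separately: first the inequality~\eqref{proj=<ninj}, then the sharpness of the constant $\min\{d_A,d_B\}$. Throughout I would assume without loss of generality that $d_A\le d_B$, so that $\min\{d_A,d_B\}=d_A$, and I would reduce the $\min$ in the statement to proving $\|\cdot\|_{\pi}\le d_A\|\cdot\|_{\varepsilon}$, the bound $\|\cdot\|_{\pi}\le d_B\|\cdot\|_{\varepsilon}$ following by exchanging the roles of the two factors. The single nontrivial tool for the inequality is \emph{Auerbach's lemma}: every $d_A$-dimensional normed space $V_A$ possesses a basis $(e_i)_{i=1}^{d_A}$ whose dual basis $(f_i)_{i=1}^{d_A}\subset V_A^*$ satisfies $\|e_i\|=\|f_i\|_*=1$ for all $i$.

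For the inequality itself, given $X\in V_A\otimes V_B$, I would expand it along such an Auerbach basis as $X=\sum_{i=1}^{d_A}e_i\otimes z_i$, where $z_i\coloneqq (f_i\otimes\mathrm{id}_{V_B})(X)\in V_B$; that this indeed reproduces $X$ is the elementary coordinate-expansion identity $a=\sum_i f_i(a)e_i$ applied factorwise. The point is then that each $z_i$ is small: for any $g\in V_B^*$ with $\|g\|_*\le 1$ one has $g(z_i)=(f_i\otimes g)(X)$, and since $\|f_i\|_*\le 1$ the defining formula~\eqref{inj} gives $|g(z_i)|\le\|X\|_{\varepsilon}$, hence $\|z_i\|\le\|X\|_{\varepsilon}$. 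Feeding the displayed decomposition into the definition~\eqref{proj} of the projective norm yields $\|X\|_{\pi}\le\sum_{i=1}^{d_A}\|e_i\|\,\|z_i\|\le d_A\|X\|_{\varepsilon}$, which is the claim. (The same computation can be phrased operator-theoretically: viewing $X$ as a finite-rank map $V_A^*\to V_B$, it says the nuclear norm is at most $d_A$ times the operator norm, via factorisation through $\mathrm{id}_{V_A^*}$ and the Auerbach bound on the nuclear norm of the identity; I would probably keep the direct version.)

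For optimality, I would exhibit, for every $d_A\le d_B$, an explicit saturating example: take $V_A=\ell_2^{d_A}$, $V_B=\ell_2^{d_B}$ with their Euclidean norms, regard $\ell_2^{d_A}$ as the span of the first $d_A$ coordinates of $\ell_2^{d_B}$, and set $X\coloneqq\sum_{i=1}^{d_A}e_i\otimes e_i$. The injective norm is $\|X\|_{\varepsilon}=1$, since for unit functionals $f,g$ we have $(f\otimes g)(X)=\sum_{i=1}^{d_A}f_ig_i\le 1$ by Cauchy--Schwarz, with equality at $f=g=e_1^*$. For the lower bound on the projective norm, consider the bilinear form $\Lambda(v,w)\coloneqq\sum_{i=1}^{d_A}v_iw_i$, which obeys $|\Lambda(v,w)|\le\|v\|_2\|w\|_2$ and therefore satisfies $|\Lambda(Y)|\le\|Y\|_{\pi}$ for every $Y\in V_A\otimes V_B$ (test $\Lambda$ against any representation $Y=\sum_k v_k\otimes w_k$); since $\Lambda(X)=d_A$, we get $\|X\|_{\pi}\ge d_A$. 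Combined with~\eqref{proj=<ninj} this forces $\|X\|_{\pi}=d_A=\min\{d_A,d_B\}$ and hence $\|X\|_{\pi}/\|X\|_{\varepsilon}=\min\{d_A,d_B\}$, so no smaller constant can work.

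I do not expect a serious obstacle here: the manipulations with~\eqref{inj} and~\eqref{proj} are routine, and the two genuine ingredients -- Auerbach's lemma and the trivial trace/Cauchy--Schwarz duality $|\Lambda(Y)|\le\|Y\|_{\pi}$ -- are standard facts of the local theory of Banach spaces. The only thing requiring care is the bookkeeping of dual norms, namely checking that the functionals invoked genuinely have dual norm at most $1$; once that is in order, the proposition follows immediately, and in particular the finiteness of $\mu(d_A,d_B)$ (which is a priori unclear) drops out of the Auerbach argument for free.
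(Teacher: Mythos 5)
Your proof is correct and follows essentially the same route as the paper's: Auerbach's lemma combined with norming functionals on $V_B$ gives the upper bound $\|X\|_\pi\le d_A\|X\|_\varepsilon$, and sharpness is witnessed by Euclidean local spaces with a tensor whose singular values all coincide. The only cosmetic difference is that you verify the extremal example by a direct computation with the test form $\Lambda$, whereas the paper identifies $\|\cdot\|_\varepsilon$ and $\|\cdot\|_\pi$ on Euclidean spaces with the operator and trace norms via the duality $\|\cdot\|_{*\varepsilon}=\|\cdot\|_{\pi*}$.
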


\begin{proof}
We start by recalling Auerbach's lemma~\cite[Vol I, Sec. 1.c.3]{LT}, which states that any finite-dimensional Banach space admits a basis $\{v_{i}\}_{i}$, whose associated dual basis we denote by $\{v_{i}^{*}\}_{i}$ (so that $v_{i}^{*}(v_{j})=\delta_{ij}$), such that
\bqq
\|v_{i}\| \leq 1\, ,\quad \|v_{i}^{*}\|_{*}\leq 1\qquad \forall\ i\, .
\eqq
Suppose without loss of generality that $d_{A}\leq d_{B}$. Expand any tensor $X\in V_{A}\otimes V_{B}$ in the local Auerbach basis for $V_{A}$, that is, $X=\sum_{j=1}^{d_{A}} v_{j}\otimes y_{j}$. Choose $d_{A}$ functionals $g_{i}\in V_{B}^{*}$ such that $\|g_{i}\|_{*}\leq 1$ and $g_{i}(y_{i})=\|y_{i}\|$. Since also $\|v_{i}^{*}\|_{*}\leq 1$, using~\eqref{inj} we can lower bound the injective norm as
\bqq
\|X\|_{\varepsilon}\, \geq\, (v_{i}^{*}\otimes g_{i})(X)\, =\, \sum_{j=1}^{d_{A}} v_{i}^{*}(v_{j}) g_{i}(y_{j})\, =\, \sum_{j=1}^{d_{A}} \delta_{ij} g_{i}(y_{j})\, =\, g_{i}(y_{i})\, =\, \|y_{i}\|\, .
\eqq
Then, the definition of projective norm as given in~\eqref{proj} tells us that
\bqq
\|X\|_{\pi}\, \leq\, \sum_{i=1}^{d_{A}} \|v_{i}\|\, \|y_{i}\|\, \leq\, \sum_{i=1}^{d_{A}} \|v_{i}\|\, \|X\|_{\varepsilon}\, =\, d_{A}\, \|X\|_{\varepsilon}\, ,
\eqq
where we employed also the other defining property of the Auerbach basis, i.e. $\|v_{i}\|\leq 1$ for all $i$.

To see that the constant $\min\{d_{A}, d_{B}\}$ is optimal for all $d_{A},d_{B}$, we could appeal to Lemma~\ref{norms sph} together with the results of Subsection~\ref{subsec centr}. However, we resort here to an independent argument.

Consider two Euclidean spaces $V_{A},V_{B}$, whose norms we denote by $|\cdot|_{2}$. We can identify $V_{A}\otimes V_{B}$ with the set of $d_{A}\times d_{B}$ real matrices, denoted by $\mathds{R}^{d_{A}\times d_{B}}$. With this convention, for given $X\in V_{A}\otimes V_{B}$, $f\in V_{A}^{*}\simeq \mathds{R}^{d_{A}}$, and $g\in V_{B}^{*}\simeq \mathds{R}^{d_{B}}$, we can write $(f\otimes g)(X)=f^{T}Xg$. Then, remembering that Euclidean norms are self-dual, we obtain
\bqq
\|X\|_{\varepsilon}\, =\, \max \left\{ f^{T}Xg:\ |f|_{2*},|g|_{2*}\leq 1 \right\}\, =\, \max \left\{ f^{T}Xg:\ |f|_{2},|g|_{2}\leq 1 \right\}\, =\, \|X\|_{\infty}\, .
\eqq
The fact that $|\cdot|_{2*}=|\cdot|_{2}$ also implies that $\|\cdot\|_{*\varepsilon}=\|\cdot\|_{\varepsilon}$. This together with~\eqref{dual inj proj} shows that
\bqq
\|\cdot\|_{\pi}\,=\,\|\cdot\|_{*\varepsilon*}\,=\,\|\cdot\|_{\varepsilon*}\,=\,\|\cdot\|_{\infty *}\,=\,\|\cdot\|_{1}\, .
\eqq
Since $\|X\|_{1}=\min\{d_{A},d_{B}\}\, \|X\|_{\infty}$ whenever all the singular values of $X$ coincide, we can conclude.
\end{proof}

\vspace{2ex}

Note that the previous proof still applies if one of the spaces has infinite dimension. In fact, most of the results of this paper can be extended to that case straightforwardly. However, the case of two infinite-dimensional Banach spaces is completely different, and its study would require the use of more sophisticated Banach space techniques.

In order to apply Proposition~\ref{prop proj=<ninj} to our problem, we need to relate the distinguishability norms on a composite system to the injective and projective norm constructed out of the local base norms. This translation is the subject of our next result.

\vspace{2ex}
\begin{prop} \label{d norms inj proj}
Let $A=(V_{A},C_{A},u_{A})$ and $B=(V_{B},C_{B},u_{B})$ be two GPTs. The local base norms turn $V_{A},V_{B}$ into Banach spaces, and we can construct injective and projective tensor norms on $V_{A}\otimes V_{B}$, denoted simply by $\|\cdot\|_{\varepsilon}$ and $\|\cdot\|_{\pi}$. Considering the composite $A\tminit B$ given by the minimal tensor product as in~\eqref{min max theories}, we can also define on $V_{A}\otimes V_{B}$: (i) the base norm $\|\cdot\|_{A \tminitfoot B}$ associated with $A\tminit B$; and (ii) the local distinguishability norm $\|\cdot\|_{\text{\emph{LO}}}$. Then the following holds true:
\bq
\|\cdot\|_{\pi}\, =\, \|\cdot\|_{A \tminitfoot B}\, ,\qquad \|\cdot\|_{\varepsilon}\, \leq\, \|\cdot\|_{\text{\emph{LO}}}\, . \label{d norms inj proj eq}
\eq
\end{prop}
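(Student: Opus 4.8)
The plan is to prove the two statements in \eqref{d norms inj proj eq} separately, in each case by unwinding the relevant variational formula: \eqref{proj} and \eqref{inj} for the tensor norms, and the dual description of the base norm from Lemma~\ref{dual base}. For the identity $\|\cdot\|_{\pi} = \|\cdot\|_{A \tminitfoot B}$, recall that the composite $A \tminit B$ has cone $C_A \tminit C_B = \mathrm{conv}(C_A \otimes C_B)$ and unit $u_A \otimes u_B$, so Lemma~\ref{dual base} gives $\|X\|_{A \tminitfoot B} = \min\big\{ (u_A \otimes u_B)(X_+) + (u_A \otimes u_B)(X_-) : X_\pm \in C_A \tminit C_B,\ X = X_+ - X_- \big\}$. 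To obtain $\|X\|_{A \tminitfoot B} \le \|X\|_{\pi}$ I would take an arbitrary representation $X = \sum_i x_i \otimes y_i$ as in \eqref{proj}, apply Lemma~\ref{dual base} to each tensor factor to write $x_i = x_i^+ - x_i^-$ and $y_i = y_i^+ - y_i^-$ with $x_i^\pm, y_i^\pm \geq 0$, $u_A(x_i^+) + u_A(x_i^-) = \|x_i\|$ and $u_B(y_i^+) + u_B(y_i^-) = \|y_i\|$, expand the products, and collect the resulting positive tensors into $X_+ := \sum_i (x_i^+ \otimes y_i^+ + x_i^- \otimes y_i^-)$ and $X_- := \sum_i (x_i^+ \otimes y_i^- + x_i^- \otimes y_i^+)$, both of which lie in $C_A \tminit C_B$ by \eqref{minimal}; a one-line computation gives $(u_A \otimes u_B)(X_+) + (u_A \otimes u_B)(X_-) = \sum_i \|x_i\|\,\|y_i\|$, and taking the infimum over representations yields the inequality. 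For the reverse direction $\|X\|_{\pi} \le \|X\|_{A \tminitfoot B}$, I would start from an optimal decomposition $X = X_+ - X_-$ with $X_\pm \in C_A \tminit C_B$, use \eqref{minimal} to write $X_+ = \sum_k a_k \otimes b_k$ and $X_- = \sum_l c_l \otimes d_l$ with $a_k, c_l \in C_A$, $b_k, d_l \in C_B$, and observe that $X = \sum_k a_k \otimes b_k + \sum_l (-c_l) \otimes d_l$ is a legitimate representation of $X$ in the sense of \eqref{proj}; since $\|a\| = u_A(a)$ for every $a \in C_A$ (and analogously on $B$), this representation has cost exactly $(u_A \otimes u_B)(X_+) + (u_A \otimes u_B)(X_-) = \|X\|_{A \tminitfoot B}$, proving the bound.

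For the inequality $\|\cdot\|_{\varepsilon} \le \|\cdot\|_{\text{LO}}$, I would exploit that the dual of a base norm has the order interval $[-u_A, u_A]$ (respectively $[-u_B, u_B]$) as its unit ball, as recorded in Definition~\ref{def base}, so that $\|X\|_{\varepsilon} = \max\big\{ (f \otimes g)(X) : f \in [-u_A, u_A],\ g \in [-u_B, u_B] \big\}$. Given such $f$ and $g$, set $e_A := \tfrac{u_A + f}{2} \in [0, u_A]$ and $e_B := \tfrac{u_B + g}{2} \in [0, u_B]$; then $(e_A, u_A - e_A)$ and $(e_B, u_B - e_B)$ are two-outcome measurements on $A$ and $B$, hence their product is a four-outcome measurement belonging to LO by \eqref{LO}. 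Writing $f = e_A - (u_A - e_A)$ and $g = e_B - (u_B - e_B)$ and multiplying out, $f \otimes g$ becomes the alternating sum $e_A \otimes e_B - e_A \otimes (u_B - e_B) - (u_A - e_A) \otimes e_B + (u_A - e_A) \otimes (u_B - e_B)$ of the four product effects; therefore $(f \otimes g)(X)$ is at most the sum of the absolute values of the four numbers obtained by evaluating these effects on $X$, which by the definition \eqref{d norm} of the distinguishability norm applied to that LO measurement is at most $\|X\|_{\text{LO}}$. Maximising over $f$ and $g$ gives the claim.

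I do not anticipate a genuine obstacle: once the dual formulas above are in place, the argument is a routine unwinding of definitions, requiring nothing about tensor norms beyond \eqref{proj} and \eqref{inj}. The only point that calls for a little care is the sign bookkeeping in the first half of the identity, where after expanding $(x_i^+ - x_i^-) \otimes (y_i^+ - y_i^-)$ one must send the cross terms $x_i^+ \otimes y_i^-$ and $x_i^- \otimes y_i^+$ (which appear with a minus sign) into $X_-$ and the terms $x_i^+ \otimes y_i^+$ and $x_i^- \otimes y_i^-$ into $X_+$, while checking that all four tensors lie in $C_A \otimes C_B \subseteq C_A \tminit C_B$ and that the two partial sums remain in this convex cone; the analogous bookkeeping in the reverse inequality and in the LO estimate is entirely parallel and equally harmless.
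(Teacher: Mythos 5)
Your proof is correct. The inequality $\|X\|_{A \tminfoot B}\le\|X\|_{\pi}$ is argued exactly as in the paper (split each factor of a representation via Lemma~\ref{dual base} and sort the four product terms into $X_{\pm}\in C_{A}\tmin C_{B}$), but your other two steps take genuinely different, and somewhat leaner, routes. For $\|X\|_{\pi}\le\|X\|_{A \tminfoot B}$ the paper first proves the auxiliary fact $\|Y\|_{\pi}=(u_{A}\otimes u_{B})(Y)$ for separable $Y$ (which itself uses the intermediate bound $(u_{A}\otimes u_{B})(Y)\le\|Y\|_{\varepsilon}\le\|Y\|_{\pi}$) and then runs a reverse-triangle-inequality argument on $X=X_{+}-X_{-}$; you instead expand the optimal decomposition $X_{\pm}$ from Lemma~\ref{dual base} into finite sums of products, as guaranteed by~\eqref{minimal} in finite dimension, and read off the $\pi$-cost of the resulting explicit representation of $X$, which equals $(u_{A}\otimes u_{B})(X_{+})+(u_{A}\otimes u_{B})(X_{-})$ because $\|a\|=u(a)$ on the cone. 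This bypasses both the separable-norm identity and the triangle-inequality step at no cost. For $\|\cdot\|_{\varepsilon}\le\|\cdot\|_{\text{LO}}$ the paper plugs the coarse-grained two-outcome measurement $\left(\frac{u_{A}\otimes u_{B}+f\otimes g}{2},\,\frac{u_{A}\otimes u_{B}-f\otimes g}{2}\right)$ into the alternative formula~\eqref{d norm altern} of Lemma~\ref{unit ball dual d norm}, whereas you evaluate the four-outcome product measurement built from $e_{A}=\frac{u_{A}+f}{2}$ and $e_{B}=\frac{u_{B}+g}{2}$ directly in the defining expression~\eqref{d norm}; both hinge on the same expansion of $f\otimes g$ into product effects, but your version does not need the coarse-graining lemma, while the paper's phrasing matches its systematic use of~\eqref{d norm altern} elsewhere.
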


\begin{proof}
We start by showing the first relation in~\eqref{d norms inj proj eq}. Consider $X\in V_{A}\otimes V_{B}$, and decompose it as $X=\sum_{i=1}^{n} x_i\otimes y_i$ in such a way that $\|X\|_{\pi} =\sum_{i=1}^{n} \|x_i\|\,\|y_i\|$ (see the definition~\eqref{proj}). According to Lemma~\ref{dual base}, we can construct $x_i^{\pm}\in C_{A}$ and $y_i^{\pm}\in C_{B}$ such that $x_i=x_i^+ - x_i^-$, $\|x_i\|=u_A\big(x_i^{+}\big)+u_A\big(x_i^{-}\big)$ and analogously for $y_i^\pm$. Then,
\bqq
X\, =\, \sum_{i=1}^{n} \left(x_i^{+}\otimes y_i^{+} + x_i^{-} \otimes y_i^{-}\right)\, -\, \sum_{i=1}^{n} \left(x_i^{+}\otimes y_i^{-} + x_i^{-}\otimes y_i^{+}\right)\, ,
\eqq
and since 
\bqq
\sum_{i=1}^{n} \left(x_i^{+}\otimes y_i^{+} + x_i^{-} \otimes y_i^{-}\right),\  \sum_{i=1}^{n} \left(x_i^{+}\otimes y_i^{-} + x_i^{-} \otimes y_i^{+}\right)\ \in\ C_{A}\tmin C_{B}\, ,
\eqq
the dual formula~\eqref{dual base eq} yields
\begin{align*}
\|X\|_{A \tminfoot B}\, &\leq\, (u_{A}\otimes u_{B})\bigg( \sum_{i=1}^{n} \left(x_i^{+}\otimes y_i^{+} + x_i^{-} \otimes y_i^{-}\right)\bigg)\, +\, (u_{A}\otimes u_{B})\bigg( \sum_{i=1}^{n} \left(x_i^{+}\otimes y_i^{-} + x_i^{-} \otimes y_i^{+}\right)\bigg) \\
&=\, \sum_{i=1}^{n} \left( u_A\big(x_i^{+}\big)+u_A\big(x_i^{-}\big) \right) \left( u_B\big(y_i^{+}\big)+u_B\big(y_i^{-}\big)\right)\, =\, \sum_{i=1}^{n} \big\|x_i\big\|\, \big\|y_i\big\|\, =\, \|X\|_{\pi}\, .
\end{align*}
To show the converse, notice first that $\|X\|_{\pi}=(u_{A}\otimes u_{B})(X)$ holds true for all separable $X\in C_{A} \tmin C_{B}$. In fact, writing $X=\sum_{i} a_{i}\otimes b_{i}$ for $a_{i},b_{i}$ positive, we see that on the one hand $\|X\|_{\pi}\leq \sum_{i} \|a_{i}\|\, \|b_{i}\|= \sum_{i} u_{A}(a_{i}) u_{B}(b_{i}) = (u_{A}\otimes u_{B})(X)$, while on the other hand $(u_{A}\otimes u_{B})(X)\leq \|X\|_{\varepsilon}\leq\|X\|_{\pi}$ by inequality~\eqref{inj=<proj}. Now, for a generic $X\in V_{A}\otimes V_{B}$ apply once again Lemma~\ref{dual base} to construct $X_{\pm}\in C_{A}\tmin C_{B}$ such that $X=X_{+}-X_{-}$ and
\bqq
\|X\|_{A\tminfoot B}=(u_{A}\otimes u_{B})(X_{+}) + (u_{A}\otimes u_{B})(X_{-})\, .
\eqq
By using the above observation and the triangle inequality, we find
\bqq
\|X\|_{\pi} - (u_{A}\otimes u_{B})(X_{-})\, =\, \|X\|_{\pi} - \|X_{-}\|_{\pi}\, \leq\, \|X+X_{-}\|_{\pi}\, =\, \|X_{+}\|_{\pi}\, =\,  (u_{A}\otimes u_{B})(X_{+})\, ,
\eqq
from which
\bqq
\|X\|_{\pi}\, \leq\, (u_{A}\otimes u_{B})(X_{-}) + (u_{A}\otimes u_{B})(X_{+})\, =\, \|X\|_{A \tminfoot B}\, .
\eqq
This completes the proof of the equality $\|\cdot\|_{\pi}\, =\, \|\cdot\|_{A \tminfoot B}$. Now, we move on to the second relation in~\eqref{d norms inj proj eq}. To find a lower bound on the separability norm, we will employ the expression~\eqref{d norm altern} with $\mathcal{M}=\text{LO}$. For arbitrary $X\in V_{A}\otimes V_{B}$ and $f\in V_{A}^{*},\, g\in V_{B}^{*}$ such that $\|f\|_{*}, \|g\|_{*}\leq 1$, we show that $f\otimes g$ is a valid test functional to be plugged into~\eqref{d norm altern} since $\mu\coloneqq \left( \frac12 (u_{A}\otimes u_{B} + f\otimes g),\, \frac12 (u_{A}\otimes u_{B} - f\otimes g)\right) \in \text{LO}$. In fact, notice that
\begin{align}
\frac12 \left( u_{A}\otimes u_{B} + f\otimes g \right)\, &=\, \frac{u_{A}+f}{2}\otimes \frac{u_{B}+g}{2}\, +\, \frac{u_{A}-f}{2}\otimes \frac{u_{B}-g}{2}\, , \label{fg local 1} \\
\frac12 \left( u_{A}\otimes u_{B} - f\otimes g \right)\, &=\, \frac{u_{A}+f}{2}\otimes \frac{u_{B}-g}{2}\, +\, \frac{u_{A}-f}{2}\otimes \frac{u_{B}+g}{2}\, . \label{fg local 2}
\end{align}
Thanks to the fact that the unit balls of the dual to the local base norms have the form $B_{\|\cdot\|_{*}}=[-u,u]$ (Definition~\ref{def base}), we know that $\left( \frac12 (u_{A}+ f),\, \frac12(u_{A}-f) \right)$ is a valid measurement on $A$, and analogously $\left( \frac12 (u_{B}+ g),\, \frac12(u_{B}-g) \right)$ is a measurement on $B$. Using~\eqref{LO}, it is easy to see that $\mu$ is indeed obtainable from a product measurement via a coarse graining procedure as defined in~\eqref{coarse}. Then, equation~\eqref{d norm altern} yields $\|X\|_{\text{LO}}\geq (f\otimes g)(X)$, which becomes in turn $\|X\|_{\text{LO}}\geq \|X\|_{\varepsilon}$ once we maximise over the functionals $f,g$ satisfying $\|f\|_{*},\|g\|_{*}\leq 1$.
\end{proof}

\vspace{1ex}
\begin{cor} \label{cor d norms}
Any global base norm $\|\cdot\|$ in a bipartite GPT constructed according to~\eqref{CAB bound} must obey $\|x\otimes y\|=\|x\otimes y\|_{\text{\emph{LO}}}=\|x\|\, \|y\|$ for all $x\in V_A$ and $y\in V_B$, where $\|\cdot\|$ stands for (both) the local base norms. 
\end{cor}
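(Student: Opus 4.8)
The plan is to squeeze $\|x\otimes y\|$ between two quantities that both collapse to $\|x\|\,\|y\|$ on simple tensors, using results already established. From Proposition~\ref{d norms inj proj} we have the two relations
\begin{equation*}
\|\cdot\|_{\varepsilon}\,\leq\,\|\cdot\|_{\text{LO}}\,,\qquad \|\cdot\|_{\pi}\,=\,\|\cdot\|_{A\tminfoot B}\,,
\end{equation*}
where $\|\cdot\|_{\varepsilon},\|\cdot\|_{\pi}$ denote the injective and projective tensor norms built from the local base norms on $V_A,V_B$. On a simple tensor these two tensor norms both equal the product of the local norms, $\|x\otimes y\|_{\varepsilon}=\|x\otimes y\|_{\pi}=\|x\|\,\|y\|$ (the reasonable cross norm property recalled just after~\eqref{inj=<proj}).

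Next I would bound the global base norm $\|\cdot\|$ of the composite GPT from both sides. Since $\text{LO}\subseteq\mathbf{M}_{AB}$ by Definition~\ref{locally constr}, and since $\|\cdot\|_{\mathbf{M}_{AB}}$ coincides with the base norm (property~(iv) of distinguishability norms), the monotonicity of $\|\cdot\|_{\mathcal{M}}$ under inclusion (property~(iii)) gives $\|\cdot\|_{\text{LO}}\leq\|\cdot\|$; here $\|\cdot\|_{\text{LO}}$ is a bona fide norm because LO is informationally complete by the local tomography principle, as observed after Definition~\ref{locally constr}. On the other hand, inequality~\eqref{Cmin largest norm} established in the proof of Proposition~\ref{min is optimal} yields $\|\cdot\|\leq\|\cdot\|_{A\tminfoot B}$ for every admissible composite obeying~\eqref{CAB bound}. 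Chaining these facts for $X=x\otimes y$ produces
\begin{equation*}
\|x\|\,\|y\|\,=\,\|x\otimes y\|_{\varepsilon}\,\leq\,\|x\otimes y\|_{\text{LO}}\,\leq\,\|x\otimes y\|\,\leq\,\|x\otimes y\|_{A\tminfoot B}\,=\,\|x\otimes y\|_{\pi}\,=\,\|x\|\,\|y\|\,,
\end{equation*}
so all of the inequalities in this line are in fact equalities, which is precisely the assertion of the corollary.

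I do not expect any genuine obstacle: the statement is a straightforward assembly of Propositions~\ref{min is optimal} and~\ref{d norms inj proj} together with the reasonable cross norm identity. The only points that call for a moment's attention are checking that $\|\cdot\|_{\text{LO}}$ is well-defined (i.e.\ that LO is informationally complete) and noting that it is exactly the two-sided bound~\eqref{CAB bound} which licenses the inequality $\|\cdot\|\leq\|\cdot\|_{A\tminfoot B}$; both are dispatched by results already in hand.
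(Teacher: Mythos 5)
Your proof is correct and follows essentially the same route as the paper's: the paper also establishes the chain $\|x\|\,\|y\|=\|x\otimes y\|_{\varepsilon}\leq\|x\otimes y\|_{\text{LO}}\leq\|x\otimes y\|\leq\|x\otimes y\|_{A\tminfoot B}=\|x\otimes y\|_{\pi}=\|x\|\,\|y\|$ by combining the bound $\|\cdot\|\leq\|\cdot\|_{A\tminfoot B}$ from~\eqref{CAB bound} with the two relations of Proposition~\ref{d norms inj proj} and the cross-norm identity. Your extra remarks on monotonicity of $\|\cdot\|_{\mathcal{M}}$ and informational completeness of LO merely make explicit what the paper leaves implicit.
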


\begin{proof}
On the one hand, since~\eqref{CAB bound} holds, the global base norms must be upper bounded by the one associated with the minimal tensor product, which coincides with $\|\cdot\|_\pi$ by Proposition~\ref{d norms inj proj}. On the other hand, the second identity in~\eqref{d norms inj proj eq} ensures that $\|\cdot\|_{\text{LO}}\geq \|\cdot\|_\varepsilon$. We already saw how injective and projective norm coincide on simple tensors. Then, putting all together we obtain
\bqq
\|x\|\, \|y\|\, =\, \| x \otimes y\|_\varepsilon\, \leq\, \|x\otimes y\|_{\text{LO}}\, \leq\, \|x\otimes y\|\, \leq\, \|x\otimes y\|_{A \tminfoot B}\, =\, \|x\otimes y\|_\pi\, =\, \|x\| \, \|y\|\, ,
\eqq
concluding the proof.
\end{proof}

\vspace{2ex}
We are finally ready to prove one of our main results, that is, the quasi-optimality of the lower bound~\eqref{univ dh ratio lower}.

\vspace{2ex}
\begin{thm}[Upper bound on ultimate data hiding ratios] \label{thm univ} 
Let $A,B$ be two GPTs, and let their composite $AB$ obey~\eqref{CAB bound}. Then the data hiding ratios against locally constrained sets of measurements satisfy the bound
\bq
R(\text{\emph{SEP}})\, \leq\, R(\text{\emph{LOCC}})\, \leq\, R(\text{\emph{LOCC}}_\rightarrow)\, \leq\, R(\text{\emph{LO}})\, \leq\, \max_{0\neq X\in V_A\otimes V_B} \frac{\| X \|_\pi}{\| X \|_\varepsilon}\, . \label{thm univ eq1}
\eq
In particular, the corresponding ultimate data hiding ratios, as given in Definition~\ref{univ dh ratio}, are upper bounded as follows:
\bq
R_{\text{\emph{SEP}}}(d_{A},d_{B})\, \leq\, R_{\text{\emph{LOCC}}}(d_A,d_B)\, \leq\, R_{\text{\emph{LOCC}}_{\rightarrow}}(d_{A},d_{B})\, \leq\, R_{\text{\emph{LO}}}(d_{A},d_{B})\, \leq\, \min\{d_A,d_B\}\, . \label{thm univ eq2}
\eq
\end{thm}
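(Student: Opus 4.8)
The plan is to obtain the statement by assembling the preliminary results proved above; the theorem is essentially their corollary. First, the chain $R(\text{SEP})\le R(\text{LOCC})\le R(\text{LOCC}_\rightarrow)\le R(\text{LO})$ in~\eqref{thm univ eq1} is already contained in~\eqref{chain dh}, which follows from the inclusions~\eqref{chain M} and the monotonicity of the data hiding ratio in $\mathcal{M}$ (Lemma~\ref{elem dh}). Likewise, once~\eqref{thm univ eq1} is in hand, the first three inequalities of~\eqref{thm univ eq2} are just~\eqref{chain univ dh}. So all the work is in the rightmost inequality of~\eqref{thm univ eq1} and in turning it into the bound $\min\{d_A,d_B\}$ of~\eqref{thm univ eq2}.

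For the bound $R(\text{LO})\le \max_{0\neq X}\|X\|_\pi/\|X\|_\varepsilon$, I would start from Proposition~\ref{dh ratio}, which writes $R(\text{LO})=\max_{0\neq X\in V_A\otimes V_B}\|X\|_{AB}/\|X\|_{\text{LO}}$ in terms of the global base norm $\|\cdot\|_{AB}$ of the admissible composite $AB$, and then bound the numerator and denominator of this quotient separately. For the numerator, inequality~\eqref{Cmin largest norm} (the monotonicity of the base norm under enlargement of the cone, established in the proof of Proposition~\ref{min is optimal}) gives $\|X\|_{AB}\le \|X\|_{A \tminfoot B}$ for every composite obeying~\eqref{CAB bound}, while the first identity in~\eqref{d norms inj proj eq} of Proposition~\ref{d norms inj proj} says that $\|X\|_{A \tminfoot B}=\|X\|_\pi$. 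For the denominator, the second relation in~\eqref{d norms inj proj eq} gives $\|X\|_{\text{LO}}\ge \|X\|_\varepsilon$. Combining these two estimates termwise inside the maximum yields $R(\text{LO})\le \max_{0\neq X}\|X\|_\pi/\|X\|_\varepsilon$, which is the last inequality in~\eqref{thm univ eq1}.

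Finally,~\eqref{thm univ eq2} follows by passing to the supremum over composites. By Definition~\ref{univ dh ratio}, $R_{\text{LO}}(d_A,d_B)$ is the supremum of $R(\text{LO})$ over all admissible composites with local dimensions $d_A,d_B$; by the previous paragraph this is at most the supremum of $\max_{0\neq X}\|X\|_\pi/\|X\|_\varepsilon$ taken over all pairs of local base norms on spaces of dimensions $d_A,d_B$, i.e. over all pairs of Banach spaces of those dimensions, and Proposition~\ref{prop proj=<ninj} bounds this quantity by $\min\{d_A,d_B\}$. The remaining inequalities of~\eqref{thm univ eq2} are again~\eqref{chain univ dh}.

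I do not expect a serious technical obstacle: the real content sits in Propositions~\ref{prop proj=<ninj} and~\ref{d norms inj proj} and in the observation (Proposition~\ref{min is optimal}) that the minimal tensor product is the worst case, all of which are available. The one point that needs care is the legitimacy of bounding $\|X\|_{AB}$ by $\|X\|_{A \tminfoot B}$ while keeping $\|\cdot\|_{\text{LO}}$ fixed: this rests on the fact that, among the locally constrained classes, $\text{LO}$ does not depend on the bipartite cone, and that the injective norm built from the local base norms is precisely the object lower-bounding $\|\cdot\|_{\text{LO}}$ — exactly what Proposition~\ref{d norms inj proj} supplies.
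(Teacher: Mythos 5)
Your proposal is correct and follows essentially the same route as the paper: reduce to $R(\text{LO})$ via~\eqref{chain dh}, bound the global base norm by the minimal-tensor-product base norm (Proposition~\ref{min is optimal}), identify the latter with $\|\cdot\|_\pi$ and lower-bound $\|\cdot\|_{\text{LO}}$ by $\|\cdot\|_\varepsilon$ via Proposition~\ref{d norms inj proj}, and conclude with Proposition~\ref{prop proj=<ninj}. Nothing essential differs, and your remark on why $\|\cdot\|_{\text{LO}}$ may be held fixed while the composite cone varies matches the paper's own observation.
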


\begin{proof}
Since the inequalities~\eqref{chain dh} hold, we have to upper bound only the data hiding ratio against local operations. Using~\eqref{dh ratio} and Proposition~\ref{min is optimal}, we know that for a fixed pair of GPTs $A=(V_{A},C_{A},u_{A})$ and $B=(V_{B},C_{B},u_{B})$, this latter ratio satisfies $R(\text{LO})\leq \max_{X\neq 0} \|X\|_{A \tminfoot B} \big/ \|X\|_{\text{LO}}$. Now, Proposition~\ref{d norms inj proj} states that $\|X\|_{A \tminfoot B}=\|X\|_{\pi}$ and $\|X\|_{\text{LO}}^{-1}\leq \|X\|_{\varepsilon}^{-1}$, from which we deduce~\eqref{thm univ eq1}. Using also Proposition~\ref{prop proj=<ninj}, we see that the right-hand side of~\eqref{thm univ eq1} is upper bounded by $\min\{d_A,d_B\}$ for all GPTs of fixed local dimensions $d_A,d_B$, proving also~\eqref{thm univ eq2}.
\end{proof}

\vspace{1ex}
\begin{rem}
Specialising Theorem~\ref{thm univ} to the modified version of quantum mechanics that we called $W$-theory in Example~\ref{ex QM} yields an improvement of~\cite[Lemma 20]{Brandao area law}, in the form of the inequality
\bq
\|X\|_W\, \leq\, n^2 \|X\|_\varepsilon\, ,
\label{improvement Brandao}
\eq
valid for all Hermitian operators $X$ on $\mathds{C}^{n_A}\otimes\mathds{C}^{n_B}$, where $n=\min\{n_A, n_B\}$, as usual. Observe that the right-hand side of~\eqref{improvement Brandao} coincides with the right-hand side of~\cite[Eq. (C1)]{Brandao area law}, while the left-hand side of~\eqref{improvement Brandao} is larger than the corresponding left-hand side of~\cite[Eq. (C1)]{Brandao area law}, as follows from~\eqref{Cmin largest norm},~\eqref{W norm}, and from the fact that the trace norm $\|\cdot\|_1$ is the base norm corresponding to the standard quantum mechanical composition rule.

Let us stress that neither of the two bounds $R_{QM}(\text{LO})\leq c\, \sqrt{n_A n_B}$~\cite{VV dh} and $R_{QM}(\text{LO})\leq \min\{n_A^2, n_B^2\}$~\cite{Brandao area law} is tight. If $n_A$ and $n_B$ are very different from each other the latter bound will be more effective, while if they are of the same order the former will be preferable. In conclusion, as we already highlighted when discussing data hiding in quantum mechanics, determining the optimal scaling of $R_{QM}(\text{LO})$ remains an interesting open problem. At the same time, this example shows how consequences drawn from general results like Theorem~\ref{thm univ} can shed some light even on well-studied problems.
\end{rem}

\section{Data hiding in special classes of GPTs} \label{sec special}

Until now, we have been mainly interested in investigating the strongest cases of data hiding, in order to study the ultimate, intrinsic bounds characterising this non-classical phenomenon. For how well-motivated this inclination to universality can be, throughout this section we want to take a different approach and look into particular classes of models, chosen as suitable generalisations of Examples~\ref{ex QM} and~\ref{ex sph}.

Subsection~\ref{subsec centr} is concerned with all the GPTs whose state space, just like in the spherical model, is centrally symmetric. It turns out that in this case all the data hiding ratios against locally restricted sets of measurements are equal up to an additive constant, and moreover can be computed as a maximal ratio between a projective and an injective norm. On the one hand, this is reminiscent of the proof of Theorem~\ref{thm univ}, where we upper bounded $R(\text{LO})$ with the maximal projective-injective ratio induced by the local base norms. However, while in that case the only piece of information we could extract was the existence of the upper bound, here computing such a quantity yields the exact data hiding ratio up to additive constants (Theorem~\ref{centr ratio}). On the other hand, we see thanks to Example~\ref{ex sph} how this reduction can simplify the solution of the model. In that case, in fact, we were able to translate the data hiding problem into the computation of the maximal ratio between two matrix norms, and the answer to this latter question was a natural consequence of basic theorems in matrix analysis. We show how this approach can be pushed further, by using the machinery we develop to solve another `natural' model (Example~\ref{cubic}).

Throughout Subsection~\ref{subsec Werner}, we look into those GPTs whose vector space is endowed with a representation of a compact group that maps states to states but is otherwise irreducible on the section $\{x\in V:\, u(x)=0\}$. Although this assumption is quite strong, we show how it encompasses the main physically relevant examples of GPTs, like classical probability theory and quantum theory. With the group integral at hand, we are able to generalise the construction of Werner states and to use them for estimating data hiding in terms of simple geometrical parameters of the model (Theorem~\ref{dh Werner}).

\subsection{Centrally symmetric models} \label{subsec centr}

The solution of the spherical model we gave in Example~\ref{ex sph} through Lemma~\ref{norms sph} and Corollary~\ref{dh sph} was based on two remarkable facts: on the one hand, we could achieve the data hiding ratio (up to an additive constant) by employing only tensors of the form $\hat{M}$ for $M\in \mathds{R}^{(d_{A}-1)\times (d_{B}-1)}$, and on the other hand we were able to find simple expressions for base and separability norm of tensors of this simplified form. Here, we want to take the chance to generalise these intuitions a bit further, to encompass any model whose state space is centrally symmetric. Let us start with the following definition.

\vspace{2ex}
\begin{Def}[Centrally symmetric models] \label{centr}
A GPT of the form $(\mathds{R}^{d}, C, u)$ such that $u=(1,0,\ldots,0)^T$ is said to be \emph{centrally symmetric} if there exists a norm $|\cdot|$ on $\mathds{R}^{d-1}$ such that $C=\{(x_{0},\widebar{x})\in \mathds{R}\oplus\mathds{R}^{d-1}:\ x_{0}\geq |\widebar{x}|\}$.
\end{Def}

\vspace{2ex}
As can be easily verified, the dual cone to $C=\{(x_{0},\widebar{x})\in \mathds{R}\oplus\mathds{R}^{d-1}:\ x_{0}\geq |\widebar{x}|\}$ shares the same structure, being given by
\bqq
C ^*\, =\, \{(y_{0},\widebar{y})\in \mathds{R}\oplus\mathds{R}^{d-1}:\ y_{0}\geq |\widebar{y}|_*\}\, ,
\eqq
where $|\cdot|_*$ is the dual to the norm $|\cdot|$. The base norm of a centrally symmetric GPT is given by $\left\|(x_0,\widebar{x})\right\|=\max\{|x_0|, |\widebar{x}|\}$. Another peculiarity of centrally symmetric models is the existence of a privileged state, denoted by $u_{*}\coloneqq (1,0,\ldots,0)^T$. The vector space can then be written as 
\bq
\mathds{R}^{d}\, =\, \mathds{R} u_{*}\oplus \mathds{R}^{d-1}\, .
\label{V decomp}
\eq
We will keep denoting by $x_0, \widebar{x}$ the two components of $x\in \mathds{R}^d$ according to the above decomposition. Along the same lines, for $v\in \mathds{R}^{d-1}$ we will write $\hat{v}\coloneqq 0\oplus v$. A remarkable feature of centrally symmetric models is the existence of a simple linear map $\Lambda:\mathds{R}^{d}\rightarrow \mathds{R}^{d}$, given by
\bq
\Lambda \,\coloneqq\, 1 \oplus (-I_{d-1})
\label{Lambda}
\eq
according to the decomposition~\eqref{V decomp} and with $I_{d-1}$ denoting the identity on $\mathds{R}^{d-1}$, that is also an order isomorphism (i.e. such that $\Lambda(C)=C$). Consequently, for all $x\in \mathds{R}^d$ one has $\|x\|=\|\Lambda(x)\|$ (also obvious from the explicit formula for the base norm given above). Now, let us turn to bipartite systems. From~\eqref{V decomp} we infer the natural decomposition
\bq
V_{AB}\, =\, \mathds{R}^{d_{A}}\otimes \mathds{R}^{d_{B}}\, =\, \left(\mathds{R}\, u_{A*}\otimes u_{B*}\right) \oplus \left(u_{A*}\otimes \mathds{R}^{d_{B}-1}\right) \oplus \left(\mathds{R}^{d_{A}-1}\otimes u_{B*}\right) \oplus \left(\mathds{R}^{d_{A}-1}\otimes  \mathds{R}^{d_{B}-1}\right) . \label{VAB decomp}
\eq
As a side remark, observe that the maps $\Lambda\otimes I,\, I\otimes \Lambda,\, \Lambda\otimes \Lambda$ preserve separability of states and (up to an irrelevant transposition) separability of effects. Therefore, the norm $\|\cdot\|_{\text{SEP}}$ is left invariant by any of these maps. As is easy to verify, the same is true for all the four locally constrained distinguishability norms, with the possible exception of $\|\cdot\|_{\text{LOCC}}$.

Concerning the analysis of the data hiding properties, what we did in the case of the spherical model was to consider only the latter of the above addends, and to a posteriori justify this restriction by employing Theorem~\ref{thm univ} to show that the obtained result is tight up to an additive constant. As it turns out, this procedure can be always followed for centrally symmetric GPTs without loss of generality. This is the content of our first result. As throughout Example~\ref{ex sph}, for a tensor $X\in V_{AB}$ we denote by $\widebar{X}$ its component pertaining to the fourth addend of~\eqref{VAB decomp}.

\vspace{2ex}
\begin{prop} \label{prop centr R r}
Let $AB$ be a bipartite system formed by two centrally symmetric GPTs joined with any rule that respects~\eqref{CAB bound}. Then for all $X\in V_{AB}$ we have
\bq
\frac{\|X\|}{\|X\|_{\text{\emph{LO}}}}\, \leq\, \frac{\big\|\widebar{X}\big\|}{\big\|\widebar{X}\big\|_{\text{\emph{LO}}}} + 2\, . \label{prop centr R r eq}
\eq
\end{prop}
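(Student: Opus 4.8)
The plan is to split $X$ into its "genuinely bipartite" component $\widebar X$ — the part lying in the last addend of~\eqref{VAB decomp} — and the remainder $Y := X - \widebar X$, which lives in the first three addends of~\eqref{VAB decomp}, and then to reduce~\eqref{prop centr R r eq} to the two estimates
\[
\|\widebar X\|_{\text{LO}}\, \leq\, \|X\|_{\text{LO}}\, ,\qquad\qquad \|Y\|\, \leq\, 2\,\|X\|_{\text{LO}}\, .
\]
Granting these, the triangle inequality for the base norm gives $\|X\| \leq \|\widebar X\| + \|Y\| \leq \|\widebar X\| + 2\|X\|_{\text{LO}}$; dividing through by $\|X\|_{\text{LO}}$ and using $\|\widebar X\|_{\text{LO}} \leq \|X\|_{\text{LO}}$ to bound $\|\widebar X\| / \|X\|_{\text{LO}}$ from above by $\|\widebar X\| / \|\widebar X\|_{\text{LO}}$ produces exactly~\eqref{prop centr R r eq}. (When $\widebar X = 0$ the first summand on the right of~\eqref{prop centr R r eq} is read as $0$, and then $\|X\| = \|Y\| \leq 2\|X\|_{\text{LO}}$ already gives the bound.)

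For the first estimate I would use the order automorphism $\Lambda$ of Definition~\ref{centr}. A direct inspection of~\eqref{VAB decomp} shows that the four maps $I\otimes I$, $\Lambda\otimes I$, $I\otimes\Lambda$, $\Lambda\otimes\Lambda$ act on the last addend with signs $(+,-,-,+)$, so that
\[
\widebar X\, =\, \tfrac14\big( X\, -\, (\Lambda\otimes I)X\, -\, (I\otimes\Lambda)X\, +\, (\Lambda\otimes\Lambda)X \big)\, .
\]
Since $\|\cdot\|_{\text{LO}}$ is left invariant by each of $\Lambda\otimes I$, $I\otimes\Lambda$, $\Lambda\otimes\Lambda$ — this invariance is noted just before the statement of the proposition — the triangle inequality applied to the above identity gives $\|\widebar X\|_{\text{LO}} \leq \|X\|_{\text{LO}}$ at once.

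For the second estimate the key point is that $Y$ is determined by the two marginals of $X$. Matching components against~\eqref{VAB decomp}, one finds
\[
Y\, =\, u_{A*}\otimes u_A(X)\, +\, \big( u_B(X) - u_{AB}(X)\, u_{A*} \big)\otimes u_{B*}\, ,
\]
where $u_A(X)\in V_B$ and $u_B(X)\in V_A$ are the reduced states (partial traces) of $X$. Applying the triangle inequality, then Corollary~\ref{cor d norms} together with $\|u_{A*}\| = \|u_{B*}\| = 1$, and finally the elementary fact that in a centrally symmetric GPT $\|(0,\widebar z)\| = |\widebar z| \leq \max\{|z_0|,|\widebar z|\} = \|(z_0,\widebar z)\|$ (so that the base norm of $u_B(X) - u_{AB}(X)u_{A*}$, which is the bar-part of $u_B(X)$, is at most $\|u_B(X)\|$), I obtain $\|Y\| \leq \|u_A(X)\|_B + \|u_B(X)\|_A$. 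Each marginal norm is in turn at most $\|X\|_{\text{LO}}$: for any measurement $(f_j)_j \in \mathbf{M}_B$ the family $(u_A\otimes f_j)_j$ belongs to LO (it is the product of the trivial measurement on $A$ with $(f_j)_j$), so $\|X\|_{\text{LO}} \geq \sum_j |(u_A\otimes f_j)(X)| = \sum_j |f_j(u_A(X))|$, and taking the supremum over $(f_j)_j$ recovers $\|u_A(X)\|_B$ since $\|\cdot\|_{\mathbf{M}_B} = \|\cdot\|_B$; the bound $\|u_B(X)\|_A \leq \|X\|_{\text{LO}}$ follows symmetrically.

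The step I expect to demand the most care is the marginal identity for $Y$: one must verify, component by component in~\eqref{VAB decomp} (using $u_A(u_{A*}) = 1$, $u_A(\widehat v) = 0$ for $v\in\mathds{R}^{d_A-1}$, and the analogous relations on $B$), that $u_{A*}\otimes u_A(X)$ reproduces precisely the first two addends of $X$ while $\big(u_B(X) - u_{AB}(X)u_{A*}\big)\otimes u_{B*}$ reproduces precisely the third, with no double-counting. This is routine but fiddly bookkeeping. Everything else is soft; note in particular that the bipartite cone $C_{AB}$ enters the argument only through the validity of~\eqref{CAB bound}, since both $\|\cdot\|_{\text{LO}}$ and Corollary~\ref{cor d norms} are insensitive to the choice of composition rule.
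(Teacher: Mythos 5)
Your proposal is correct and follows essentially the same route as the paper's proof: both decompose $X$ according to~\eqref{VAB decomp}, control the product components via Corollary~\ref{cor d norms} and marginal (discard-one-system) lower bounds on $\|X\|_{\text{LO}}$, and obtain $\big\|\widebar{X}\big\|_{\text{LO}}\leq \|X\|_{\text{LO}}$ from the $\Lambda$-symmetrisation identity together with the invariance of $\|\cdot\|_{\text{LO}}$ under $\Lambda\otimes I,\, I\otimes\Lambda,\, \Lambda\otimes\Lambda$. The only cosmetic difference is that you bound the non-bar part through the two reduced states (using $|\widebar z|\leq\max\{|z_0|,|\widebar z|\}$), while the paper bounds the three addends separately with a small $\Lambda$/triangle-inequality trick for the $u_*\otimes b$ term.
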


\begin{proof}
Consider $X=s\, u_*\otimes u_* + a\otimes u_* + u_*\otimes b + \widebar{X}\in V_{AB}$, where we omitted the subscripts $A,B$ for the sake of simplicity. Applying the triangle inequality to the global base norm, we see that $\|X\|\leq \| (s\,u_*+a)\otimes u_*\| + \|u_*\otimes b\| + \big\|\widebar{X}\big\|$. Now, Corollary~\ref{cor d norms} guarantees that $\| (s\,u_*+a)\otimes u_*\| = \| (s\,u_*+a)\otimes u_*\|_{\text{LO}}=\|s\,u_*+a\|$ and $\|u_*\otimes b\|=\|u_*\otimes b\|_{\text{LO}}=\|b\|$. Moreover, it is also clear by discarding the system $B$ and performing an arbitrary operation on $A$, that $\|X\|_{\text{LO}}\geq \|s\, u_*+a\|$. Proceeding in an analogous fashion with exchanged subsystems, remembering that $\Lambda$ defined in~\eqref{Lambda} leaves the base norm invariant, and exploiting the triangle inequality, yields
\bqq
\|X\|_{\text{LO}}\, \geq\, \|s\, u_* + b\|\, =\, \|\Lambda (s\, u_*+b)\|\, =\, \|s\,u_* - b\|\, \geq\, \left\|\, \frac12 (s\, u_* + b) - \frac12 (s\, u_* - b) \, \right\|\, =\, \|b\|\, .
\eqq
Finally, using the readily verified invariance of $\|\cdot\|_{\text{LO}}$ under any of the maps $\Lambda\otimes I,\, I\otimes \Lambda,\, \Lambda\otimes \Lambda$, we obtain
\bqq
\|X\|_{\text{LO}}\, \geq\,  \left\| \,\frac14 X - \frac14 (\Lambda\otimes I)(X) - \frac14 (I\otimes \Lambda)(X) + \frac14 (\Lambda\otimes\Lambda)(X)\, \right\|_{\text{LO}}\, =\, \big\|\widebar{X}\big\|_{\text{LO}}\, .
\eqq
Putting all together, we find
\begin{align*}
\frac{\|X\|}{\|X\|_{\text{LO}}}\, &\leq\, \frac{\| s\,u_*+a\| + \|b\| + \big\|\widebar{X}\big\|}{\|X\|_{\text{LO}}}\, =\, \frac{\| s\,u_*+a\|}{\|X\|_{\text{LO}}}\, +\, \frac{\|b\|}{\|X\|_{\text{LO}}}\, +\, \frac{\big\|\widebar{X}\big\|}{\|X\|_{\text{LO}}}\, \leq\, 2 \, +\, \frac{\big\|\widebar{X}\big\|}{\big\|\widebar{X}\big\|_{\text{LO}}}
\end{align*}
\end{proof}

\vspace{2ex}
What Proposition~\ref{prop centr R r} is telling us is that up to an additive constant we can restrict the search for data hiding against local operations to projected tensors of the form $\widebar{X}$. From now on, we denote by $\widebar{R}(\mathcal{M})$ the {\it restricted data hiding ratio} against a set of measurements $\mathcal{M}$ that is obtained by considering only those tensors. With this notation, Proposition~\ref{prop centr R r} can be cast into the form of the inequality $R(\text{LO})\leq \widebar{R}(\text{LO})+2$. In order to carry out the analysis of restricted ratios, we need an analogue of Lemma~\ref{norms sph}.

\vspace{2ex}
\begin{prop} \label{norms centr}
Let $|\cdot|_\varepsilon,|\cdot|_\pi$ be the injective and projective norm constructed on $\mathds{R}^{(d_A-1)\times (d_B-1)}\simeq \mathds{R}^{d_A-1}\otimes \mathds{R}^{d_B-1}$ out of the local norms $|\cdot|$ on $\mathds{R}^{d_A-1}, \, \mathds{R}^{d_B-1}$. Then for $M\in\mathds{R}^{(d_A-1)\times (d_B-1)}$, whose embedding into $V_{AB}$ according to~\eqref{VAB decomp} we denote by $\hat{M}\coloneqq 0\oplus 0\oplus 0\oplus M$, the locally constrained distinguishability norms and the global base norm induced by the minimal tensor product are respectively given by 
\begin{align}
&\big\|\hat{M}\big\|_{\text{\emph{LO}}}\, =\, \big\|\hat{M}\big\|_{\text{\emph{LOCC}}_\rightarrow}\, =\, \big\|\hat{M}\big\|_{\text{\emph{LOCC}}}\, =\, \big\|\hat{M}\big\|_{\text{\emph{SEP}}}\, =\, |M|_\varepsilon\, , \label{norms centr eq1} \\
&\big\|\hat{M}\big\|_{A \tminitfoot B}\, =\, |M|_\pi\, .
\end{align}
\end{prop}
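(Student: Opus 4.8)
The plan is to replay the computation behind Lemma~\ref{norms sph}, but with the pair of matrix norms $(\|\cdot\|_1,\|\cdot\|_\infty)$ replaced by the pair $(|\cdot|_\pi,|\cdot|_\varepsilon)$ on $\mathds{R}^{d_A-1}\otimes \mathds{R}^{d_B-1}$. The engine of the proof is a description of the separable covectors of $V_{AB}$ that live in the ``corner'' of~\eqref{VAB decomp}: I claim that every separable effect $E\in C_A^*\tminit C_B^*$ obeys $|\widebar E|_{*\pi}\leq E_{00}$, where $|\cdot|_{*\pi}$ is the projective tensor norm built from the duals $|\cdot|_*$ of the local norms, which by~\eqref{dual inj proj} is exactly the norm dual to $|\cdot|_\varepsilon$. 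Indeed, writing $E=\sum_k x^{(k)}\otimes y^{(k)}$ with $x^{(k)}\in C_A^*$, $y^{(k)}\in C_B^*$, the description $C_A^*=\{(y_0,\widebar y):\ y_0\geq |\widebar y|_*\}$ of the dual cone gives $|\widebar x^{(k)}|_*\leq x^{(k)}_0$ and likewise for $y^{(k)}$, so that $|\widebar E|_{*\pi}\leq \sum_k |\widebar x^{(k)}|_*\,|\widebar y^{(k)}|_*\leq \sum_k x^{(k)}_0 y^{(k)}_0=E_{00}$, the first step being the cross-norm property of $|\cdot|_{*\pi}$. For completeness one would also record the converse, valid for covectors without cross terms, via the explicit decomposition~\eqref{norms sph proof 3} with $x^{(k)}_\pm\coloneqq (|v^{(k)}|_*,\pm v^{(k)})\in C_A^*$ etc.; this is not strictly needed below.

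Granting the claim, the four locally constrained norms are pinned down by squeezing. Using~\eqref{d norm altern} for $\mathcal{M}=\text{SEP}$,
\bqq
\big\|\hat M\big\|_{\text{SEP}}\, =\, \sup\big\{ F(\hat M):\ u_A\otimes u_B\pm F\in C_A^*\tminit C_B^* \big\}\, ,
\eqq
and since $\hat M$ sits in the last summand of~\eqref{VAB decomp}, the value $F(\hat M)$ depends only on the corresponding component $\widebar F$ of $F$ and equals the canonical pairing $\langle \widebar F, M\rangle$ over $\mathds{R}^{d_A-1}\otimes\mathds{R}^{d_B-1}$. Applying the claim to \emph{both} $u_A\otimes u_B+F$ and $u_A\otimes u_B-F$ gives $|\widebar F|_{*\pi}\leq 1\pm F_{00}$, hence $|\widebar F|_{*\pi}+|F_{00}|\leq 1$ and in particular $|\widebar F|_{*\pi}\leq 1$; therefore $F(\hat M)=\langle\widebar F,M\rangle\leq |\widebar F|_{*\pi}\,|M|_\varepsilon\leq |M|_\varepsilon$, so $\|\hat M\|_{\text{SEP}}\leq |M|_\varepsilon$. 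Conversely, Proposition~\ref{d norms inj proj} gives $\|\hat M\|_{\text{LO}}\geq \|\hat M\|_\varepsilon$, where $\|\cdot\|_\varepsilon$ is now the injective tensor norm built from the local \emph{base} norms; since the dual of the base norm of a centrally symmetric GPT is $\|(f_0,\widebar f)\|_*=|f_0|+|\widebar f|_*$ and $\hat M$ pairs only with the $\widebar f\otimes\widebar g$ part of $f\otimes g$, one computes $\|\hat M\|_\varepsilon=\max\{\langle\widebar f\otimes\widebar g,M\rangle:\ |\widebar f|_*,|\widebar g|_*\leq 1\}=|M|_\varepsilon$. As the inclusion chain~\eqref{chain M} forces $\|\cdot\|_{\text{LO}}\leq \|\cdot\|_{\text{LOCC}_\rightarrow}\leq \|\cdot\|_{\text{LOCC}}\leq \|\cdot\|_{\text{SEP}}$ by monotonicity of distinguishability norms, we conclude $\|\hat M\|_{\text{LO}}=\|\hat M\|_{\text{LOCC}_\rightarrow}=\|\hat M\|_{\text{LOCC}}=\|\hat M\|_{\text{SEP}}=|M|_\varepsilon$, which is~\eqref{norms centr eq1}.

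It remains to compute the base norm induced by the minimal tensor product, which by Proposition~\ref{d norms inj proj} equals $\|\hat M\|_\pi$, the projective norm built from the local base norms. For ``$\leq$'', pick a decomposition $M=\sum_k v^{(k)}\otimes w^{(k)}$ attaining $|M|_\pi$ (the infimum in~\eqref{proj} is attained in finite dimension), lift it to $\hat M=\sum_k \hat v^{(k)}\otimes \hat w^{(k)}$, and use $\|\hat v^{(k)}\|=\max\{0,|v^{(k)}|\}=|v^{(k)}|$ for the base norm of a centrally symmetric GPT to get $\|\hat M\|_\pi\leq \sum_k \|\hat v^{(k)}\|\,\|\hat w^{(k)}\|=|M|_\pi$. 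For ``$\geq$'', any decomposition $\hat M=\sum_k x^{(k)}\otimes y^{(k)}$ in $V_A\otimes V_B$ restricts, in the corner of~\eqref{VAB decomp}, to $M=\sum_k \widebar x^{(k)}\otimes \widebar y^{(k)}$, and since $\|x^{(k)}\|\geq |\widebar x^{(k)}|$ and $\|y^{(k)}\|\geq |\widebar y^{(k)}|$ the cost satisfies $\sum_k \|x^{(k)}\|\,\|y^{(k)}\|\geq \sum_k |\widebar x^{(k)}|\,|\widebar y^{(k)}|\geq |M|_\pi$; minimising gives $\|\hat M\|_\pi\geq |M|_\pi$. (One could instead bypass Proposition~\ref{d norms inj proj} and argue directly from~\eqref{dual base eq} mimicking the closing paragraph of the proof of Lemma~\ref{norms sph}, using the primal analogue of the separability criterion above.)

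The part I expect to be most delicate is the separability criterion for covectors with vanishing cross terms together with the resulting $\text{SEP}$ upper bound: one must keep in mind that the functional $F$ appearing in~\eqref{d norm altern} need not itself be free of cross terms, so the key bound $|\widebar F|_{*\pi}\leq 1$ has to be extracted by combining the two separability conditions on $u_A\otimes u_B\pm F$ rather than by applying the criterion to $F$ directly. Everything else is a faithful transcription of the spherical-model argument, with the projective/injective pair on $\mathds{R}^{d_A-1}\otimes\mathds{R}^{d_B-1}$ playing the role that the trace/operator pair played there.
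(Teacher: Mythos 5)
Your proof is correct and follows essentially the same route as the paper's: the necessity criterion $|\widebar{E}|_{*\pi}\leq E_{00}$ for separable effects yields $\|\hat{M}\|_{\text{SEP}}\leq|M|_\varepsilon$ via~\eqref{d norm altern}, the injective-norm bound of Proposition~\ref{d norms inj proj} together with $|v|_*=\|\hat{v}\|$ yields $\|\hat{M}\|_{\text{LO}}\geq|M|_\varepsilon$, and the chain~\eqref{chain M} squeezes all four locally constrained norms. Your only real addition is the explicit verification that $\|\hat{M}\|_{\pi}=|M|_\pi$ (hence $\|\hat{M}\|_{A\tminitfoot B}=|M|_\pi$) via lifting an optimal decomposition and restricting an arbitrary one to the corner of~\eqref{VAB decomp} — a detail the paper leaves implicit as an analogue of the spherical-model calculation — and your remark that the sufficiency half of the separability criterion is not needed is accurate, since the lower bound comes from the injective norm rather than from an explicit separable ansatz.
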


\begin{proof}
The argument follows the guidelines of the proof of Proposition~\ref{norms sph}, so we omit some of the details. First of all, we notice that if a dual tensor $E\in\mathds{R}^{d_{A}\times d_{B}}$ is separable then necessarily $\big|\widebar{E}\big|_{*\pi} \leq E_{00}$, and this condition is also sufficient when all cross terms $E_{i0}, E_{0j}$ ($i,j\geq 1$) vanish. The proof of this claim follows exactly the steps of~\eqref{norms sph proof 1},~\eqref{norms sph proof 2} and~\eqref{norms sph proof 3}, with the dual norm $|\cdot|_*$ displacing the (self-dual) Euclidean norm, and the associated projective norm $|\cdot|_{*\pi}$ displacing the trace norm.

To compute the separability norm, on the one hand as in~\eqref{norms sph proof 4} and~\eqref{norms sph proof 5} we find
\bqq
\big\|\hat{M}\big\|_{\text{SEP}}\, \leq\, \max_{|\bar{F}|_{*\pi} + |F_{00}|\, \leq\, 1} \Tr \Big[ \widebar{F}^{\,T} M\Big]\, =\, |M|_{*\pi*}=|M|_\varepsilon \, ,
\eqq
where in the last step we used~\eqref{dual inj proj}. On the other hand, the complementary inequality $\big\|\hat{M}\big\|_{\text{LO}}\geq |M|_\varepsilon$ is a simple consequence of the second relation in~\eqref{d norms inj proj eq}. In fact, since for all $v\in\mathds{R}^{d_A-1}$ the identity $|v|_*=\|\hat{v}\|$ holds true, we obtain
\bqq
\big\|\hat{M}\big\|_{\text{LO}}\, \geq\, \big\|\hat{M}\big\|_\varepsilon\, =\, \max_{\|f\|_*,\|g\|_*\leq 1} (f\otimes g)\big(\hat{M}\big)\, \geq\, \max_{|v|_*,|w|_*\leq 1} v^T M w\, =\, |M|_\varepsilon\, .
\eqq
Putting all together, we obtain
\bqq
|M|_\varepsilon\, \leq\, \big\|\hat{M}\big\|_{\text{LO}}\, \leq\, \big\|\hat{M}\big\|_{\text{LOCC}_\rightarrow}\, \leq\, \big\|\hat{M}\big\|_{\text{LOCC}} \leq\, \big\|\hat{M}\big\|_{\text{SEP}}\, \leq\, |M|_\varepsilon\, ,
\eqq
from which~\eqref{norms centr eq1} follows.
\end{proof}

\vspace{1ex}
\begin{thm} \label{centr ratio}
For a fixed pair of centrally symmetric GPTs whose composite is formed with the minimal tensor product rule according to Proposition~\ref{min is optimal}, all the four restricted data hiding ratios against locally constrained measurements coincide, and can be computed as
\bq
\widebar{R}\, \coloneqq\, \widebar{R}(\text{\emph{LO}})\, =\, \widebar{R}(\text{\emph{LOCC}}_\rightarrow)\, =\, \widebar{R}(\text{\emph{LOCC}})\, =\, \widebar{R}(\text{\emph{SEP}})\, =\, \max_{0\,\neq\, M\,\in\, \mathds{R}^{(d_A-1)\times (d_B-1)}} \frac{\,|M|_\pi}{\,|M|_\varepsilon}\, . \label{centr ratio eq1}
\eq
Moreover, the `true' data hiding ratios are equal to the restricted ones up to an additive constant. Namely,
\bq
\widebar{R}\, \leq\, R(\text{\emph{SEP}})\, \leq\, R(\text{\emph{LOCC}})\, \leq\, R(\text{\emph{LOCC}}_\rightarrow)\, \leq\, R(\text{\emph{LO}})\, \leq\, \widebar{R} + 2\, . \label{centr ratio eq2}
\eq
\end{thm}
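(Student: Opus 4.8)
The plan is to derive both assertions directly from Propositions~\ref{norms centr} and~\ref{prop centr R r}, the composite being fixed to the minimal tensor product $AB=A\tmin B$ as in the hypothesis. First I would record what $\widebar R(\mathcal{M})$ means: it is $\max_{0\neq M}\big\|\hat M\big\|\big/\big\|\hat M\big\|_\mathcal{M}$, where $M$ ranges over $\mathds{R}^{(d_A-1)\times(d_B-1)}$, $\hat M=0\oplus 0\oplus 0\oplus M$ is its embedding into the fourth summand of~\eqref{VAB decomp}, and $\|\cdot\|$ is the global base norm. Proposition~\ref{norms centr} then does essentially all of the work: it simultaneously gives $\big\|\hat M\big\|=|M|_\pi$ (this is where the minimal tensor product enters) and $\big\|\hat M\big\|_{\text{LO}}=\big\|\hat M\big\|_{\text{LOCC}_\rightarrow}=\big\|\hat M\big\|_{\text{LOCC}}=\big\|\hat M\big\|_{\text{SEP}}=|M|_\varepsilon$. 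Substituting these into the definition collapses all four restricted ratios onto the single number $\max_{M\neq 0}|M|_\pi/|M|_\varepsilon$, which is precisely~\eqref{centr ratio eq1}; I would write $\widebar R$ for this common value.

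For the two-sided bound~\eqref{centr ratio eq2} I would proceed as follows. The lower bound $\widebar R\leq R(\text{SEP})$ is immediate from Proposition~\ref{dh ratio}: the true ratio $R(\text{SEP})=\max_{0\neq X\in V_{AB}}\|X\|/\|X\|_{\text{SEP}}$ maximises over \emph{all} nonzero tensors, whereas $\widebar R=\widebar R(\text{SEP})$ is the same maximisation confined to the subfamily $\{\hat M\}$. The four middle inequalities $R(\text{SEP})\leq R(\text{LOCC})\leq R(\text{LOCC}_\rightarrow)\leq R(\text{LO})$ are just~\eqref{chain dh}. So it remains only to cap $R(\text{LO})$ from above, and Proposition~\ref{prop centr R r} is tailored for this: writing $R(\text{LO})=\max_{0\neq X}\|X\|/\|X\|_{\text{LO}}$ and applying~\eqref{prop centr R r eq}, each $X$ satisfies $\|X\|/\|X\|_{\text{LO}}\leq\|\widebar X\|/\|\widebar X\|_{\text{LO}}+2$. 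Identifying the component $\widebar X$ with a matrix $M$ so that $\widebar X=\hat M$, Proposition~\ref{norms centr} rewrites the first term as $|M|_\pi/|M|_\varepsilon\leq\widebar R$, and maximising over $X$ gives $R(\text{LO})\leq\widebar R+2$.

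The one routine annoyance I would dispatch separately is the degenerate case $\widebar X=0$, for which the quotient $\|\widebar X\|/\|\widebar X\|_{\text{LO}}$ in~\eqref{prop centr R r eq} is the indeterminate $0/0$; here the proof of Proposition~\ref{prop centr R r} actually shows $\|X\|/\|X\|_{\text{LO}}\leq 2$, which is $\leq\widebar R+2$ since $\widebar R\geq 1$ (because $|\cdot|_\varepsilon\leq|\cdot|_\pi$). I do not expect any genuine obstacle in this proof: all the substance has been packed into Propositions~\ref{norms centr} and~\ref{prop centr R r}, and the only points demanding a little care are the bookkeeping among the matrix $M$, its embedding $\hat M$, and the projected tensor $\widebar X$, together with the observation that $\widebar R(\mathcal{M})$ is defined via the global base norm, so that its identification with $\max_M|M|_\pi/|M|_\varepsilon$ is legitimate only after invoking $\|\hat M\|=|M|_\pi$.
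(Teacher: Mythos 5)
Your proposal is correct and follows essentially the same route as the paper: equation~\eqref{centr ratio eq1} from Proposition~\ref{norms centr}, the trivial bound $\widebar{R}=\widebar{R}(\text{SEP})\leq R(\text{SEP})$, the chain~\eqref{chain dh}, and Proposition~\ref{prop centr R r} in the form $R(\text{LO})\leq\widebar{R}(\text{LO})+2$. Your separate treatment of the degenerate case $\widebar{X}=0$ is a small point of care the paper leaves implicit, but it does not change the argument.
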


\begin{proof}
First of all,~\eqref{centr ratio eq1} follows directly from Proposition~\ref{norms centr}. To show~\eqref{centr ratio eq2}, note that, by definition, restricted data hiding ratios are smaller than the true ones. In particular, $R(\text{SEP})\geq \widebar{R}(\text{SEP})=\widebar{R}$. Using~\eqref{chain dh} and the result of Proposition~\ref{prop centr R r} in the form of the inequality $R(\text{LO})\leq \widebar{R}(\text{LO})+2$, we obtain immediately~\eqref{centr ratio eq2}.
\end{proof}

\vspace{1ex}
\begin{rem}
Sometimes the quantity $\widebar{R}$ appearing in~\eqref{centr ratio eq1} can be better computed with the help of a simple trick. Namely, remember that for all norms $|\cdot|^{(1)},|\cdot|^{(2)}$ on any space $V$ we have $\max_{0\neq x\in V} \frac{|x|^{(1)}}{|x|^{(2)}} = \max_{0\neq f\in V^*} \frac{|f|^{(2)}_*}{|f|^{(1)}_*}$. Thanks to this identity and to the duality relation~\eqref{dual inj proj}, we see that $\widebar{R}$ can alternatively be expressed as
\bq
\widebar{R}\, =\, \max_{0\,\neq\, M\,\in\, \mathds{R}^{(d_A-1)\times (d_B-1)}} \frac{\,|M|_{*\varepsilon}}{\,|M|_{*\pi}}\, . \label{Rbar dual}
\eq
\end{rem}

\vspace{2ex}
Theorem~\ref{centr ratio} suggests a precise recipe for computing data hiding ratios of centrally symmetric models. It is in fact enough to analyse the behaviours of injective and projective norms, compute $\widebar{R}$ defined in~\eqref{centr ratio eq1}, and finally use~\eqref{centr ratio eq2} to determine all the ratios against locally constrained measurements up to a universal additive constant. In fact, this is basically what we did in Example~\ref{ex sph}, since trace norm and operator norm are exactly the injective and projective norms constructed out of local Euclidean spaces, as we saw in the proof of Proposition~\ref{prop proj=<ninj}. In the remaining part of this subsection, we are going to show how to apply all this machinery to solve another explicit example.

\vspace{2ex}
\begin{ex}[Cubic model] \label{cubic} 

A generalised bit, or {\it gbit} for short~\cite{Barrett original}, is perhaps the simplest example of a non-classical system. As a GPT, we can define it as $G_{2} \coloneqq \left( \mathds{R}^{3},\, C_{\square},\, u \right)$, where $C_{\square} \coloneqq \{ (x,y,z)\in\mathds{R}^{3}:\ |x|+|y|\leq z \}$ and $u(x,y,z)\coloneqq z$. As is easily seen, the state space of $G_{2}$ is a two-dimensional square. Thus, a straightforward generalisation of the above concept takes as state space an $n$-dimensional hypercube. For simplicity, in the following we refer to $G_n$ as a {\it cubic model}, a GPT with dimension $\dim G_n=n+1$.

The main reason for introducing these toy models is that they provide a useful tool to understand non-local correlations. In fact, the celebrated Popescu-Rohrlich (PR) box~\cite{PR boxes} can be seen as a maximal tensor product of the form $G_{2}\tmax G_{2}$. Here, we are mainly interested in discussing the data hiding properties of GPTs of the form $G_n\tmin G_m$, for all positive integers $n,m$. Since the $n$-dimensional hypercube is centrally symmetric, this is a perfect playground for testing the machinery we developed throughout this subsection. Notice that in the case of $G_n$ the norm $|\cdot|$ on $\mathds{R}^n$ appearing in Definition~\ref{centr} is given by the $\infty$-norm $|v|_\infty\coloneqq \max_{1\, \leq\, i\, \leq\, n} |v_i|$, with dual $|w|_{\infty*}=|w|_1=\sum_{i=1}^n |w_i|$. It is worth noticing that the extreme points of the unit ball $B_{|\cdot|_1}$ coincide with the elements of the standard basis of $\mathds{R}^n$ up to a sign.

According to Theorem~\ref{centr ratio}, the first step in solving the GPT $G_n\tmin G_m$ is the determination of the restricted data hiding ratio $\widebar{R}=\max_{0\neq M\in \mathds{R}^{n\times m}} \frac{|M|_\pi}{|M|_\varepsilon}$, where $|\cdot|_\varepsilon,|\cdot|_\pi$ denote respectively the injective and projective norm induced on $\mathds{R}^{n\times m}$ by the local $\infty$-norms. We already observed how this can be dually rephrased as~\eqref{Rbar dual}. In our case, this is helpful because the injective and projective dual norms $|\cdot|_{*\varepsilon}, |\cdot|_{*\pi}$ can be easily computed as follows.
\begin{align}
|M|_{*\varepsilon}\, &=\, \max_{|v|_\infty,|w|_\infty\leq 1} v^T M w\, =\, \max_{s\in \{\pm 1\}^n,\, t\in \{\pm 1\}^m} s^T M t\, \coloneqq\, \|M\|_{\infty\rightarrow 1}, \\
|M|_{*\pi}\, &=\, |M|_{\varepsilon*}\, =\, |M|_{\infty *}\, =\, |M|_1\, ,
\end{align}
where in the second line we used
\bqq
|N|_{\varepsilon}\, =\, \max_{|v|_1, |w|_1\leq 1} v^T N w\, =\, \max_{i,j} |N_{ij}|\, =\, |N|_\infty\, ,
\eqq
so that $|M|_{\varepsilon*}=|M|_1= \sum_{i,j} |M_{ij}|$, where we extended the definitions of $\infty$-norm and $1$-norm to matrices in an obvious way.

Now, we have to find the largest ratio $|M|_1/ \|M\|_{\infty\rightarrow 1}$ that is achievable by $M\in \mathds{R}^{n\times m}$. The argument comprises two parts: first, we have to exhibit an explicit example $M_0$ displaying a high ratio, and secondly, we have to show that this is optimal up to a constant factor.

Let us start with the first task. From now on, we will assume without loss of generality $n\leq m$. Suppose that there exists a matrix $H\in \{\pm 1\}^{n\times m}$ made of signs such that $HH^T=m\mathds{1}_n$ (such matrices are often called {\it partial Hadamard matrices}). Then simple considerations very close in spirit to Lindsey's lemma~\cite{Erdos74,cut norm H} (see also~\cite{A&S,Alon sparse}) show that for all $s\in \{\pm 1\}^n$ and $t\in \{\pm 1\}^m$ we have
\bqq
s^T H t\, \leq\, \sum_{j} \bigg| \sum_i s_i H_{ij}\bigg|\, \leq\, \sqrt{m}\, \sqrt{\sum_j \bigg| \sum_i s_i H_{ij}\bigg|^2}\, =\, \sqrt{m}\, \sqrt{\sum_{ijk} s_i s_k H_{ij} H_{kj}}\, =\, \sqrt{m}\, \sqrt{\sum_{ik} s_i s_k \, n\delta_{ik}}\, =\, n\sqrt{m}\, .
\eqq
Maximising over sign vectors $s,t$ yields $\|H\|_{\infty\rightarrow 1}\leq n\sqrt{m}$. Since $|H|_1=nm$, we obtain the lower bound $|H|_1 \big/ \|H\|_{\infty\rightarrow 1}\geq \sqrt{n}$. Now, the existence of a matrix $H$ with the properties required for the above construction is not guaranteed for arbitrary integers $n,m$. In fact, this is never the case when $n=m>2$ and $n$ is not a multiple of $4$, while when $n=m=4k$ it is the content of the so-called {\it Hadamard conjecture}. However, observe that Hadamard matrices are elementarily guaranteed to exist for $n=m=2^k$, as the explicit example $\left( \begin{smallmatrix} 1 & 1 \\ -1 & 1 \end{smallmatrix} \right)^{\otimes k}$ shows. Since for all $n$ there is $n'$ power of $2$ such that $\frac{n}{2}\leq n'\leq n$, we get $|H|_1 \big/ \|H\|_{\infty\rightarrow 1}\geq \sqrt{n'}\geq \sqrt{\frac{n}{2}}$. This shows that
\bq
\max_{0\, \neq\, M\, \in\, \mathds{R}^{n\times m}} \frac{|M|_{*\pi}}{|M|_{*\varepsilon}}\, =\, \max_{0\, \neq\, M\, \in\, \mathds{R}^{n\times m}} \frac{|M|_1}{\|M\|_{\infty\rightarrow 1}}\, \geq\, \sqrt{\frac{n}{2}}\, .
\label{lower Hadamard}
\eq
We note in passing that the crude estimate~\eqref{lower Hadamard} can be improved to $\sqrt{n}$ when $n\leq \frac{m}{2}$ by the same tricks. Further asymptotic refinements can be obtained with the help of sophisticated results such as~\cite{half Hadamard}, but this is beyond the scope of the present paper.

Now, we have to show that the above lower bound is optimal, i.e. that the scaling of the two sides of~\eqref{lower Hadamard} are the same. In order to do so, we resort to a celebrated inequality known as Khintchine inequality~\cite{bound B,best constant 1,best constant 2}, which states that for all $c\in \mathds{R}^n$, if $s_1,\ldots, s_n$ is an i.i.d. sequence with Rademacher distribution, then
\bq
\mathds{E}_s \bigg|\sum_i s_i c_i \bigg|\, \geq\, \frac{|c|_2}{\sqrt{2}}\, ,
\label{K ineq}
\eq
where $|c|_2^2=\sum_i c_i^2$. Thanks to this result, we know that whenever $M\in\mathds{R}^{n\times m}$ we have~\cite[Corollary 2.4]{Alon sparse}
\begin{align*}
\|M\|_{\infty\rightarrow 1}\, &=\, \max_{s\in \{\pm 1\}^n,\, t\in \{\pm 1\}^m} s^T M t\, =\, \max_{s\in \{\pm 1\}^n} \sum_j \bigg| \sum_i s_i M_{ij} \bigg| \\[0.5ex]
&\geq\ \mathds{E}_s \sum_j \bigg|\sum_i s_i M_{ij}\bigg|\ =\ \sum_j \mathds{E}_s \bigg|\sum_i s_i M_{ij}\bigg|\ \geq\ \frac{1}{\sqrt{2}}\, \sum_j \sqrt{\sum_i M_{ij}^2} \\
&\geq\ \frac{1}{\sqrt{2n}}\,\sum_{ij} |M_{ij}|\ =\ \frac{1}{\sqrt{2n}}\,|M|_1\, .
\end{align*}
This is nothing but the complementary upper bound to~\eqref{lower Hadamard}, and reads
\bq
\max_{0\, \neq\, M\, \in\, \mathds{R}^{n\times m}} \frac{|M|_{*\pi}}{|M|_{*\varepsilon}}\, =\, \max_{0\, \neq\, M\, \in\, \mathds{R}^{n\times m}} \frac{|M|_1}{\|M\|_{\infty\rightarrow 1}}\, \leq\, \sqrt{2n}\, .
\label{upper Hadamard}
\eq
The two inequalities~\eqref{lower Hadamard} and~\eqref{upper Hadamard} together solve the problem of data hiding in the cubic model. We summarise this solution as follows.

\vspace{2ex}
\begin{prop} \label{dh cubic}
Let the composite of two cubic models $A=G_n$, $B=G_m$ be given by the minimal tensor product, $AB=G_n\tminit G_m$. Then the restricted data hiding ratio $\widebar{R}_{G_n\tminitfoot G_m}$ defined via Proposition~\ref{norms centr} satisfies
\bq
\sqrt{\frac{\min\{n,m\}}{2}}\, \leq\, \widebar{R}_{G_n\tminitfoot G_m}\, \leq\, \sqrt{2\, \min\{n,m\}}\, .
\eq
Consequently, all the data hiding ratios against locally constrained measurements scale as 
\bq
R_{G_n \tminitfoot G_m}(\mathcal{M})\, =\, \Theta\left(\sqrt{\min\{n,m\}}\right)\, =\, \Theta\left(\sqrt{\min\{d_A,d_B\}}\right) , \qquad \text{for $\mathcal{M}=\text{\emph{LO}},\, \text{\emph{LOCC}}_\rightarrow,\, \text{\emph{LOCC}},\, \text{\emph{SEP}}$.}
\eq
\end{prop}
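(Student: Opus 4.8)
The strategy is simply to assemble the ingredients prepared in the discussion preceding the statement, so I will indicate how the pieces fit together. The starting point is Theorem~\ref{centr ratio}: since $G_n$ and $G_m$ are centrally symmetric and $AB=G_n\tminit G_m$ is formed by the minimal tensor product, the four restricted data hiding ratios all coincide and equal $\widebar{R}_{G_n\tminitfoot G_m}=\max_{0\neq M\in\mathds{R}^{n\times m}}|M|_\pi/|M|_\varepsilon$, where $|\cdot|_\varepsilon,|\cdot|_\pi$ are the injective and projective tensor norms built out of the local norms on $\mathds{R}^n$ and $\mathds{R}^m$, which for the cubic model are the $\infty$-norms (the case $|\cdot|=|\cdot|_\infty$ in Definition~\ref{centr}). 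Invoking the dual reformulation~\eqref{Rbar dual} together with the elementary identities $|M|_{*\varepsilon}=\|M\|_{\infty\rightarrow 1}$ and $|M|_{*\pi}=|M|_1$ recorded above, this rewrites as $\widebar{R}_{G_n\tminitfoot G_m}=\max_{0\neq M}|M|_1/\|M\|_{\infty\rightarrow 1}$, so that the whole question reduces to a purely matrix-analytic one.

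For that matrix ratio the two-sided estimate is already in hand: the lower bound $\widebar{R}_{G_n\tminitfoot G_m}\geq\sqrt{\min\{n,m\}/2}$ is exactly~\eqref{lower Hadamard}, obtained from the partial Hadamard matrix construction, while the matching upper bound $\widebar{R}_{G_n\tminitfoot G_m}\leq\sqrt{2\,\min\{n,m\}}$ is~\eqref{upper Hadamard}, obtained via the Khintchine inequality. Substituting these into the previous display gives the first claim of the proposition verbatim.

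It remains to transfer this to the genuine data hiding ratios and to local dimensions. By the second chain of inequalities in Theorem~\ref{centr ratio}, namely~\eqref{centr ratio eq2}, every $R(\mathcal{M})$ with $\mathcal{M}\in\{\text{LO},\text{LOCC}_\rightarrow,\text{LOCC},\text{SEP}\}$ obeys $\widebar{R}_{G_n\tminitfoot G_m}\leq R(\mathcal{M})\leq\widebar{R}_{G_n\tminitfoot G_m}+2$; since $\widebar{R}_{G_n\tminitfoot G_m}=\Theta(\sqrt{\min\{n,m\}})$ grows without bound, the additive constant $2$ is negligible and $R(\mathcal{M})=\Theta(\sqrt{\min\{n,m\}})$. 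Finally $\dim G_n=n+1$ and $\dim G_m=m+1$ give $\min\{n,m\}=\min\{d_A,d_B\}-1=\Theta(\min\{d_A,d_B\})$, which yields the last form of the statement. There is no real obstacle left at this stage: the genuine work---verifying the Hadamard lower bound and the Khintchine upper bound on $|M|_1/\|M\|_{\infty\rightarrow 1}$---was carried out in the text, and the proposition is a bookkeeping consequence of Theorem~\ref{centr ratio} together with those two estimates.
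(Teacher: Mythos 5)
Your proposal is correct and follows essentially the same route as the paper: Proposition~\ref{dh cubic} is indeed stated there as a summary of the preceding discussion, namely Theorem~\ref{centr ratio} (with its dual reformulation~\eqref{Rbar dual} and the identities $|M|_{*\varepsilon}=\|M\|_{\infty\rightarrow 1}$, $|M|_{*\pi}=|M|_1$) combined with the Hadamard lower bound~\eqref{lower Hadamard} and the Khintchine upper bound~\eqref{upper Hadamard}, plus the additive-constant transfer~\eqref{centr ratio eq2} and $\dim G_n=n+1$. Nothing is missing.
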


\end{ex}

\subsection{Werner construction for symmetric models} \label{subsec Werner}

A remarkable feature possessed by all the examples of GPTs we have seen so far, including the two that are undoubtedly the most physically relevant, i.e. classical probability theory (Example~\ref{ex class}) and quantum theory (Example~\ref{ex QM}), is the existence of a wide group of symmetries, i.e. linear transformations sending states to states. In the classical case, these are just permutations of the entries of the probability vector, while in the quantum case a symmetry is any conjugation by a unitary matrix. As for the spherical and cubic models (Examples~\ref{ex sph} and~\ref{cubic}), there are the natural actions of $SO(d-1)$ and of signed permutations on the last $d-1$ components of the state vector. We will show how these symmetries can be exploited in order to define a relevant class of bipartite states in a theory made of two copies of the same symmetric GPT. Let us start with the following definition.

\vspace{2ex}
\begin{Def}[Completely symmetric models] \label{compl symm}
A GPT $(V,C,u)$ is said to be \emph{completely symmetric} if there is a compact group $G$ and a representation $\zeta: G\rightarrow \mathcal{L}(V)$ that: (i) is such that $\zeta_g$ sends normalised states to normalised states, for all $g\in G$; and (ii) is irreducible and nontrivial on the preserved subspace $V_0\coloneqq \{x\in V:\, u(x)=0\}$.
\end{Def}

We will see at the end of this Subsection that all the examples of GPTs we have considered so far (namely, classical and quantum theory, and spherical and cubic model) are in fact completely symmetric. However, for now we are interested in keeping the reasoning as abstract and general as possible.
Thus, consider a completely symmetric GPT as in Definition~\ref{compl symm}. We start by listing some elementary consequences of the existence of a group symmetry. Let us remind the reader that if $\zeta:G\rightarrow \mathcal{L}(V)$ is a representation, then the dual space $V^*$ is naturally endowed with the dual representation $\zeta^*:G\rightarrow \mathcal{L}(V^*)$ given by $\zeta^*_g=\zeta_{g^{-1}}^T$. With this notation, the fact that $V_0$ is preserved by the action of $G$ can be written as $\zeta_g^*u=u$ for all $g\in G$. In other words, $\mathds{R}u$ is a trivial component of $V^*$ under the action of $G$ through $\zeta^*$. Since it is well-known that if $V$ is real and $G$ is compact then $\zeta$ and $\zeta^*$ are isomorphic, there must necessarily exist also a $G$-invariant vector $u_*\in V$. We deduce from the fact that $V_0$ is irreducible and nontrivial that $u_*\notin V_0$, so that we are free to rescale it in such a way that $u(u_*)=1$. Note that this fixes the decomposition of $V$ into $G$-irreducible representations as
\bq
V\, =\, \mathds{R}u_*\oplus V_0\, , \label{decomp V}
\eq
where the two pieces are non-isomorphic. Now, we proceed to show that $u_*$ is in fact a state, as implied by property (i) in Definition~\ref{compl symm}. In order to do so, we use the Haar integral on $G$, denoted by $\int_G dg$ and whose existence is guaranteed by the compactness of $G$. For every state $\omega\in\Omega$, we have that $u_*=\int_G dg\, \zeta_g \omega$, because the right-hand side is $G$-invariant and normalised and therefore must coincide with $u_*$. This expression for $u_*$ as a positive combination of states $\zeta_g \omega$ reveals that $u_*$ is itself a state.

As is easy to see by referring to~\eqref{decomp V}, the dual vector space $V^*$ decomposes as
\bq
V^*\, =\, \mathds{R}u\oplus V_0^* \label{decomp V*}
\eq
under the action of $G$, where the first addend is a trivial representation and the second one is $G$-isomorphic to $V_0$. Therefore, any $G$-isomorphism $\chi:V^*\rightarrow V$ must map $u$ into a multiple of $u_*$ and $V_0^*$ into $V_0$.
If we think of $\chi$ as a tensor in $V\otimes V$, we can write $\chi\,=\,\alpha\, u_*\!\otimes u_* + \beta\, \mathcal{E} = \alpha U_* +\beta \mathcal{E}$, where we used the shorthand $U_*\coloneqq u_*\otimes u_*$, and $\mathcal{E} \in V_0\otimes V_0$ is (canonically identified with) a fixed $G$-isomorphism $V_0^*\rightarrow V_0$. Depending on the representation we choose, we can alternatively write $(\zeta_g\otimes \zeta_g)(\mathcal{E})=\mathcal{E}$ for all $g\in G$ or $\mathcal{E} \zeta_g^*=\zeta_g \mathcal{E} $ for all $g\in G$. From this latter expression we see that $\mathcal{E}_*\coloneqq (\mathcal{E}^{-1})^T:V\rightarrow V^*$ is also a $G$-isomorphism. As for the tensor $\mathcal{E}_*\in V_0^*\otimes V_0^*$, we will rephrase this as $(\zeta_g^*\otimes \zeta_g^*)(\mathcal{E}_*)=\mathcal{E}_*$ for all $g\in G$.

It is perhaps convenient to think of $\mathcal{E}$ and $\mathcal{E}_*$ also as scalar products $\braket{\cdot,\cdot}_\mathcal{E}$, $\braket{\cdot,\cdot}_{\mathcal{E}_*}$ on $V_0^*$ and $V_0$, respectively. This can be done via the definitions $\braket{f,g}_\mathcal{E} \coloneqq (f\otimes g)(\mathcal{E})$ and $\braket{v,w}_{\mathcal{E}_*}\coloneqq \mathcal{E}_*(v\otimes w)$. It is well known that it is possible to choose $\mathcal{E}$ such that both these scalar product are positive definite. We will always make this assumption throughout the rest of this Section. If $\{f_i\}_{i=1}^{d-1}$ is an orthonormal basis for $\braket{\cdot,\cdot}_\mathcal{E}$, we will have
\bq
\mathcal{E}\, =\, \sum_{i=1}^{d-1} f_i^*\otimes f_i^*\, \in\, V_0\otimes V_0\, ,\qquad \mathcal{E}_*\,=\, \sum_{i=1}^{d-1} f_i\otimes f_i\,\in\, V_0^*\otimes V_0^*\, ,
\label{varepsilon expl}
\eq
with $\{f_i^*\}_i$ being the dual basis to $\{f_i\}_i$. As a simple consequence, we see that $\mathcal{E}_*(\mathcal{E})=d-1$, and the two norms induced by the above scalar products are dual to each other. Moreover, observe that the completion $\{u,f_1,\ldots,f_d\}$ is an orthonormal basis for a global $G$-invariant scalar product. We are now ready to give the following definition.

\vspace{2ex}
\begin{Def}[Werner states] \label{Werner def}
For a composite $AB$ made of two copies $A,B$ of the same completely symmetric GPT $(V,C,u)$ and such that~\eqref{CAB bound} is obeyed, a \emph{Werner state} is a normalised state in $V\otimes V$ that corresponds to a $G$-isomorphism $V^*\rightarrow V$. 
\end{Def}

\vspace{2ex}
With the language developed throughout the above discussion, we can express a generic Werner state as
\bq
\chi_\varphi\, =\, U_* + \varphi\, \mathcal{E}\, ,
\label{Werner}
\eq
where $\varphi$ is a real parameter whose range depends on specific features of the model. For any bipartite cone $C_{AB}$ satisfying~\eqref{CAB bound}, let us define
\bq
k_\pm\, \coloneqq\, \max\left\{ k:\, \chi_{\pm k}\in C_{AB} \right\}\, ,
\label{k+-}
\eq
so that the allowed range of $\varphi$ in~\eqref{Werner} will be $[-k_-,k_+]$. Besides the obvious observation that $k_\pm\geq 0$, there seems to be nothing we can say a priori about these parameters. In order to proceed further, we need a little lemma.

\vspace{2ex}
\begin{lemma} \label{lemma G projector}
Let $(V,C,u)$ be a completely symmetric GPT with Werner states given by~\eqref{Werner}. If $\mathcal{E}, \mathcal{E}_*$ are given by~\eqref{varepsilon expl}, the identity
\bq
\int_G dg\ \zeta_g\otimes \zeta_g\, =\, U_* U\, +\, \frac{1}{d-1}\ \mathcal{E}\, \mathcal{E}_*
\label{G projector}
\eq
between linear operators on $V\otimes V$ holds true. Here, $U\coloneqq u\otimes u$, $U_*\coloneqq u_*\otimes u_*$, and $u_*u:V\rightarrow V$ acts as $(u_* u)(x)=u(x)\, u_*$ for all $x\in V$.
\end{lemma}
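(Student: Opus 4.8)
The plan is to prove~\eqref{G projector} by showing that both sides, regarded as linear operators on $V\otimes V$, coincide; concretely I will write $P\coloneqq \int_G dg\,(\zeta_g\otimes\zeta_g)$ for the left-hand side and $Q\coloneqq U_*U + \frac{1}{d-1}\,\mathcal{E}\mathcal{E}_*$ for the right-hand side, and establish the two identities $QP=Q$ and $QP=P$, from which $P=Q$ follows at once. The structural input, already assembled in the discussion preceding the lemma, is the $G$-decomposition
\bqq
V\otimes V\, =\, \mathds{R}U_*\, \oplus\, \bigl(\mathds{R}u_*\otimes V_0\bigr)\, \oplus\, \bigl(V_0\otimes \mathds{R}u_*\bigr)\, \oplus\, \bigl(V_0\otimes V_0\bigr)\, ,
\eqq
where the two middle blocks are $G$-isomorphic to the nontrivial irreducible $V_0$ and hence carry no invariant vectors, while the invariants of $V_0\otimes V_0$ form the line $\mathds{R}\mathcal{E}$; consequently $(V\otimes V)^G=\text{span}\{U_*,\mathcal{E}\}$.

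For the first identity I would observe that $Q$ is fixed by precomposition with $\zeta_g\otimes\zeta_g$ for every $g\in G$: using $u\circ\zeta_g=\zeta_g^*u=u$ one gets $(U_*U)(\zeta_g\otimes\zeta_g)=U_*\bigl(U\circ(\zeta_g\otimes\zeta_g)\bigr)=U_*U$, and using the identity $(\zeta_g^*\otimes\zeta_g^*)(\mathcal{E}_*)=\mathcal{E}_*$ established earlier one gets similarly $(\mathcal{E}\mathcal{E}_*)(\zeta_g\otimes\zeta_g)=\mathcal{E}\mathcal{E}_*$, hence $Q(\zeta_g\otimes\zeta_g)=Q$ for all $g$. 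Integrating against the Haar measure then yields $QP=\int_G Q(\zeta_g\otimes\zeta_g)\,dg=Q$.

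For the second identity I would use that, by left-invariance of the Haar measure, $Px$ is $G$-invariant for every $x\in V\otimes V$, so $Px\in(V\otimes V)^G=\text{span}\{U_*,\mathcal{E}\}$; it therefore suffices to check that $Q$ acts as the identity on this two-dimensional space, i.e. $QU_*=U_*$ and $Q\mathcal{E}=\mathcal{E}$. These follow from the four elementary contractions $u(u_*)=1$, $\;\mathcal{E}_*(U_*)=0$ (since $\mathcal{E}_*\in V_0^*\otimes V_0^*$ annihilates $u_*\otimes u_*$), $\;U(\mathcal{E})=0$ (since $\mathcal{E}\in V_0\otimes V_0$ and $V_0=\ker u$), and $\;\mathcal{E}_*(\mathcal{E})=d-1$. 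Thus $QP=P$, and comparison with $QP=Q$ gives $P=Q$, which is~\eqref{G projector}. I expect the only genuinely delicate point to be the structural claim $(V\otimes V)^G=\text{span}\{U_*,\mathcal{E}\}$: this is exactly where the completely-symmetric hypothesis enters, through the (absolute) irreducibility of $V_0$ and its non-isomorphism to the trivial summand $\mathds{R}u_*$, via Schur's lemma applied to $(V\otimes V)^G\cong\text{Hom}_G(V^*,V)$. Everything else is bookkeeping with the tensors $U_*,U,\mathcal{E},\mathcal{E}_*$ already introduced.
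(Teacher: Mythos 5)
Your proof is correct and takes essentially the same route as the paper's: both arguments rest on the fact that the Haar average maps $V\otimes V$ into $\text{span}\{U_*,\mathcal{E}\}$, and then pin the operator down by pairing with $U$ and $\mathcal{E}_*$, using their invariance under the dual action together with the contractions $u(u_*)=1$, $U(\mathcal{E})=\mathcal{E}_*(U_*)=0$, $\mathcal{E}_*(\mathcal{E})=d-1$. The paper simply evaluates both sides on an arbitrary $X$ and matches the two coefficients, which is your pair of identities $QP=Q$ and $QP=P$ in a slightly different packaging.
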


\begin{proof}
For an arbitrary $X\in V\otimes V$, the properties of the Haar measure guarantee that $\int_G dg\, \zeta_g\otimes \zeta_g (X)$ is a $G$-invariant tensor, and thus can be expressed as $\alpha\, U_* + \beta\, \mathcal{E}$ for some $\alpha,\beta\in \mathds{R}$. Applying $U$ on both sides we see that
\bqq
\alpha\, =\, \int_G dg\, U\left(\zeta_g\otimes \zeta_g (X)\right)\, =\, \int_G dg \left(\zeta_g^* \otimes \zeta_g^*\, U\right)(X)\, =\, \int_G dg\, U(X)\, =\, U(X)\, .
\eqq
Moreover, it is easy to verify that since $\mathcal{E}_*(\mathcal{E})=d-1$ and $\left(\zeta_g^*\otimes \zeta_g^*\right)(\mathcal{E}_*)=\mathcal{E}_*$ for all $g\in G$, the analogous equality $(d-1)\beta=\mathcal{E}_*(X)$ holds, thus completing the proof of~\eqref{G projector}.
\end{proof}

\vspace{2ex}
With Lemma~\ref{lemma G projector} in our hands, we can say a bit more about the constants $k_\pm$ introduced in~\eqref{k+-}. Namely, we can determine lower bounds that correspond to the {\it separability region} for the family of Werner states~\eqref{Werner}. As expected, these lower bounds will depend only on the local structure of the model, not on the particular choice of the composite cone, which will instead affect the values of $k_\pm$.

\vspace{2ex}
\begin{prop} \label{prop sep Werner}
For a completely symmetric GPT $(V,C,u)$ of dimension $d$ and a scalar product $\braket{\cdot,\cdot}_{\mathcal{E}_*}$ on the corresponding $V_0$, introduce the quantities
\bq
m_+ \coloneqq\, \max\left\{\braket{v,v}_{\mathcal{E}_*}:\ v\in V_0,\, u_*+v\in C\right\}, \qquad m_- \coloneqq\, \max\left\{ \braket{v,w}_{\mathcal{E}_*}:\ v,w\in V_0,\, u_*+v, u_*-w\in C\right\}\, .
\label{m+-}
\eq
Then the constants $k_\pm$ defined through~\eqref{k+-} satisfy $k_\pm\geq \frac{m_\pm}{d-1}$, and a Werner state $\chi_\varphi$ as given in~\eqref{Werner} is separable if and only if $-\frac{m_-}{d-1}\leq \varphi\leq \frac{m_+}{d-1}$.
\end{prop}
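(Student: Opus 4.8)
The plan is to prove the ``if and only if'' by establishing the two implications separately, reading off the bounds $k_\pm\geq m_\pm/(d-1)$ as a by-product of the ``if'' direction. I would set up by squaring the decomposition~\eqref{decomp V} to the $G$-invariant splitting $V\otimes V=\mathds{R}U_*\oplus(\mathds{R}u_*\otimes V_0)\oplus(V_0\otimes\mathds{R}u_*)\oplus(V_0\otimes V_0)$, and regard $\mathcal{E}_*\in V_0^*\otimes V_0^*$ as a functional on $V\otimes V$ vanishing on the first three summands, so that $\mathcal{E}_*(U_*)=0$, $\mathcal{E}_*(\mathcal{E})=d-1$, and $\mathcal{E}_*\big((u_*+v)\otimes(u_*+w)\big)=\braket{v,w}_{\mathcal{E}_*}$ for all $v,w\in V_0$. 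I would also note that $U(\chi_\varphi)=1$, so every $\chi_\varphi$ is automatically normalised, and that the feasible sets in~\eqref{m+-} are compact: since $u(u_*+v)=1$, the condition $u_*+v\in C$ just says $u_*+v\in\Omega$, so these sets are translates of the compact $\Omega$, and the maxima $m_\pm$ are attained.

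\emph{Necessity.} Assuming $\chi_\varphi$ separable, since it is normalised it is a finite convex combination $\chi_\varphi=\sum_i p_i\,\omega_i\otimes\omega_i'$ of product states (Carath\'eodory). Writing $\omega_i=u_*+v_i$, $\omega_i'=u_*+w_i$ with $v_i,w_i\in V_0$ and applying $\mathcal{E}_*$ to both sides of $U_*+\varphi\mathcal{E}=\sum_i p_i(u_*+v_i)\otimes(u_*+w_i)$ produces the key identity $\varphi(d-1)=\sum_i p_i\braket{v_i,w_i}_{\mathcal{E}_*}$. The upper bound follows from Cauchy--Schwarz for the positive-definite form $\braket{\cdot,\cdot}_{\mathcal{E}_*}$: as $u_*+v_i,\,u_*+w_i\in C$ we get $\braket{v_i,v_i}_{\mathcal{E}_*},\braket{w_i,w_i}_{\mathcal{E}_*}\leq m_+$, hence $\braket{v_i,w_i}_{\mathcal{E}_*}\leq m_+$ and $\varphi(d-1)\leq m_+$. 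For the lower bound, each pair $(v_i,-w_i)$ is feasible for $m_-$ — indeed $u_*+v_i\in C$ and $u_*-(-w_i)=\omega_i'\in C$ — so $\braket{v_i,-w_i}_{\mathcal{E}_*}\leq m_-$, and summing gives $-\varphi(d-1)\leq m_-$.

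\emph{Sufficiency and the $k_\pm$ bounds.} Since $\varphi\mapsto\chi_\varphi$ is affine and $C\tmin C$ is convex, the set of $\varphi$ with $\chi_\varphi$ separable is an interval containing $0$ (as $\chi_0=u_*\otimes u_*$ is a product state), so it suffices to hit the two endpoints. Picking $v_+\in V_0$ attaining $m_+$ and applying Lemma~\ref{lemma G projector} to $X=(u_*+v_+)\otimes(u_*+v_+)$, for which $U(X)=1$ and $\mathcal{E}_*(X)=\braket{v_+,v_+}_{\mathcal{E}_*}=m_+$, gives $\int_G dg\,(\zeta_g\otimes\zeta_g)(X)=U_*+\tfrac{m_+}{d-1}\mathcal{E}=\chi_{m_+/(d-1)}$; the integrand is a continuous family of product states (property (i) of Definition~\ref{compl symm} gives $\zeta_g\Omega=\Omega$), so this exhibits $\chi_{m_+/(d-1)}$ as a separable state, whence it lies in $C_{AB}$ and $k_+\geq m_+/(d-1)$. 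The mirror-image computation with a pair $(v_-,w_-)$ attaining $m_-$, applied to $(u_*+v_-)\otimes(u_*-w_-)$ — on which $\mathcal{E}_*$ takes the value $-m_-$ — yields $\chi_{-m_-/(d-1)}$ separable and $k_-\geq m_-/(d-1)$. Interpolating these two with $\chi_0$ covers the whole interval $[-m_-/(d-1),\,m_+/(d-1)]$, which together with necessity gives the equivalence.

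\emph{The delicate point.} The only step that is not a one-line routine check is the last inference of the sufficiency half: a Haar integral of product states is not literally a finite convex combination, so to conclude it is separable I would invoke the closedness of $\text{conv}(\Omega_A\otimes\Omega_B)$ recorded in Section~\ref{sec GPT} together with the fact that the barycentre of a probability measure supported on a compact convex set lies in that set — or, equivalently, approximate the Haar integral by Riemann sums, which \emph{are} genuine finite convex combinations of product states, and pass to the limit.
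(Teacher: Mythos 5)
Your proof is correct and follows essentially the same route as the paper's: necessity by applying $\mathcal{E}_*$ to a convex decomposition into product states and using Cauchy--Schwarz together with the sign-flip trick, and sufficiency by Haar-twirling the optimisers via Lemma~\ref{lemma G projector}. The only difference is cosmetic care: you make explicit the convexity interpolation between the endpoint states and justify that the group average of product states is itself separable (closure of $\text{conv}(\Omega_A\otimes\Omega_B)$ or Riemann sums), steps the paper leaves implicit.
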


\begin{proof}
Take $v\in V_0$ such that $u_*+v\in C$ and $\braket{v,v}_{\mathcal{E}_*}=m_+$. Compute
\begin{align*}
\int_G dg\ (\zeta_g\otimes \zeta_g)\left((u_*+v)\otimes (u_*+v)\right)\, &= \left( U_* U\, +\, \frac{1}{d-1}\ \mathcal{E}\, \mathcal{E}_* \right)\left((u_*+v)\otimes (u_*+v)\right) \\
&=\, U_*\, +\, \frac{\mathcal{E}_*(v\otimes v)}{d-1}\ \mathcal{E} =\, U_*\, +\, \frac{\braket{v,v}_{\mathcal{E}_*}}{d-1}\ \mathcal{E}\, =\, U_*\, +\, \frac{m_+}{d-1}\ \mathcal{E}\, .
\end{align*}
From the leftmost side we see that this is an allowed separable state of the bipartite system. Looking at the rightmost side and comparing it to~\eqref{Werner}, we see that $k_+\geq \frac{m_+}{d-1}$ and that $\chi_{m_+/(d-1)}\in C_A\tmin C_B$. An analogous construction shows that we can find $v,w\in V_0$ such that $u_*+v, u_*-w\in C$ and
\bqq
\int_G dg\ \zeta_g\otimes \zeta_g \left((u_*+v)\otimes (u_*-w)\right)\, =\, U_*\, -\, \frac{m_-}{d-1}\ \mathcal{E}\, \in\, C_A\tmin C_B\, ,
\eqq
so that $k_-\geq \frac{m_-}{d-1}$ and $\chi_{-k_-}\in C_A\tmin C_B$.

In order to show that $\chi_\varphi\in C_A\tmin C_B$ if and only if $\varphi\in \left[- \frac{m_-}{d-1},\, \frac{m_+}{d-1}\right]$, we start by proving that $\omega\in C_A\tmin C_B$ implies $-m_-\leq \mathcal{E}_*(\omega)\leq m_+$. In fact, if $\omega=\sum_i p_i\, (u_*+v_i)\otimes (u_*+w_i)$ we obtain $\mathcal{E}_*(\omega)\, =\, \sum_i p_i\, \braket{v_i,w_i}_{\mathcal{E}_*}$, and the claim follows from the inequalities
\bqq
\braket{v_i,w_i}_{\mathcal{E}_*}\, =\, - \braket{v_i,-w_i}_{\mathcal{E}_*}\, \geq\, -m_-\, ,
\eqq
valid because $u_*+v_i,u_*-(-w_i)\in C$ and thus $\braket{v_i,-w_i}_{\mathcal{E}_*}\leq k_-$, and 
\bqq
\braket{v_i,w_i}_{\mathcal{E}_*}\, \leq\, \sqrt{\braket{v_i,v_i}_{\mathcal{E}_*} \braket{w_i,w_i}_{\mathcal{E}_*}}\, \leq\, m_+\, .
\eqq
Applying this to a Werner state $\chi_\varphi$ we obtain that a necessary condition for separability is $-m_-\leq \mathcal{E}_*(\chi_\varphi)=(d-1)\varphi\leq m_+$, concluding the proof.
\end{proof}

\vspace{2ex}
In order to study the data hiding properties of the family of Werner states, we need to understand the separability conditions at the dual level. Luckily enough, there is no need to repeat any calculation. This is because if $(V,C,u)$ is completely symmetric then $(V^*,C^*,u_*)$ is itself a completely symmetric GPT. Therefore, we start by giving the following `dual' definitions:
\bq
k_\pm^*\, \coloneqq\, \max\left\{ k:\, U\pm k \mathcal{E}_*\in C^*_{AB} \right\}\, ,
\label{k+-*}
\eq
\vspace{-4.5ex}
\bq
m_+^* \coloneqq\, \max\left\{\braket{f,f}_{\mathcal{E}}:\ f\in V_0^*,\, u+f\in C^*\right\}, \qquad m_-^* \coloneqq\, \max\left\{ \braket{f,g}_{\mathcal{E}}:\ f,g\in V_0^*,\, u+f, u-g\in C^*\right\}\, .
\label{m+-*}
\eq
Exactly as in Proposition~\ref{prop sep Werner}, we obtain
\bq
k_\pm^*\, \geq\, \frac{m_\pm^*}{d-1}\, .
\label{k+-* lower bound}
\eq
Moreover, one can prove the following.

\vspace{2ex}
\begin{lemma}
In a completely symmetric GPT $(V,C,u)$, the constants $k_\pm,k_\pm^*,m_\pm,m_\pm^*$ defined by~\eqref{k+-},~\eqref{k+-*},~\eqref{m+-},~\eqref{m+-*}, respectively, satisfy $m_-\leq m_+$, $m_-^*\leq m_+^*$, and moreover
\bq
k_\pm^*\,k_\mp\, =\, \frac{1}{d-1}\, ,\qquad 1\,\leq\, m_\pm m_\mp^*\, \leq\, d-1\, .
\label{k m dual}
\eq
\end{lemma}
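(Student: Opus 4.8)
The plan is to dispatch the four assertions in turn, the load-bearing one being $k_\pm^*k_\mp=\tfrac1{d-1}$, which I would get by restricting both $C_{AB}$ and $C_{AB}^*$ to the two-dimensional planes carrying the Werner states and showing these slices are mutually polar. First a handful of preliminaries. Averaging an interior point of $C$ over the Haar measure and using that $V_0$ is irreducible and nontrivial shows that the $G$-invariant state $u_*$ actually lies in $\operatorname{int}(C)$; consequently $m_\pm$ and $m_\pm^*$ are strictly positive (take $v$ a small vector of $V_0$), and then so are $k_\pm,k_\pm^*$ by Proposition~\ref{prop sep Werner} and~\eqref{k+-* lower bound}. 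Also $u_{AB}=u_A\otimes u_B$ is the unit of the composite GPT, hence strictly positive for $C_{AB}$. The inequalities $m_-\le m_+$ and $m_-^*\le m_+^*$ are then a one-line Cauchy--Schwarz: if $u_*+v,u_*-w\in C$ then $v$ and $-w$ are each admissible for $m_+$, so $\braket{v,w}_{\mathcal{E}_*}\le\sqrt{\braket{v,v}_{\mathcal{E}_*}\braket{w,w}_{\mathcal{E}_*}}\le m_+$; maximising over $v,w$ gives $m_-\le m_+$, and $m_-^*\le m_+^*$ follows by repeating the argument inside the (again completely symmetric) GPT $(V^*,C^*,u_*)$.

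For $k_\pm^*k_\mp=\tfrac1{d-1}$ I would first note $U(U_*)=1$, $U(\mathcal{E})=\mathcal{E}_*(U_*)=0$, $\mathcal{E}_*(\mathcal{E})=d-1$ (from $u(u_*)=1$, $\mathcal{E}\in V_0\otimes V_0$, $\mathcal{E}_*\in V_0^*\otimes V_0^*$ and~\eqref{varepsilon expl}). Evaluating the effect $U+k_+^*\mathcal{E}_*\in C_{AB}^*$ on the state $\chi_{-k_-}=U_*-k_-\mathcal{E}\in C_{AB}$ gives $0\le 1-(d-1)k_+^*k_-$, and symmetrically with the roles swapped; these are the easy halves $k_\pm^*k_\mp\le\tfrac1{d-1}$. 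For the converse, put $P:=\operatorname{span}\{U_*,\mathcal{E}\}$ and $Q:=\operatorname{span}\{U,\mathcal{E}_*\}$. By Lemma~\ref{lemma G projector} the twirl $\mathcal{T}:=\int_G dg\,\zeta_g\otimes\zeta_g=U_*U+\tfrac1{d-1}\mathcal{E}\mathcal{E}_*$ is the linear projection of $V\otimes V$ onto $P$; its adjoint $\mathcal{T}^*=\int_G dg\,\zeta_g^*\otimes\zeta_g^*$ projects onto $Q$ and restricts to the identity there, since $\mathcal{T}^*(U)=U$ and $\mathcal{T}^*(\mathcal{E}_*)=\mathcal{E}_*$. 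Because $C_{AB}$ is $\zeta_g\otimes\zeta_g$-invariant, $\mathcal{T}$ maps $C_{AB}$ into $\mathcal{C}:=C_{AB}\cap P$ and $\mathcal{T}^*$ maps $C_{AB}^*$ into $\mathcal{C}^*:=C_{AB}^*\cap Q$; moreover any $\varphi\in Q$ nonnegative on $\mathcal{C}$ is nonnegative on all of $C_{AB}$, because $\varphi(x)=\varphi(\mathcal{T}x)$, so $\mathcal{C}^*$ equals the polar of $\mathcal{C}$ in the nondegenerate pairing between $P$ and $Q$, which in the bases $(U_*,\mathcal{E})$, $(U,\mathcal{E}_*)$ reads $(sU_*+k\mathcal{E})\cdot(tU+l\mathcal{E}_*)=ts+(d-1)kl$. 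Since $u_{AB}$ is strictly positive, $\mathcal{C}$ meets $\{s=0\}$ only at the origin and is therefore the planar cone spanned by $(1,k_+)$ and $(1,-k_-)$; its polar cut by $\{t=1\}$ is the interval $\big[-\tfrac1{(d-1)k_+},\tfrac1{(d-1)k_-}\big]$, giving $k_+^*=\tfrac1{(d-1)k_-}$ and $k_-^*=\tfrac1{(d-1)k_+}$.

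For $1\le m_\pm m_\mp^*\le d-1$, the upper bound is immediate from the duality together with $k_\pm\ge\tfrac{m_\pm}{d-1}$ (Proposition~\ref{prop sep Werner}) and $k_\pm^*\ge\tfrac{m_\pm^*}{d-1}$ (eq.~\eqref{k+-* lower bound}): these turn $k_+^*k_-=\tfrac1{d-1}$ into $m_+^*m_-\le d-1$, and likewise $m_-^*m_+\le d-1$. For the lower bound it suffices to prove $m_+m_-^*\ge1$, the companion following by passing to $(V^*,C^*,u_*)$. I would pick $v_0\in V_0$ with $u_*+v_0\in C$ achieving $\braket{v_0,v_0}_{\mathcal{E}_*}=m_+$, set $f_0:=\tfrac1{m_+}\mathcal{E}_*(v_0)\in V_0^*$ (the partial contraction), and check that $u\pm f_0\in C^*$: for $x\in C$ with $u(x)>0$, writing $x/u(x)=u_*+v$ with $v\in V_0$ one has $\braket{v,v}_{\mathcal{E}_*}\le m_+$, hence $|f_0(x)|=\tfrac{u(x)}{m_+}|\braket{v_0,v}_{\mathcal{E}_*}|\le u(x)$ by Cauchy--Schwarz, while $x=0$ when $u(x)=0$; so $(u\pm f_0)(x)\ge0$. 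Finally $\braket{f_0,f_0}_{\mathcal{E}}=\tfrac1{m_+^2}\braket{\mathcal{E}_*(v_0),\mathcal{E}_*(v_0)}_{\mathcal{E}}=\tfrac1{m_+^2}\braket{v_0,v_0}_{\mathcal{E}_*}=\tfrac1{m_+}$ by~\eqref{varepsilon expl}, so taking $f=g=f_0$ in the definition of $m_-^*$ forces $m_-^*\ge\tfrac1{m_+}$.

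The step I expect to fight hardest for is the identification of $\mathcal{C}^*$ with the polar of $\mathcal{C}$ in the second paragraph --- morally, that an effect living on the Werner plane $Q$ cannot tell $C_{AB}$ apart from its shadow $\mathcal{C}$. This is exactly where the $G$-symmetry of the composite cone is indispensable: drop the assumption that $C_{AB}$ is invariant under $\zeta_g\otimes\zeta_g$ and the argument collapses to the easy one-sided bounds $k_\pm^*k_\mp\le\tfrac1{d-1}$, which can be strict. Everything else is Cauchy--Schwarz and bookkeeping with~\eqref{varepsilon expl} and Lemma~\ref{lemma G projector}.
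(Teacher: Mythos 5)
Your proof is correct and follows essentially the paper's route: the polar duality of the two invariant planar slices is just a geometric repackaging of the paper's step of testing the $G$-invariant effect $U+k\,\mathcal{E}_*$ on twirled (Werner) states via Lemma~\ref{lemma G projector}, and your witness $f_0=\frac{1}{m_+}\mathcal{E}_*(v_0)$ for $m_+m_-^*\geq 1$ is the aligned special case of the paper's arbitrary unit-norm $f\in V_0^*$, with the same Cauchy--Schwarz estimates doing the work in both the $m_-\leq m_+$, $m_-^*\leq m_+^*$ and the $m_\pm m_\mp^*\leq d-1$ parts. The invariance of $C_{AB}$ under $\zeta_g\otimes\zeta_g$, which you flag explicitly, is equally (if tacitly) used in the paper's own reduction to Werner states, so it is not a genuine departure.
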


\begin{proof}
The inequalities $m_-\leq m_+$ and $m_-^*\leq m_+^*$ follow trivially from the definition by applying once Cauchy-Schwartz inequality. Thus, let us proceed to show the relation between $k_\pm$ and $k_\pm^*$. By definition of dual cone, $U\pm k \mathcal{E}_*\in C^*_{AB}$ if and only if $(U+ k \mathcal{E}_*)(\omega)\geq 0$ for all $\omega \in C_{AB}$, which can be assumed to be normalised without loss of generality. However, because of the group symmetry it is enough to test Werner states. In fact, since $(\zeta_g^*\otimes \zeta_g^*)(U+ k \mathcal{E}_*)=U+ k \mathcal{E}_*$ for all $g\in G$, we obtain
\bqq
(U+ k \mathcal{E}_*)(\omega)\, =\, \int_G dg\, \left((\zeta_g^*\otimes \zeta_g^*)(U+ k \mathcal{E}_*)\right) (\omega)\, =\, \int_G dg\ (U+ k \mathcal{E}_*) \left((\zeta_g\otimes \zeta_g) (\omega) \right)\, =\, (U+ k \mathcal{E}_*) (\chi_\varphi)\, =\, 1 + (d-1) k \varphi\, ,
\eqq
where $(d-1)\varphi=\mathcal{E}_*(\omega)$. The rightmost side of the above equation is positive for all $\varphi\in [-k_-,k_+]$ if and only if $-\frac{1}{(d-1) k_+}\leq k\leq \frac{1}{(d-1) k_-}$.

Now, let us devote our attention to the bounds on the $m$ quantities in~\eqref{k m dual}. Clearly, it is enough to show that $1\leq m_+ m_-^*\leq d-1$, up to exchanging primal and dual GPT. Consider $f\in V_0^*$ such that $\braket{f,f}_{\mathcal{E}}=1$. Then it is easy to see that $u-\frac{f}{\sqrt{m_+}}\in C^*$, since for all states $\omega\in \Omega$, which can be conveniently parametrised as $\omega=u_*+v$ with $v\in V_0$ satisfying $\braket{v,v}_{\mathcal{E}_*}\leq m_+$, one has
\bqq
\Big(u \pm \frac{f}{\sqrt{m_+}}\Big)(\omega)\, =\, 1 \pm \frac{1}{\sqrt{m_+}}\, f(v)\, \geq\, 1 - \frac{1}{\sqrt{m_+}}\, \sqrt{\braket{f,f}_{\mathcal{E}} \braket{v,v}_{\mathcal{E}_*}}\, \geq\, 1 - \sqrt{\braket{f,f}_{\mathcal{E}}}\, =\, 0\, .
\eqq
Since $u\, \pm \frac{f}{\sqrt{m_+}}\in C^*$, from the definition~\eqref{m+-*} we infer that $m_-^*\geq \braket{\frac{f}{\sqrt{m_+}},\frac{f}{\sqrt{m_+}}}_{\mathcal{E}}=\frac{1}{m_+}$. In order to prove the upper bound $m_+m_-^*\leq d-1$, let us write
\bqq
m_+\, \leq\, (d-1) k_+\, =\, \frac{1}{k_-^*}\, \leq\, \frac{d-1}{m_-^*}\, ,
\eqq
where we employed in order: the results of Proposition~\ref{prop sep Werner}, the first relation in~\eqref{k m dual} and finally~\eqref{k+-* lower bound}.
\end{proof}

\vspace{2ex}
With the tools we have developed so far, we are ready to discuss quantitatively the existence of allowed and separable measurements displaying Werner symmetry.

\vspace{2ex}
\begin{prop} \label{prop dual Werner}
In a completely symmetric GPT $(V,C,u)$, for $\alpha,\beta\in\mathds{R}$ the functionals $\left( \alpha U + \beta \mathcal{E}_*,\, (1-\alpha) U - \beta \mathcal{E}_* \right)$ form an allowed measurement if and only if
\bq
- k_-^*\alpha\, \leq\, \beta\, \leq\, k_+\alpha\, ,\qquad -k_+^*(1-\alpha)\,\leq\, \beta\, \leq\, k_-^*(1-\alpha)\, .
\label{dual Werner}
\eq
In the $(\alpha, \beta)$-plane, these two conditions identify a parallelogram with vertices
\bq
(0,0),\, (1,0),\ \left(\frac{k_-^*}{k_+^* + k_-^*},\ \frac{k_+^* k_-^*}{k_+^* + k_-^*} \right),\ \left(\frac{k_+^*}{k_+^* + k_-^*},\, -\frac{k_+^* k_-^*}{k_+^* + k_-^*} \right) .
\eq
The separability conditions for the measurement under examination can be deduced from~\eqref{dual Werner} by making the substitutions $k_\pm^*\mapsto \frac{m_\pm^*}{d-1}$.
\end{prop}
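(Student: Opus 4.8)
The plan is to translate everything into a statement about membership in the dual cone $C_{AB}^*$, and then reduce that to a one-dimensional interval analysis in the direction $\mathcal{E}_*$. Since $(\alpha U+\beta\mathcal{E}_*)+\bigl((1-\alpha)U-\beta\mathcal{E}_*\bigr)=U=u_{AB}$, the pair is a valid measurement exactly when \emph{both} components lie in $C_{AB}^*$. So it suffices to characterise when $\alpha U+\beta\mathcal{E}_*\in C_{AB}^*$; the condition on the second component then follows by the substitution $\alpha\mapsto 1-\alpha$, $\beta\mapsto-\beta$. First I would record that the set $\{k\in\mathds{R}:\ U+k\mathcal{E}_*\in C_{AB}^*\}$ is a closed interval (the preimage of a closed convex cone under the affine map $k\mapsto U+k\mathcal{E}_*$), that it contains $0$, that it is bounded — because $\mathcal{E}_*$ is strictly negative on the separable state $(u_*+v)\otimes(u_*-v)$ and $-\mathcal{E}_*$ is strictly negative on $(u_*+v)^{\otimes 2}$ for small nonzero $v\in V_0$, so that neither $\pm\mathcal{E}_*$ can belong to $C_{AB}^*$ — and that it has nonempty interior since $U=u_{AB}$ is strictly positive; hence it equals $[-k_-^*,k_+^*]$ with $0<k_\pm^*<\infty$, in agreement with~\eqref{k+-*} and~\eqref{k+-* lower bound}. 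Evaluating a candidate effect on the state $u_*\otimes u_*$, on which $\mathcal{E}_*$ vanishes, forces $\alpha\geq 0$; the case $\alpha=0$ forces $\beta=0$; and for $\alpha>0$, rescaling by $1/\alpha$ turns $\alpha U+\beta\mathcal{E}_*\in C_{AB}^*$ into $\beta/\alpha\in[-k_-^*,k_+^*]$. All three cases are summarised by $-k_-^*\alpha\leq\beta\leq k_+^*\alpha$ (which, since $k_\pm^*>0$, rules out $\alpha<0$ automatically), and likewise for the second component one gets $-k_+^*(1-\alpha)\leq\beta\leq k_-^*(1-\alpha)$; these are the two inequalities in~\eqref{dual Werner}.

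For the parallelogram I would argue geometrically. The region carved out in the $(\alpha,\beta)$-plane is the intersection of the angular sector at the origin bounded by the lines $\beta=k_+^*\alpha$ and $\beta=-k_-^*\alpha$ with the angular sector at $(1,0)$ bounded by $\beta=-k_+^*(1-\alpha)$ and $\beta=k_-^*(1-\alpha)$. The four bounding lines come in two pairs of parallels (slopes $k_+^*$ for $\beta=k_+^*\alpha$ and $\beta=-k_+^*(1-\alpha)$; slopes $-k_-^*$ for $\beta=-k_-^*\alpha$ and $\beta=k_-^*(1-\alpha)$), so the intersection is a parallelogram, non-degenerate because $k_\pm^*\in(0,\infty)$. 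Two of its vertices are the sector apices $(0,0)$ and $(1,0)$; the remaining two are obtained by solving $k_+^*\alpha=k_-^*(1-\alpha)$ and $-k_-^*\alpha=-k_+^*(1-\alpha)$, which yields $\bigl(\tfrac{k_-^*}{k_+^*+k_-^*},\tfrac{k_+^*k_-^*}{k_+^*+k_-^*}\bigr)$ and $\bigl(\tfrac{k_+^*}{k_+^*+k_-^*},-\tfrac{k_+^*k_-^*}{k_+^*+k_-^*}\bigr)$, matching the claim.

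For the separability statement one repeats the argument verbatim with $C_{AB}^*$ replaced by the minimal tensor product cone $C_A^*\tmin C_B^*$; the only input that changes is the interval of admissible $k$ for which $U+k\mathcal{E}_*$ lies in this smaller cone. But the dual GPT $(V^*,C^*,u_*)$ is again completely symmetric, with $u_*$, $\mathcal{E}_*=(\mathcal{E}^{-1})^T$ and $m_\pm^*$ playing the roles of $u$, $\mathcal{E}$ and $m_\pm$ respectively, and its Werner states are precisely the tensors $U+k\mathcal{E}_*$. Hence Proposition~\ref{prop sep Werner}, applied to $(V^*,C^*,u_*)$, says exactly that $U+k\mathcal{E}_*\in C_A^*\tmin C_B^*$ if and only if $-\tfrac{m_-^*}{d-1}\leq k\leq\tfrac{m_+^*}{d-1}$. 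Substituting this interval for $[-k_-^*,k_+^*]$ in the computations above reproduces all the separability conditions and the corresponding sub-parallelogram with $k_\pm^*$ everywhere replaced by $\tfrac{m_\pm^*}{d-1}$.

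The only genuinely delicate point is the boundedness and non-degeneracy of the interval $[-k_-^*,k_+^*]$ (and, for separability, of $[-\tfrac{m_-^*}{d-1},\tfrac{m_+^*}{d-1}]$): positivity of the endpoints follows from~\eqref{k+-* lower bound} together with $m_\pm^*>0$, and finiteness from the sign-indefiniteness of the $G$-invariant tensor $\mathcal{E}_*$ on the separable cone noted above, using that $u_*$ is interior to $C$ (being the $G$-average of a full-dimensional orbit of states). Granting this, the remainder is the elementary rescaling in the first step and the two-line coordinate geometry of the second, plus the primal--dual dictionary invoked to apply Proposition~\ref{prop sep Werner}.
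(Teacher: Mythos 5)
Your proof is correct and takes essentially the route the paper intends (its own proof is omitted as ``a systematic application of the definitions of $k_\pm^*$ together with the dual conditions to Proposition~\ref{prop sep Werner}''): you reduce the measurement condition to membership of each effect in $C_{AB}^*$, and the separability condition to membership in $C_A^*\tmin C_B^*$ via Proposition~\ref{prop sep Werner} applied to the dual GPT, then rescale and intersect the two sectors to get the parallelogram. The only cosmetic point is your parenthetical reason for $u_*\in\mathrm{int}(C)$ (a single orbit need not be full-dimensional), but the fact itself is true — any $f\in C^*$ with $f(u_*)=0$ annihilates $\int_G dg\,\zeta_g\omega$ and hence every state, so $f=0$ — and nothing in the argument is affected.
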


\begin{proof}
We omit the details, since the proof consists in a systematic application of the definitions~\eqref{k+-*}, together with the dual conditions to those already given in Proposition~\ref{prop sep Werner}.
\end{proof}

\vspace{2ex}
Equipped with Proposition~\ref{prop dual Werner}, we are ready to explore the data hiding properties of Werner states. Since we did not make any general claim about anything but separability, we will compute the highest data hiding efficiency against separable measurements that is obtainable by using Werner states in Definition~\ref{dh}.

\vspace{2ex}
\begin{thm}[Data hiding with Werner states] \label{dh Werner}
For a composite system made of two copies of the same completely symmetric GPT, the highest data hiding efficiency against separable measurements that is achievable with only Werner states is given by
\bq
R_{\text{\emph{Werner}}}(\text{\emph{SEP}})\, =\, \max_{0\neq (a,b)\in\mathds{R}^2}\, \frac{\| a U_* + b \mathcal{E} \|}{\,\| a U_* + b \mathcal{E} \|_{\text{\emph{SEP}}}}\ =\ 1\, +\, \frac{2}{k_+ + k_-}\, \max\left\{\frac{1}{m_+^*} - k_-,\, \frac{1}{m_-^*} - k_+ \right\} .
\label{dh Werner eq}
\eq
\end{thm}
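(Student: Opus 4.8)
The plan is to reduce the statement to a two\nobreakdash-dimensional optimisation in the \emph{Werner plane} $W:=\mathrm{span}\{U_*,\mathcal{E}\}\subseteq V\otimes V$ and then to compute the two norms appearing in~\eqref{dh Werner eq} explicitly on $W$. First I would note that, by~\eqref{Werner}, any difference $p\rho-(1-p)\sigma$ of Werner states with a priori probabilities $p,1-p$ equals $aU_*+b\mathcal{E}$ for suitable $a,b$; since $u_{AB}(U_*)=1$, $u_{AB}(\mathcal{E})=0$, $\mathcal{E}_*(U_*)=0$ and $\mathcal{E}_*(\mathcal{E})=d-1$, we have $a=u_{AB}(x)$ and $b=\tfrac1{d-1}\mathcal{E}_*(x)$ for $x=aU_*+b\mathcal{E}$. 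Conversely, averaging over $G$ (Lemma~\ref{lemma G projector}) projects any bipartite vector onto $W$ preserving $u_{AB}$-values, and, since $\|\cdot\|_{\mathrm{SEP}}$ and $\|\cdot\|$ are $\zeta_g\otimes\zeta_g$-invariant (for the latter, having assumed without loss of generality that $C_{AB}$ is $\zeta_g\otimes\zeta_g$-invariant, as discussed below), this shows that the set of differences of Werner states coincides with the unit ball of $\|\cdot\|$ restricted to $W$. Running the argument of Proposition~\ref{dh ratio} in $W$ then yields the first equality in~\eqref{dh Werner eq}, namely $R_{\mathrm{Werner}}(\mathrm{SEP})=\max_{0\neq(a,b)}\|aU_*+b\mathcal{E}\|\big/\|aU_*+b\mathcal{E}\|_{\mathrm{SEP}}$.

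Next I would compute the two norms on $W$. By Proposition~\ref{prop sep Werner} together with the definition~\eqref{k+-}, a ``Werner vector'' $c\,U_*+d\,\mathcal{E}$ with $c>0$ lies in $C_{AB}$ iff $d/c\in[-k_-,k_+]$; decomposing $aU_*+b\mathcal{E}$ along the two extremal rays $\chi_{k_+},\chi_{-k_-}$ and invoking Lemma~\ref{dual base} gives
\[
\|aU_*+b\mathcal{E}\|\;=\;\max\!\Big\{\,|a|,\ \tfrac{|2b-(k_+-k_-)a|}{k_++k_-}\,\Big\},
\]
so the unit ball of $\|\cdot\|$ in $W$ is the parallelogram with vertices $\pm(1,k_+)$ and $\pm(1,-k_-)$. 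For the separable norm I would use Lemma~\ref{unit ball dual d norm}, restricting (again by group-averaging) the test effects to $G$-invariant ones $\alpha U+\beta\mathcal{E}_*$; Proposition~\ref{prop dual Werner}, with the substitution $k_\pm^*\mapsto m_\pm^*/(d-1)$ prescribed there, describes the admissible $(\alpha,\beta)$ as a parallelogram, and computing the associated support function yields
\[
\|aU_*+b\mathcal{E}\|_{\mathrm{SEP}}\;=\;\max\!\Big\{\,|a|,\ \tfrac{|(m_-^*-m_+^*)a+2m_+^*m_-^*\,b|}{m_+^*+m_-^*}\,\Big\},
\]
whose unit ball is the parallelogram with vertices $\pm(1,1/m_-^*)$ and $\pm(1,-1/m_+^*)$.

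Finally, both norms on $W\cong\mathbb{R}^2$ are polyhedral, hence on each angular sector where both are linear the quotient $\|\cdot\|\big/\|\cdot\|_{\mathrm{SEP}}$ is a ratio of two linear functionals, so it is monotone there; therefore its maximum is attained on one of the finitely many rays through a vertex of either unit ball. Along the rays through $\pm(1,k_+)$ and $\pm(1,-k_-)$ one checks $\|x\|=\|x\|_{\mathrm{SEP}}=1$, giving ratio $1$. Along the rays through $\pm(1,1/m_-^*)$ and $\pm(1,-1/m_+^*)$ one has $\|x\|_{\mathrm{SEP}}=1$, and using $1/m_+^*\geq k_-$, $1/m_-^*\geq k_+$ (which follow from $k_\pm^*\geq m_\pm^*/(d-1)$ and $k_\pm^*k_\mp=1/(d-1)$) to resolve the maximum in the formula for $\|\cdot\|$, one finds $\|x\|=1+\tfrac2{k_++k_-}\big(\tfrac1{m_-^*}-k_+\big)$ and $\|x\|=1+\tfrac2{k_++k_-}\big(\tfrac1{m_+^*}-k_-\big)$ respectively. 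The largest of these four numbers is precisely the right-hand side of~\eqref{dh Werner eq}.

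The step I expect to be the main obstacle is the reduction of the first paragraph: one must verify that computing the Werner data hiding efficiency with the given composite cone $C_{AB}$ is the same as computing the two-dimensional quotient, which amounts to checking that replacing $C_{AB}$ by its $\zeta_g\otimes\zeta_g$-symmetrisation alters neither the constants $k_\pm$ nor the base norm on $W$ — equivalently, that an optimal base-norm decomposition of a vector of $W$ can always be taken inside $W$. Once this is in place, the remainder is elementary planar geometry combined with bookkeeping of the constants $k_\pm,k_\pm^*,m_\pm,m_\pm^*$ already related in the preceding lemmas.
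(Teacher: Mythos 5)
Your proposal is correct and follows essentially the same route as the paper: symmetry-averaging reduces everything to the two-dimensional invariant (Werner) plane, the two norms there are read off from Proposition~\ref{prop dual Werner} (with the substitution $k_\pm^*\mapsto m_\pm^*/(d-1)$ for separability), and the planar optimisation is settled using the relations~\eqref{k m dual}. The only differences are cosmetic: you enumerate vertices of the two polygonal unit balls where the paper locates the critical points algebraically, and the $\zeta_g\otimes\zeta_g$-invariance of $C_{AB}$ that you flag as a ``WLOG'' is the same implicit assumption on which the paper's own use of Proposition~\ref{prop dual Werner} and of~\eqref{k m dual} rests.
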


\begin{proof}
Let us start by computing the base norm $\|a U_*+ b \mathcal{E}\|$, for $a,b\in \mathds{R}$. Thanks to Proposition~\ref{prop dual Werner}, we have to test just one nontrivial measurement, namely
\bqq
\left( \frac{k_-^*}{k_+^* + k_-^*}\, U + \frac{k_+^* k_-^*}{k_+^* + k_-^*}\, \mathcal{E},\ \frac{k_+^*}{k_+^* + k_-^*}\, U - \frac{k_+^* k_-^*}{k_+^* + k_-^*}\, \mathcal{E} \right) .
\eqq
Thus, using~\eqref{d norm} we find
\begin{align}
\| a U_*+ b \mathcal{E} \|\, &=\, \left| \left( \frac{k_-^*}{k_+^* + k_-^*}\, U + \frac{k_+^* k_-^*}{k_+^* + k_-^*}\, \mathcal{E}\right) \left( a U_* + b\mathcal{E}\right)\right| \, +\, \left|\left( \frac{k_+^*}{k_+^* + k_-^*}\, U - \frac{k_+^* k_-^*}{k_+^* + k_-^*}\, \mathcal{E}\right) \left( a U_* + b \mathcal{E}\right)\right|\, = \label{base Werner 1} \\[1ex]
&=\, \left| \frac{k_-^*}{k_+^* + k_-^*}\, a + \frac{k_+^* k_-^*}{k_+^* + k_-^*}\, (d-1) b\right|\, +\, \left| \frac{k_+^*}{k_+^* + k_-^*}\, a - \frac{k_+^* k_-^*}{k_+^* + k_-^*}\, (d-1) b\right|\, = \label{base Werner 2}\\[1ex]
&=\, \max\left\{ |a|,\ \frac{| 2(d-1)\,b\, k_+^* k_-^* + a (k_-^* - k_+^*) |}{k_+^* + k_-^*} \right\} , \label{base Werner 3}
\end{align}
where we used the elementary formula $|x+y|+|z-y|=\max\{ |x+z|,\, |2y+x-z| \}$ in the last step. As suggested by Proposition~\ref{prop dual Werner}, we can obtain the separability norm by replacing everywhere $k_\pm^*$ with $\frac{m_\pm^*}{d-1}$:
\bq
\| a U_*+ b \mathcal{E} \|_{\text{SEP}}\, =\, \max\left\{ |a|,\ \frac{| 2\,b\, m_+^* m_-^* + a (m_-^* - m_+^*) |}{m_+^* + m_-^*} \right\}\, . \label{sep Werner}
\eq
Observe that plugging~\eqref{k m dual} into~\eqref{base Werner 3} yields the somehow more handy expression
\bq
\| a U_*+ b \mathcal{E} \|\, =\, \max\left\{ |a|,\ \frac{| 2\,b\, + a (k_- - k_+) |}{k_+ + k_-} \right\} .
\label{base Werner}
\eq
Now, computing the ratio $\max_{0\neq (a,b)\in\mathds{R}^2}\, \frac{\| a U_* + b \mathcal{E} \|}{\,\| a U_* + b \mathcal{E} \|_{\text{SEP}}}$ amounts to minimising~\eqref{sep Werner} for a fixed value of~\eqref{base Werner}. It is very easy to see that such a minimum is always achieved for pairs $(a,b)$ such that the two expressions in the maximum appearing in~\eqref{sep Werner} coincide. Up to scalar multiples, there are exactly two such pairs, namely those satisfying $b=\pm \frac{a}{m_\mp^*}$. By substituting these values into~\eqref{base Werner} and dividing by~\eqref{sep Werner}, we find
\bq
R_{\text{Werner}}(\text{SEP})\, =\, \max_{0\neq (a,b)\in\mathds{R}^2}\, \frac{\| a U_* + b \mathcal{E} \|}{\,\| a U_* + b \mathcal{E} \|_{\text{SEP}}}\, =\, \max\left\{ 1,\ \frac{\big| \frac{2}{m_-^*}\, + k_- - k_+ \big|}{k_+ + k_-},\ \frac{\big| -\frac{2}{m_+^*}\, + k_- - k_+ \big|}{k_+ + k_-} \right\} . \label{dh Werner 1}
\eq
Now, using~\eqref{k m dual} it is easy to see that 
\bqq
\frac{2}{m_\pm^*} \pm (k_+-k_-)\, \geq\, \frac{2}{(d-1) k_\pm^*} \pm (k_+-k_-) \, =\, 2 k_\mp \pm (k_+-k_-)\, =\, k_+ + k_-\, .
\eqq
Thanks to the above inequalities, we can further simplify~\eqref{dh Werner 1} to
\begin{align*}
R_{\text{Werner}}(\text{SEP})\, &=\, \max\left\{ \frac{\big| \frac{2}{m_-^*}\, + k_- - k_+ \big|}{k_+ + k_-},\ \frac{\big| -\frac{2}{m_+^*}\, + k_- - k_+ \big|}{k_+ + k_-} \right\}\, =\\
&=\, \frac{1}{k_+ + k_-}\, \max\left\{ \frac{2}{m_+^*} + (k_+-k_-),\ \frac{2}{m_-^*} - (k_+-k_-) \right\}\, =\\
&=\, 1\, +\, \frac{2}{k_+ + k_-}\, \max \left\{ \frac{1}{m_+^*} - k_-,\ \frac{1}{m_-^*} - k_+ \right\} ,
\end{align*}
finally proving~\eqref{dh Werner eq}.
\end{proof}

\vspace{2ex}
The above result shows how the maximal data hiding ratio obtainable by employing only Werner states depends on just few geometric parameters characterising the model under examination. The usefulness of this theorem rests on its immediate applicability to several natural classes of highly symmetric GPTs. Since computing the relevant parameters is often a simple and intuitive task, as we shall see in a moment,~\eqref{dh Werner eq} gives a quick lower bound on all the data hiding ratios against locally constrained sets of measurements.
To demonstrate the power of Theorem~\ref{dh Werner}, we apply it to the symmetric models that we have examined so far: classical probability theory, quantum mechanics, spherical model, and cubic model.

\begin{itemize}

\item {\it Classical probability theory} (Example~\ref{ex class}). By looking at the definition~\eqref{classical}, it is easy to see that the relevant group here is the symmetric group $S_d$. We have $u=(1,\ldots,1)^T=d u_*$, and the action on the subspace $V_0=\{x\in\mathds{R}^d:\ \sum_i x_i=0\}$ is well-known to be irreducible. This can be either proved with elementary tools or shown in one line via character theory and Burnside's formula. In fact, denote with $\eta:S_d\rightarrow \mathds{R}^{d\times d}$ the standard representation of $S_d$ on $\mathds{R}^d$. Since we can exhibit an explicit one-dimensional irrep (vectors of constant entries), it is enough to show that $\frac{1}{d!} \sum_{\pi\in S_d} \left( \Tr \eta(\pi)\right)^2 = \langle \chi_\eta, \chi_\eta\rangle=2$. Observing that $\Tr \eta(\pi)$ is the number of elements fixed by $\pi$ and applying Burnside's formula to the natural action of $S_d$ on $\{1,\ldots, d\}^{\times 2}$, which has exactly $2$ orbits, we find $\frac{1}{d!} \sum_{\pi\in S_d} \left( \Tr \eta(\pi)\right)^2=2$.

Going back to the analysis of classical probability theory as a completely symmetric model, and exploiting the identification between elements in the tensor product and $d\times d$ real matrices we can also write $U=uu^T= d^2 U_*$ and $\mathcal{E}=\mathds{1}-\frac{1}{d}uu^T=\mathcal{E}_*$. Furthermore, observe that since the upper and lower bound in~\eqref{CAB bound} coincide, the bipartite cone is composed of all the entrywise positive matrices, collectively denoted by $\mathds{R}^{d\times d}_+$. Computing the relevant quantities is now elementary:
\begin{align*}
m_+^* &=\, \max\left\{v^Tv:\ u+v\in\mathds{R}^d_+,\ \sum\nolimits_i v_i=0 \right\}\, =\, d(d-1)\, =\, d^2 m_+\, ,\\
m_-^* &=\, \max\left\{v^Tw:\ u+v,\, u-w\in\mathds{R}^d_+,\ \sum\nolimits_i v_i=\sum\nolimits_j w_j = 0 \right\}\, =\, d\, =\, d^2 m_-\, ,\\
k_+ &=\, \max\left\{ k:\ u_*u_*^T+ k \left(\mathds{1} -d u_*u_*^T\right)\in \mathds{R}^{d\times d}_+\right\}\, =\, \frac{1}{d}\, ,\\
k_- &=\, \max\left\{ k:\ u_*u_*^T- k \left(\mathds{1} -d u_*u_*^T\right)\in \mathds{R}^{d\times d}_+\right\}\, =\, \frac{1}{d(d-1)}\, ,
\end{align*}
Applying~\eqref{dh Werner eq}, we find $R_{\text{Werner}}^{\text{Cl}}(\text{SEP})=1$, which is expected since the collapse of the hierarchy~\eqref{CAB bound} when either of the two cones is simplicial forbids the existence of data hiding altogether.

\item {\it Quantum mechanics} (Example~\ref{ex QM}). We refer to~\eqref{quantum} for the notation. The symmetry group in quantum mechanics is $U(n)$, the group of unitary $n\times n$ matrices. It is easy to see that we can choose $u=\mathds{1}=n\, u_*$ and that the orthogonal complement to this trivial representation (that is, the space of traceless hermitian matrices) is irreducible. In fact, this follows already from the same result for the classical case, since permutation matrices are also unitaries.

For a bipartite system we see that $U=\mathds{1}=n^2\, U_*$ and $\mathcal{E}=F-\frac{\mathds{1}}{n}=\mathcal{E}_*$, where $F\ket{\alpha\beta}=\ket{\beta\alpha}$ denotes the flip operator, as usual. If the composite is assembled according to the standard quantum mechanical rule~\eqref{cone bipartite quantum}, we find
\begin{align*}
m_+^* &=\, \max\left\{\Tr X^2:\ \mathds{1}+X\geq 0,\ \Tr X=0 \right\}\, =\, n(n-1)\, =\, n^2 m_+\, ,\\
m_-^* &=\, \max\left\{\Tr XY:\ \mathds{1}+X,\, \mathds{1}-Y\geq 0,\ \Tr X=\Tr Y=0 \right\}\, =\, n\, =\, n^2 m_-\, ,\\
k_+ &=\, \max\left\{ k:\ \mathds{1}/ n^2+ k \left( F-\mathds{1}/n \right)\geq 0 \right\}\, =\, \frac{1}{n(n+1)}\, ,\\
k_- &=\, \max\left\{ k:\ \mathds{1}/ n^2 - k \left( F-\mathds{1}/n \right)\geq 0 \right\}\, =\, \frac{1}{n(n-1)}\, ,
\end{align*}
from which it follows easily $R_{\text{Werner}}^{\text{QM}}(\text{SEP})=n$. According to Theorem~\ref{dh QM}, this is even the optimal data hiding ratio against all separable measurements.

If the composite is formed with the minimal tensor product rule (what we called `$W$-theory' in Section~\ref{sec ex}), we write instead
\bqq
k_+ =\, \frac{m_+}{n^2-1}\, =\, \frac{1}{n(n+1)}\, ,\qquad k_- =\, \frac{m_-}{n^2-1}\, =\, \frac{1}{n(n^2-1)}\, ,
\eqq
and thus $R_{\text{Werner}}^{\text{W}}(\text{SEP})=2n-1$, in accordance with Proposition~\ref{dh W}.

\item {\it Spherical model} (Example~\ref{ex sph}). Using the same notation as in~\eqref{spherical}, we see that in this case the relevant group is $O(d-1)$, the set of $(d-1)\times (d-1)$ orthogonal matrices acting on the last $d-1$ components of $\mathds{R}^d$. We can choose $u=(1,0,\ldots,0)^T=u_*$, and again the orthogonal complement is irreducible as follows from the same result for classical probability theory. For a bipartite system $U=uu^T=U_*$ and $\mathcal{E}=\hat{\mathds{1}}_{d-1}=\mathcal{E}_*$. If the composite is formed with the minimal tensor product rule, then it is very easy to verify that
\bqq
m_\pm^*=\, \max\{ v^Tv:\ u+v\in C_d\}\, =\, 1\, =\, m_\pm\, , \qquad k_\pm\, =\, \frac{m_\pm}{d-1}\, =\, \frac{1}{d-1}\, ,
\eqq
so that $R_{\text{Werner}}^{\text{Sph}}(\text{SEP})=d-1$, which coincides with the lower bound for the data hiding ratio against separable measurements given in Corollary~\ref{dh sph}.

\item {\it Cubic model} (Example~\ref{cubic}). We follow the notation previously established. The symmetry group of the cubic model in $d$ dimensions is simply the symmetry group of the $(d-1)$-dimensional hypercube. As for the spherical model, we have $u=(1,0,\ldots,0)^T=u_*$, $U=uu^T=U_*$ and $u^\perp$ is irreducible. Furthermore, we can choose $\mathcal{E}=\hat{\mathds{1}}_{d-1}=\mathcal{E}_*$. However, even for a minimal tensor product composite we have
\begin{align*}
m_\pm^*&=\, \max\{ v^Tv:\ v\in\mathds{R}^{d-1},\ |v|_1\leq 1 \}\, =\, 1\, ,\\
m_\pm &=\, \max\{ v^Tv:\ v\in\mathds{R}^{d-1},\ |v|_\infty\leq 1 \}\, =\, d-1\, ,\\
k_\pm &=\, \frac{m_\pm}{d-1}\, =\, 1\, ,
\end{align*}
and therefore $R_{\text{Werner}}^{\text{G}}(\text{SEP})=1$. This example shows how it might be the case that Werner states do not display any data hiding property, despite the fact that there is global data hiding in the cubic model, as shown by Proposition~\ref{dh cubic}.

\end{itemize}

\section{Conclusions}

We have presented a general theory of data hiding in bipartite systems composed of GPTs (general probabilistic theories), a framework comprising quantum states and probability distributions, and generally operational theories with states and measurements. It significantly extends ideas and results from quantum mechanics; in particular, we were able to determine the maximum so-called data hiding ratio in terms of the state space dimension, by making a connection between this problem and Grothendieck's tensor norms. This maximum is essentially attained for GPTs over spherical cones.

Inspired by the prominent role played by Werner states in quantum mechanical data hiding, we investigated Werner-like states in classes of theories with symmetric state spaces. By using these states as ansatzes, we could find a general lower bound on the data hiding ratio that depends on few geometrically meaningful parameters.

For quantum mechanics on finite-dimensional spaces we proved a new upper bound on the data hiding ratio against LOCC operations by exploiting the celebrated quantum teleportation protocol. We saw how this improves previous results and definitively settles the problem of determining the optimal data hiding ratio against LOCC for fixed local dimensions. However, the same problem for the smaller set of local operations remains open, as we showed that none of the bounds that we provide or that are available in the literature is generally optimal.

Perhaps surprisingly, data hiding ratios in quantum mechanics are not as large as the maximal conceivable ones, being of the order of the square root of the latter, which are exhibited by spherical cones. Thus, if one were to summarise the results of our research in one single sentence, one could say that Nature is non-classical, but not as non-classical as it could have been.
Although this conclusion has been found by many other authors in relation to different investigations into non-locality phenomena, it appears to us so just, that we offer it here, as the essence, so to speak, of the whole story.

Lastly, while here we have focused on the maximum data hiding ratios among $R(\text{SEP})$ for various local GPTs, the opposite end is very interesting, too, from a foundational point of view. Namely, it is clear that to have $R(\text{SEP})>1$, it is necessary that both cones $C_A$ and $C_B$ are non-classical in the sense that they can not be simplicial, for the reason that otherwise min- and max-tensor product of the cones would coincide. It is however not known that this is necessary, and one might conjecture that $R(\text{SEP})=1$ if and only if one of the cones $C_A$ or $C_B$ is simplicial. A general result along these lines is due to Namioka and Phelps~\cite{NP}, which says that for $C_B$ the cone of the so-called gbit, $R(\text{SEP})=1$ iff $C_A$ is simplicial. If the more general conjecture were true, it would make a good case for $R(\text{SEP})$ as a measure of non-classicality of a GPT. For a thorough discussion of this and related problems, see \cite[Chapter~II]{lamiatesi}.

\vspace{2ex}
{\it Acknowledgements.} We thank Guillaume Aubrun for many insightful discussions on functional analysis, in particular for having suggested the use of Auerbach's lemma in the proof of Proposition~\ref{prop proj=<ninj}. LL is also grateful to Howard Barnum and Matthias Christandl for convincing him that defining LOCC in arbitrary GPTs is possible.
LL and AW are indebted to John Calsamiglia and C\'ecilia Lancien for enlightening observations on data hiding and certain aspects of convex analysis, respectively. Finally, we thank the anonymous referees for useful comments on the first version of this paper.
We acknowledge financial support from the European Research Council (AdG IRQUAT No. 267386), the Spanish MINECO (Project no. FIS2013-40627-P, no. FIS2016-86681-P, and no. MTM2014-54240-P), the Generalitat de Catalunya (CIRIT Project no. 2014 SGR 966), and the Comunidad de Madrid (QUITEMAD+ Project S2013/ICE-2801). CP was partially supported by the ``Ram\'on y Cajal program" (RYC-2012-10449).

\appendix

\section{Equiprobable data hiding pairs} \label{app equi}

In this appendix we look into the consequences of restricting the definition of data hiding to equiprobable pairs of states. Let us start by stating the corresponding modified version of Definition~\ref{dh}.

\vspace{2ex}
\begin{Def} \label{equi dh}
Let $(V,C,u)$ be a GPT. For an informationally complete set of measurements $\mathcal{M} \subseteq \mathbf{M}$, we say that there is \emph{balanced data hiding against $\mathcal{M}$ with efficiency $\widetilde{R}\geq 1$} if there are two normalised states $\rho,\sigma\in\Omega$ such that the probability of error defined in~\eqref{pr error} satisfies
\begin{equation}
P_{e}^{\mathbf{M}}(\rho,\sigma; 1/2) = 0\, ,\qquad P_{e}^{\mathcal{M}}(\rho,\sigma; 1/2) = \frac12 \left( 1-\frac{1}{\widetilde{R}}\right) .
\label{equi dh eq}
\end{equation}
The highest balanced data hiding efficiency against $\mathcal{M}$ is called \emph{balanced data hiding ratio against $\mathcal{M}$} and will be denoted by $\widetilde{R}(\mathcal{M})$.
\end{Def}

\vspace{1ex}
\begin{rem}
Clearly, standard data hiding ratios are always larger or equal to the corresponding balanced versions, i.e. $R(\mathcal{M})\geq \widetilde{R}(\mathcal{M})$.
\end{rem}

Throughout the rest of this appendix, we will consider (balanced) data hiding against a fixed set of measurements $\mathcal{M}$ that we do not specify further. 
There are at least two reasons why the above Definition~\ref{equi dh} does not add much to the theory developed in the main text, at least from a fundamental perspective. The first such reason is that it turns out that highly efficient standard data hiding pairs of states are automatically almost equiprobable, as the following result establishes.

\vspace{2ex}
\begin{lemma}
Let $\rho,\sigma$ be states that obey~\eqref{dh eq} for some a priori probability $p\neq 1/2$. Then the standard data hiding efficiency $R$ satisfies 
\begin{equation}
R\, \leq\, \frac{1}{|2p-1|}\, .
\label{if high R then p almost 1/2}
\end{equation}
\end{lemma}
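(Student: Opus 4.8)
The plan is to reduce the statement to the elementary lower bound $\|x\|_{\mathcal{M}}\geq |u(x)|$ recorded as property (ii) right after Definition~\ref{def d norm}. First I would translate the two hypotheses in~\eqref{dh eq} into statements about norms by invoking Lemma~\ref{discr GPT}: setting $x\coloneqq p\rho-(1-p)\sigma$, the condition $P_e^{\mathbf{M}}(\rho,\sigma;p)=0$ together with the fact that $\|\cdot\|_{\mathbf{M}}$ coincides with the base norm gives $\|x\|=1$, while $P_e^{\mathcal{M}}(\rho,\sigma;p)=\frac12\left(1-\frac1R\right)$ gives $\|x\|_{\mathcal{M}}=\frac1R$.

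Next I would evaluate the unit effect on $x$. Since $\rho$ and $\sigma$ are normalised states, $u(\rho)=u(\sigma)=1$, hence $u(x)=p\,u(\rho)-(1-p)\,u(\sigma)=2p-1$. Applying property (ii) of the distinguishability norm yields $\|x\|_{\mathcal{M}}\geq |u(x)|=|2p-1|$, and combining this with $\|x\|_{\mathcal{M}}=\frac1R$ gives $\frac1R\geq |2p-1|$, i.e.~the claimed inequality $R\leq \frac{1}{|2p-1|}$.

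There is essentially no obstacle here: the argument is a one-line consequence of the basic properties of distinguishability norms collected after Definition~\ref{def d norm}, and the only point requiring minor care is the bookkeeping with the a priori probabilities. Note that the hypothesis $P_e^{\mathbf{M}}(\rho,\sigma;p)=0$ is in fact not needed to derive~\eqref{if high R then p almost 1/2}; it becomes relevant only when one additionally wants to conclude, as in the surrounding discussion, that a genuinely efficient data hiding pair must be approximately equiprobable.
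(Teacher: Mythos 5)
Your proof is correct and is essentially the paper's argument recast in norm language: the paper bounds $P_e^{\mathcal{M}}(\rho,\sigma;p)\leq \min\{p,1-p\}$ by noting that guessing according to the prior is always available, which is exactly the operational content of the bound $\|x\|_{\mathcal{M}}\geq |u(x)|=|2p-1|$ you invoke via property (ii) and Lemma~\ref{discr GPT}. Your closing observation also matches the paper's proof, which likewise never uses the zero-error hypothesis $P_e^{\mathbf{M}}(\rho,\sigma;p)=0$.
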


\begin{proof}
An obvious upper bound to the probability of error $P_e^\mathcal{M}(\rho,\sigma ; p)$ can be obtained by considering the strategy of guessing according to the maximum likelihood rule.
Since such strategy fails with probability $\min\{p,1-p\}$, we get
\bqq
P_e^\mathcal{M}(\rho,\sigma; p)\, \leq\, \min\{p,1-p\}\, =\, \frac{1-|2p-1|}{2}\, .
\eqq
Using the definition of efficiency~\eqref{dh eq}, one finds immediately
\bqq
R\, =\, \frac{1}{1-2P_e^{\mathcal{M}}(\rho,\sigma ;p)}\, \leq\, \frac{1}{|2p-1|}\, ,
\eqq
as claimed.
\end{proof}

\vspace{2ex}
While the above result establishes the `approximate equiprobability' of highly efficient data hiding pairs, balanced data hiding ratios still stand as a different concept, as in their definition we required the \emph{exact} equality $p=1/2$ from the start. 
In fact, it could indeed happen that $\widetilde{R}(\mathcal{M})<R(\mathcal{M})$ in some cases. However, we will show in a moment that anyway $R$ and $\widetilde{R}$ have to be of the same order. To this purpose, we need a modified version of Proposition~\ref{dh ratio}.

\vspace{2ex}
\begin{lemma} \label{equi dh ratio}
For an informationally complete set of measurements $\mathcal{M}\subseteq\mathbf{M}$ in an arbitrary GPT, the balanced data hiding ratio $\widetilde{R}(\mathcal{M})$ can be computed as
\begin{equation}
\widetilde{R}(\mathcal{M})\, =\, \max \left\{ \frac{\|y\|}{\|y\|_{\mathcal{M}}}:\ y\in V,\ y\neq 0,\ u(y)=0 \right\}\, .
\label{equi dh ratio eq}
\end{equation}
\end{lemma}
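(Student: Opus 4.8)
The plan is to mirror the proof of Proposition~\ref{dh ratio}, the only difference being that the optimisation is now over differences of equiprobable states, i.e. over vectors of the form $\frac12(\rho-\sigma)$ with $\rho,\sigma\in\Omega$. Recall from Lemma~\ref{discr GPT} that $P_e^{\mathcal M}(\rho,\sigma;\tfrac12)=\tfrac12\bigl(1-\|\tfrac12(\rho-\sigma)\|_{\mathcal M}\bigr)$, and the analogous formula for $\mathbf M$. Since $\|\cdot\|_{\mathbf M}$ is the base norm, the condition $P_e^{\mathbf M}(\rho,\sigma;\tfrac12)=0$ is equivalent to $\|\tfrac12(\rho-\sigma)\|=1$, i.e. $\|\rho-\sigma\|=2$. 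Under this constraint, \eqref{equi dh eq} forces $\widetilde R=\|\tfrac12(\rho-\sigma)\|_{\mathcal M}^{-1}$, so by the definition of $\widetilde R(\mathcal M)$ as a supremum we obtain
\bqq
\widetilde R(\mathcal M)\, =\, \sup\left\{\left\|\tfrac12(\rho-\sigma)\right\|_{\mathcal M}^{-1}:\ \rho,\sigma\in\Omega,\ \left\|\tfrac12(\rho-\sigma)\right\|=1\right\}\, .
\eqq

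The key step is then to identify the set $\widetilde K$ of vectors $y\in V$ representable as $y=\tfrac12(\rho-\sigma)$ for some $\rho,\sigma\in\Omega$. I claim $\widetilde K$ is precisely the set of $y\in V$ with $u(y)=0$ and $\|y\|\le 1$. The inclusion "$\subseteq$" is immediate: $u\bigl(\tfrac12(\rho-\sigma)\bigr)=\tfrac12(1-1)=0$ since $\rho,\sigma$ are normalised, and $\|y\|\le\tfrac12(\|\rho\|+\|\sigma\|)=1$. For "$\supseteq$", suppose $u(y)=0$ and $\|y\|\le 1$. Apply Lemma~\ref{dual base} to write $y=y_+-y_-$ with $y_\pm\ge 0$ and $u(y_+)+u(y_-)=\|y\|$; since $u(y)=0$ we get $u(y_+)=u(y_-)=\tfrac12\|y\|\le\tfrac12$. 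Setting $t\coloneqq u(y_+)$, if $t>0$ put $\rho_0\coloneqq y_+/t$, $\sigma_0\coloneqq y_-/t$, which are normalised states with $y=t(\rho_0-\sigma_0)$; since $2t\le 1$, pick any normalised state $\omega\in\Omega$ and set $\rho\coloneqq 2t\,\rho_0+(1-2t)\omega$, $\sigma\coloneqq 2t\,\sigma_0+(1-2t)\omega$, which are normalised states (convexity of $\Omega$) with $\tfrac12(\rho-\sigma)=t(\rho_0-\sigma_0)=y$; the case $t=0$ gives $y=0=\tfrac12(\omega-\omega)$. This shows $\widetilde K=\{y:\,u(y)=0,\ \|y\|\le 1\}$, i.e. $\widetilde K$ is the unit ball of the base norm restricted to the hyperplane $u^\perp\coloneqq\{y:\,u(y)=0\}$.

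With this identification, the supremum becomes $\widetilde R(\mathcal M)=\sup\{\|y\|_{\mathcal M}^{-1}:\ y\in u^\perp,\ \|y\|=1\}$, and by positive homogeneity of both norms this equals $\max\{\|y\|/\|y\|_{\mathcal M}:\ 0\neq y\in V,\ u(y)=0\}$, the supremum being attained because we are maximising a continuous function over the compact set $\{y\in u^\perp:\ \|y\|=1\}$ (here $u^\perp$ is a genuine subspace, so the unit sphere of $\|\cdot\|$ inside it is compact, and $\|\cdot\|_{\mathcal M}$ is a norm hence bounded away from $0$ on it). This is exactly \eqref{equi dh ratio eq}. I do not expect any serious obstacle; the only point requiring a little care is the "$\supseteq$" direction of the set identification, specifically handling the normalisation (the $2t\le 1$ padding with an arbitrary state $\omega$) and the degenerate case $y=0$, but these are routine and parallel the argument already given for Proposition~\ref{dh ratio}.
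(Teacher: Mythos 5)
Your proof is correct and follows essentially the same route as the paper: reduce via Lemma~\ref{discr GPT} to a supremum over differences $\tfrac12(\rho-\sigma)$, then identify $\widetilde K$ with $\{y:\ \|y\|\leq 1,\ u(y)=0\}$ using Lemma~\ref{dual base}. The only (immaterial) difference is in the case $\|y\|<1$, where the paper invokes convexity of $\widetilde K$ together with $0\in\widetilde K$, while you pad explicitly with an auxiliary state $\omega$; both arguments are fine.
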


\begin{proof}
The argument follows closely that we gave in the proof of Proposition~\ref{dh ratio}. The only difference is that now one needs to characterise the set $\widetilde{K}$ of vectors $y\in V$ that can be expressed as $y=\frac12 \rho- \frac12 \sigma$ for some normalised states $\rho,\sigma\in \Omega$. It is not difficult to realise that $\widetilde{K} = \left\{y\in V:\ \|y\|\leq 1,\, u(y)=0 \right\}$, ultimately yielding~\eqref{equi dh ratio eq}.

The above characterisation can be justified as follows. On the one hand, if $y=\frac12\rho-\frac12 \sigma$ for $\rho,\sigma\in\Omega$ then $\|y\|\leq \frac12 \|\rho\|+\frac12 \|\sigma\| = \frac12+\frac12=1$, and moreover $u(y)=\frac12 (1-1)=0$. On the other hand, applying Lemma~\ref{dual base} as usual we can write any $y\in V$ such that $\|y\| = 1$ as a difference $y=y_+-y_-$, where $y_\pm \geq 0$ and $u(y_+)+u(y_-)=1$. Requiring also $u(y)=0$ leads us to the conditions $u(y_+)=u(y_-)=\frac12$, so that $y_+=\frac12 \rho$ and $y_-=\frac12 \sigma$ for some normalised states $\rho,\sigma\in\Omega$. This shows that all vectors $y\in V$ such that $\|y\| = 1$ and $u(y)=0$ belong to $\widetilde{K}$. Since the latter set is convex and contains $0$, we deduce that indeed $\left\{y\in V:\ \|y\|\leq 1,\, u(y)=0 \right\}\subseteq \widetilde{K}$, which is what was left to show.
\end{proof}

\vspace{2ex}
We are now ready to prove the following result, which concludes our analysis.

\vspace{2ex}
\begin{prop} \label{equi vs standard}
The balanced and standard data hiding ratios, as given by Definition~\ref{dh} and~\ref{equi dh}, respectively, satisfy the inequalities
\bq
\widetilde{R}(\mathcal{M})\, \leq\, R(\mathcal{M})\, \leq\, 2 \widetilde{R}(\mathcal{M}) + 1\, .
\label{equi vs standard eq}
\eq
\end{prop}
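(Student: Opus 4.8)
The plan is to prove the two inequalities in~\eqref{equi vs standard eq} separately. The left inequality $\widetilde{R}(\mathcal{M})\leq R(\mathcal{M})$ is immediate and was already noted in the remark following Definition~\ref{equi dh}: every balanced data hiding pair is in particular a data hiding pair with $p=1/2$, so the supremum defining $R(\mathcal{M})$ is taken over a larger set. (Alternatively, compare the characterisations~\eqref{dh ratio eq} and~\eqref{equi dh ratio eq}: the latter maximises the same ratio $\|\cdot\|/\|\cdot\|_{\mathcal{M}}$ over the subset $\{u(y)=0\}$.) So the real content is the right inequality $R(\mathcal{M})\leq 2\widetilde{R}(\mathcal{M})+1$.

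For the right inequality, I would start from the formula $R(\mathcal{M})=\max_{0\neq x\in V}\|x\|/\|x\|_{\mathcal{M}}$ of Proposition~\ref{dh ratio}, and fix a maximiser $x$, normalised so that $\|x\|=1$, hence $R(\mathcal{M})=\|x\|_{\mathcal{M}}^{-1}$. By Lemma~\ref{dual base} write $x=x_+-x_-$ with $x_\pm\geq 0$ and $u(x_+)+u(x_-)=\|x\|=1$; set $p\coloneqq u(x_+)$, so $u(x_-)=1-p$, and $u(x)=2p-1$. The idea is to split $x$ into its "traceless part" and a multiple of a fixed positive reference vector. Concretely, pick any normalised state $\omega_0\in\Omega$ (so $u(\omega_0)=1$, $\|\omega_0\|=1$), and write
\bqq
x\, =\, y\, +\, (2p-1)\,\omega_0\, ,\qquad y\coloneqq x-(2p-1)\omega_0\, ,
\eqq
noting that $u(y)=u(x)-(2p-1)=0$, so $y$ is a legitimate argument in~\eqref{equi dh ratio eq}. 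Now estimate using the triangle inequality for $\|\cdot\|_\mathcal{M}$ together with property (i) of the distinguishability norm ($\|\omega_0\|_\mathcal{M}=u(\omega_0)=1$):
\bqq
\|x\|_\mathcal{M}\, \geq\, \|y\|_\mathcal{M}\, -\, |2p-1|\,\|\omega_0\|_\mathcal{M}\, =\, \|y\|_\mathcal{M}\, -\, |2p-1|\, \geq\, \frac{\|y\|}{\widetilde{R}(\mathcal{M})}\, -\, |2p-1|\, ,
\eqq
where the last step is the definition~\eqref{equi dh ratio eq} of $\widetilde{R}(\mathcal{M})$. On the other hand $\|y\|\geq \|x\|-|2p-1|\,\|\omega_0\|=1-|2p-1|$, and also (from the lemma preceding, $R\leq 1/|2p-1|$, equivalently $\|x\|_\mathcal{M}=R^{-1}\geq|2p-1|$, so that) one controls $|2p-1|$ in terms of $\|x\|_\mathcal{M}$. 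Substituting $\|y\|\geq 1-|2p-1|$ and rearranging gives a bound of the shape $R(\mathcal{M})=\|x\|_\mathcal{M}^{-1}\leq \big(\tfrac{1-|2p-1|}{\widetilde{R}(\mathcal{M})}-|2p-1|\big)^{-1}$, which after using $|2p-1|\le \|x\|_\mathcal{M}=1/R(\mathcal{M})$ and a short algebraic manipulation should yield $R(\mathcal{M})\leq 2\widetilde{R}(\mathcal{M})+1$.

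The main obstacle I anticipate is the bookkeeping in that final algebraic step: one must handle the parameter $t\coloneqq|2p-1|\in[0,1)$ carefully, because the lower bound on $\|x\|_\mathcal{M}$ degrades as $t$ grows, and one needs the estimate $t\le 1/R(\mathcal{M})$ (from the earlier lemma) to close the loop — without it the bound is vacuous when $t$ is close to $1$. Concretely, one gets an inequality of the form $\frac{1}{R}\ge \frac{1-t}{\widetilde R}-t$ together with $t\le \frac1R$; eliminating $t$ between these (worst case is $t=1/R$) gives $\frac1R\ge\frac{1-1/R}{\widetilde R}-\frac1R$, i.e. $\frac2R+\frac{1}{R\widetilde R}\ge\frac{1}{\widetilde R}$, i.e. $2\widetilde R+1\ge R$, as desired. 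The only subtlety is checking that the "worst case $t=1/R$" substitution is legitimate (the right-hand side $\frac{1-t}{\widetilde R}-t$ is decreasing in $t$, so replacing $t$ by its upper bound only weakens the lower bound on $1/R$, which is the direction we want). I would also double-check the edge case $p=1/2$ (then $t=0$, $y=x$, and the inequality reads $R\le\widetilde R$, consistent with the left inequality forcing equality there) to make sure no division-by-zero or sign issue has crept in.
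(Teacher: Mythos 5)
Your argument is correct and is essentially the paper's own proof: the same shift $y=x-u(x)\,\omega_0$ producing a vector with $u(y)=0$, the triangle inequality applied to both norms together with $\|\omega_0\|_{\mathcal{M}}=u(\omega_0)=1$ and $|u(x)|\leq\|x\|_{\mathcal{M}}$, and then Lemma~\ref{equi dh ratio}. The only difference is cosmetic: the paper normalises the maximiser so that $\|x\|_{\mathcal{M}}=1$ (hence $|u(x)|\leq 1$, $\|y\|\geq R-1$, $\|y\|_{\mathcal{M}}\leq 2$, giving $\widetilde{R}\geq (R-1)/2$ directly), which avoids the extra elimination of $t=|2p-1|$ that your normalisation $\|x\|=1$ makes necessary.
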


\begin{proof}
We already saw that balanced data hiding ratios are always upper bounded by their standard versions, so let us focus on the complementary bound. We start by picking a vector $x\in V$ that saturates the maximum in~\eqref{dh ratio eq}, i.e. that is such that $\|x\|_\mathcal{M}=1$ and $\|x\|=R$. Choose a normalised state $\omega\in \Omega$, and construct $y\coloneqq x-u(x) \omega$, so that $u(y)=0$. Observe that $|u(x)|\leq \|x\|_\mathcal{M}=1$. To estimate the norms $\|y\|$ and $\|y\|_{\mathcal{M}}$, one can apply repeatedly the triangle inequality:
\begin{align*}
\|y\|\, &\geq\, \|x\| - |u(x)|\, \geq\, R(\mathcal{M})-1\, ,\\
\|y\|_\mathcal{M}\, &\leq\, \|x\|_\mathcal{M} + |u(x)|\, \leq\, 2\, .
\end{align*}
Therefore, we can use Lemma~\ref{equi dh ratio} to write
\bqq
\widetilde{R}(\mathcal{M})\, \geq\, \frac{\|y\|}{\|y\|_\mathcal{M}}\, \geq\, \frac{R(\mathcal{M})-1}{2}\, ,
\eqq
which gives~\eqref{equi vs standard eq} upon elementary algebraic manipulations.
\end{proof}

\section{Quantum data hiding with Werner states} \label{app Werner}

Throughout this appendix, we briefly review the well-known techniques used to find lower bounds on data hiding ratios via Werner states, and apply them to prove~\eqref{bound Werner} and~\eqref{bound Werner W}. More precisely, we will show that those described in~\eqref{bound Werner} and~\eqref{bound Werner W} are the optimal data hiding pairs within the Werner class, defined as the real span of symmetric and antisymmetric projector~\eqref{symm antisymm proj}. Formally, this amounts to show that
\begin{equation}
\max_{(\alpha,\beta)\neq (0,0)} \frac{\|\alpha\rho_{S}+\beta\rho_{A}\|_{1}}{\|\alpha\rho_{S}+\beta\rho_{A}\|_{\text{SEP}}}\, =\, n\, ,\qquad \max_{(\alpha,\beta)\neq (0,0)} \frac{\|\alpha\rho_{S}+\beta\rho_{A}\|_{W}}{\|\alpha\rho_{S}+\beta\rho_{A}\|_{\text{SEP}}}\, =\, 2n-1\, .
\label{app Werner eq1}
\end{equation}
As a side remark, let us notice here that the above separability norms could be easily replaced with $\text{LOCC}$ norms, since the extremal measurements in the Werner class are well-known to be $\text{LOCC}$~\cite[Propositions 1 and 3]{VV dh Chernoff}.

Consider a binary measurement $(E,\mathds{1}-E)$, which we will choose later to be either standard, or a witness, or else an $\text{LOCC}$. Thanks to the fact that to estimate distinguishability norms we are going to compute quantities of the form $\Tr E X$ where $X$ is in the Werner class and thus invariant under the `twirling' operation $\mathcal{T}(\cdot)\coloneqq \int dU\, U\otimes U (\cdot) U^{\dag}\otimes U^{\dag}$, a very standard trick going back to the first paper on data hiding~\cite{dh original 1} allows us to conclude that we can suppose also the measurement elements $E,\mathds{1}-E$ to belong to the Werner class. For $E = a\mathds{1}+b F$, it is well-known that the measurement $(E,\mathds{1}-E)$ is:
\begin{itemize}
\item a standard quantum mechanical measurement iff $0\leq E\leq \mathds{1}$, i.e. iff $0\leq a\pm b\leq 1$;
\item separable iff $E,\mathds{1}-E$ are both separable operators~\eqref{separable}, iff $0\leq a - b\leq 1$ and $0\leq a + nb\leq 1$;
\item a $W$-theory measurement iff $E,\mathds{1}-E$ are both witnesses~\eqref{witnesses}, iff $0\leq a\leq 1$ and $0\leq a+b\leq 1$.
\end{itemize}
While the first two points are well-known results~\cite{Werner symmetry}, the latter deserves a quick comment. An operator $E$ is a witness iff $\braket{\alpha \beta| E|\alpha\beta}\geq 0$ for all states $\ket{\alpha},\ket{\beta}\in\mathds{C}^{n}$. We can thus use the properties of the flip operator to conclude that $E=a\mathds{1}+bF$ is a witness iff $a + pb\geq 0$ for all $p\in [0,1]$, which together with the analogous condition for $\mathds{1}-E$ reproduces the above constraints. Optimising over all measurements in the Werner class we conclude that
\begin{align}
\|\alpha\rho_{S}+\beta\rho_{A}\|_{1}\, &=\, |\alpha|+|\beta|\, , \label{app Werner eq2} \\
\|\alpha\rho_{S}+\beta\rho_{A}\|_{\text{SEP}}\, &=\, \frac{2}{n+1}\, |\alpha| + \left| \frac{n-1}{n+1}\alpha +\beta\right| , \label{app Werner eq3} \\
\|\alpha\rho_{S}+\beta\rho_{A}\|_{W}\, &=\, |\alpha-\beta|+2|\beta|\, . \label{app Werner eq4}
\end{align}
Computing the maxima in~\eqref{app Werner eq1} is now an elementary exercise.

\section{Comparison between $\|\cdot\|_W$ and $\|\cdot\|_2$} \label{app W norm}

Here, we argue that the methods in~\cite{VV dh} can not be applied directly to determine the data hiding ratio in $W$-theory against any locally constrained set of measurements. As detailed at the end of Example~\ref{ex QM}, to this purpose it is enough to show the following.

\vspace{2ex}
\begin{lemma}
For all $\delta>0$, there exists a sequence of operators $X_n$ acting on $\mathds{C}^n\otimes\mathds{C}^n$, such that $\|X_n\|_2=1$ but $\|X_n\|_W\geq \Omega\left(n^{3/2-\delta}\right)$.
\end{lemma}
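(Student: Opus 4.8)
The plan is to exhibit an explicit family $X_n$ on $\mathds{C}^n\otimes\mathds{C}^n$ for which the Hilbert--Schmidt norm is easy to control and the $W$-norm $\|\cdot\|_W$ defined in~\eqref{W norm} is large. Recall that $\|X\|_W$ equals $\max\{\Tr XY : |\braket{\alpha\beta|Y|\alpha\beta}|\leq \braket{\alpha|\alpha}\braket{\beta|\beta}\ \forall\,\ket{\alpha},\ket{\beta}\}$, i.e. the dual norm to the injective norm on $\mathcal{H}_n\otimes\mathcal{H}_n$ built from local trace norms; equivalently, by the projective/injective duality~\eqref{dual inj proj}, $\|\cdot\|_W=\|\cdot\|_\pi$ where the projective norm is taken with respect to the local trace norms on $\mathcal{H}_n$. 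So the task reduces to finding an $X_n$ with $\|X_n\|_2=1$ and projective-norm $\Omega(n^{3/2-\delta})$. The natural strategy is to take $X_n$ "generic'': either a Haar-random Hermitian operator (suitably normalised), or a fixed combinatorial object such as a (partial) Hadamard-type tensor. Because $\|X\|_\pi$ is always $\geq\|X\|_\varepsilon\geq$ the Hilbert--Schmidt norm up to a factor, but can be as large as $\min\{n^2,n^2\}\|X\|_\varepsilon=n^2\|X\|_\varepsilon$ by Proposition~\ref{prop proj=<ninj}, there is ample room; the point is that a generic $X$ with $\|X\|_2=1$ has $\|X\|_\varepsilon=\Theta(1/n^{?})$ small and $\|X\|_\pi$ correspondingly large.

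The concrete route I would take: first, reduce $\|X\|_W=\|X\|_\pi$ to the dual statement $\|X\|_W=\sup\{\Tr XY:\|Y\|_\varepsilon\leq 1\}$ where $\|Y\|_\varepsilon=\max_{\ket{\alpha},\ket{\beta}}|\braket{\alpha\beta|Y|\alpha\beta}|$ over unit vectors. Then it suffices to lower bound $\|X\|_W$ by exhibiting, for a good choice of $X_n$, a witness $Y_n$ with small injective norm and large $\Tr X_nY_n$. I would pick $X_n$ so that $Y_n=X_n$ (or $Y_n=X_n/\|X_n\|_2$) works: for such a self-dual choice, $\|X_n\|_W\geq \Tr X_n^2/\|X_n\|_\varepsilon = \|X_n\|_2^2/\|X_n\|_\varepsilon$. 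So everything hinges on producing $X_n$ with $\|X_n\|_2=1$ and $\|X_n\|_\varepsilon = \max_{\ket\alpha,\ket\beta}|\braket{\alpha\beta|X_n|\alpha\beta}|=O(n^{-3/2+\delta})$. This local injective norm is exactly a "block-positivity''-type quantity, and for a random Hermitian $X_n$ with i.i.d. Gaussian entries scaled to $\|X_n\|_2=1$ (so entries of order $1/n$), an $\varepsilon$-net argument over the product states $\ket\alpha\otimes\ket\beta$ (a set of metric entropy $O(n)$) combined with Gaussian concentration gives $\|X_n\|_\varepsilon=O(\sqrt{n}/n^2\cdot\sqrt{n})=O(n^{-1+\text{something}})$; one must chase the exponents carefully, but the expected scaling is $\|X_n\|_\varepsilon=\widetilde O(n^{-3/2})$, yielding $\|X_n\|_W\geq \widetilde\Omega(n^{3/2})$, which is $\Omega(n^{3/2-\delta})$ for every $\delta>0$.

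The key steps in order: (1) restate $\|\cdot\|_W=\|\cdot\|_\pi$ and its dual formula; (2) fix the random ensemble $X_n$ and normalise so $\mathds{E}\|X_n\|_2^2=1$, pass to $\|X_n\|_2=1$ by a standard concentration/renormalisation step; (3) bound $\mathds{E}\|X_n\|_\varepsilon$ from above via a net over product states $S^{2n-1}\times S^{2n-1}$ of cardinality $e^{O(n)}$, Gaussian tail bounds for each fixed $\ket{\alpha\beta}$, and a union bound, being careful that the net error is controlled because $Y\mapsto\braket{\alpha\beta|Y|\alpha\beta}$ is Lipschitz in $(\ket\alpha,\ket\beta)$ with constant $\|Y\|_\varepsilon$ itself (so one gets a self-improving bound, or one nets at scale $1/2$); (4) conclude $\|X_n\|_W\geq \|X_n\|_2^2/\|X_n\|_\varepsilon\geq\Omega(n^{3/2-\delta})$ with high probability, hence such a sequence exists. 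The main obstacle is step (3): getting the \emph{sharp} exponent $3/2$ (as opposed to a weaker power) from the net argument requires tracking the interplay between the $e^{O(n)}$ net size, the Gaussian deviation $t$, and the per-entry scale $1/n$ — the naive bound only gives $\|X_n\|_\varepsilon=O(n^{-1/2})$ unless one exploits that a Gaussian Hermitian matrix of operator norm $O(1/\sqrt n)$ has much smaller diagonal-in-product-basis entries. An alternative that sidesteps the probabilistic delicacy is to use the explicit tensor $X_n = \frac{1}{n^{3/2}}\sum_{ij} \varepsilon_{ij}\, \ket i\!\bra j \otimes \ket i\!\bra j$ built from a Hadamard sign pattern $(\varepsilon_{ij})$, and invoke Lindsey-type / Khintchine estimates exactly as in Example~\ref{cubic} to pin down $\|X_n\|_\varepsilon$ directly; I would present whichever of the two gives the cleanest exponent accounting, and relegate the entropy computation to this appendix rather than the main text.
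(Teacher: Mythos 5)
Your opening reduction is sound and is in fact the same mechanism the paper uses: by \eqref{W norm}, any Hermitian $Y$ with $|\braket{\alpha\beta|Y|\alpha\beta}|\leq 1$ on unit product vectors is a feasible witness, and choosing $Y$ proportional to $X_n$ itself gives $\|X_n\|_W\geq \|X_n\|_2^2\big/\max_{\alpha,\beta}|\braket{\alpha\beta|X_n|\alpha\beta}|$, so everything hinges on producing a unit-Hilbert--Schmidt $X_n$ whose product-state overlaps are uniformly $\widetilde{O}(n^{-3/2})$. The paper gets exactly this from a \emph{random projector}: it takes $X_n=n^{-1/2-\delta}\Pi_n$ with $\Pi_n$ a Haar-random rank-$k$ projector, $k=n^{1+2\delta}$, quotes the result of~\cite{aspects generic entanglement} that $\braket{\alpha\beta|\Pi_n|\alpha\beta}\leq 2k/n^2$ uniformly with high probability, and pairs $X_n$ with the witness $\frac{n^2}{k}\Pi_n-\mathds{1}$ to get $(n^2-k)/\sqrt{k}=\Omega(n^{3/2-\delta})$. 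So your Gaussian branch is the right kind of argument, but as written it is a sketch whose quantitative core is not established and whose bookkeeping is off: the entries of a unit-HS Gaussian on $\mathds{C}^{n^2}$ scale as $n^{-2}$ (not $n^{-1}$), the displayed exponent arithmetic does not parse, and a constant-scale net is genuinely insufficient because the Lipschitz error of $(\alpha,\beta)\mapsto\braket{\alpha\beta|X_n|\alpha\beta}$ is governed by $\|X_n\|_\infty\approx n^{-1}$, which swamps the target $n^{-3/2}$. The fix is standard but must be said: either take a net at scale $\sim n^{-1/2}$ (net size $e^{O(n\log n)}$, pointwise Gaussian deviations $\sim\sqrt{n\log n}\,n^{-2}$, net error $\sim n^{-1/2}\cdot n^{-1}$), or work, as in~\cite{aspects generic entanglement}, with the $1$-Lipschitz function $\ket{\alpha\beta}\mapsto\|\Pi\ket{\alpha\beta}\|$. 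With that step supplied, your route closes the proof (indeed without the $\delta$ loss); without it, the decisive estimate is merely asserted.

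The genuine error is in your proposed "clean" fallback. For $X_n=c\sum_{ij}\varepsilon_{ij}\,\ket{i}\!\bra{j}\otimes\ket{i}\!\bra{j}$ one has $\|X_n\|_2=cn$, so normalisation forces $c=1/n$ (your $c=n^{-3/2}$ gives $\|X_n\|_2=n^{-1/2}$). But then taking $\ket{\alpha}=\ket{\beta}=\ket{1}$ gives $|\braket{\alpha\beta|X_n|\alpha\beta}|=c|\varepsilon_{11}|=1/n$, so $\max_{\alpha,\beta}|\braket{\alpha\beta|X_n|\alpha\beta}|=\Theta(1/n)$ no matter how the signs are chosen; Hadamard/Lindsey cancellations cannot help, because the constraint $\sum_j|\alpha_j\beta_j|\leq 1$ allows the mass to sit on a single matrix entry. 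Hence the self-dual witness yields only $\|X_n\|_W\geq\Omega(n)$ from this ansatz, far short of $n^{3/2-\delta}$, and this construction cannot serve as a substitute for the probabilistic argument. In short: keep the duality-plus-self-dual-witness framing, drop the Hadamard alternative, and either carry out the fine-net Gaussian concentration honestly or, as the paper does, import the uniform product-overlap bound for random subspaces from~\cite{aspects generic entanglement}.
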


\begin{proof}
From the results of~\cite{aspects generic entanglement} it is known that a random orthogonal projector $\Pi$ onto a subspace of $\mathds{C}^n\otimes \mathds{C}^n$ of dimension $k\gg n$ will satisfy
\bq
\braket{\alpha\beta | \Pi |\alpha \beta }\, \leq\, \frac{2k}{n^2} \qquad \forall\ \ket{\alpha}, \ket{\beta}\in \mathds{C}^n:\ \braket{\alpha|\alpha}=\braket{\beta|\beta}=1
\label{random proj}
\eq
with high probability as $n$ tends to infinity. Fixing $k=n^{1+2\delta}$ and picking one such $\Pi_n$ for all $n\in\mathds{N}$, we can construct $X_n\coloneqq n^{-1/2-\delta}\, \Pi_n$. By definition, $\|X_n\|_2=1$ for all $n\in\mathds{N}$. Moreover, defining $Y\coloneqq \mathds{1} - \frac{n^2}{k}\, \Pi_n$, by~\eqref{random proj} we have $-1\leq \braket{\alpha\beta| Y | \alpha\beta}\leq 1$ for all normalised $\ket{\alpha},\ket{\beta}$, i.e. $Y$ belongs to the dual unit ball of the  base norm pertaining to the minimal tensor product $\text{QM}_n\tmin\text{QM}_n$. Hence,
\bqq
\|X_n\|_W\, \geq\, \Tr \left[ X_n Y \right]\, =\, \frac{n^2-k}{\sqrt{k}}\, =\, \Omega\left(n^{3/2-\delta}\right)\, ,
\eqq
as claimed.
\end{proof}

\end{document}